\newtheorem*{Th*}{Theorem}
\newtheorem{proposition}{Proposition}
\newtheorem{corollary}{Corollary}
\newtheorem{theorem}{Theorem}
\newtheorem{lemma}{Lemma}
\theoremstyle{definition}
\newtheorem{remark}{Remark}
\DeclareMathOperator{\arcoth}{arcoth}
\begin{document}

\title{Highly symmetric POVMs and their informational power}

\author{Wojciech S\l omczy\'{n}ski \and  Anna Szymusiak}

\address{Institute of Mathematics, Jagiellonian University, \L ojasiewicza 6,
30-348 Krak\'{o}w, Poland}
\email{Wojciech.Slomczynski@im.uj.edu.pl; Anna.Szymusiak@uj.edu.pl}

\maketitle

\vspace{-0.2cm}

\begin{abstract}
We discuss the dependence of the Shannon entropy of normalized finite
rank-$1$ POVMs on the choice of the input state, looking for the states that
minimize this quantity. To distinguish the class of measurements where the
problem can be solved analytically, we introduce the notion of highly
symmetric POVMs and classify them in dimension two (for qubits). In this case
we prove that the entropy is minimal, and hence the relative entropy (informational power)
is maximal, if and only if the input state is orthogonal to one of the states constituting
a POVM. The method used in the proof, employing the Michel theory of critical
points for group action, the Hermite interpolation and the structure of invariant
polynomials for unitary-antiunitary groups, can also be applied in higher
dimensions and for other entropy-like functions. The links between entropy
minimization and entropic uncertainty relations, the Wehrl entropy and
the quantum dynamical entropy are described. \\

Keywords: POVM, entropy, group action, symmetry, Hermite interpolation

MSC: 81P15, 81R05, 94A17, 94A40, 58D19, 58K05, 52B15

\end{abstract}

\vspace{-0.2cm}

\tableofcontents

\newpage

\section{Introduction}

Uncertainty is an intrinsic property of quantum physics: typically, a
measurement of an observable can yield different results for two identically
prepared states. This indeterminacy can be studied by considering the
probability distribution of measurement outcomes given by the Born rule, and
quantized by a number that characterizes the randomness of this distribution.
The Shannon entropy is the most natural tool for this purpose. Obviously, the
value of this quantity is determined by the choice of the initial state of the
system before the measurement. When the number of possible measurement
outcomes is finite and equals $k$, it varies from $0$, if the measurement
outcome is determined, to $\ln k$, if all outcomes are equiprobable. If the
measured observable is represented by a~normalized rank-$1$ positive-operator valued
measure (POVM) on a $d$-dimensional complex Hilbert space, where $d \leq k$, then the upper
bound is achieved for the maximally mixed state $\mathbb I/d$. On the other hand, the
Shannon entropy of measurement cannot be $0$ unless the POVM is a projection
valued measure (PVM) representing projection (L\"{u}ders-von Neumann)
measurement with $k=d$, since it is bounded from below by $\ln\left(
k/d\right)  $. Thus in the general case the following questions arise: how to
choose the input state to minimize the uncertainty of the measurement
outcomes, and what is the minimum value of the Shannon entropy for the
distribution of measurement results in this case? In the present paper we call
this number the \textsl{entropy of measurement}.

The entropy of measurement has been widely studied by many authors since the
1960s \cite{Weh78}, also in the context of entropic uncertainty principles
\cite{Deu83}, as well as in quantum information theory under the name of
\textsl{minimum output entropy of a quantum\--classical channel} \cite{Sho02}.
Subtracting this quantity from $\ln k$, we get the \textsl{relative entropy of
measurement} (with respect to the uniform distribution), which may vary from
$0$ to $\ln d$. In consequence, the optimization problem now reduces to
finding its maximum value. Either way, we are looking for the `least quantum'
or `most classical' states in the sense that the measurement of the system
prepared in such a state gives the most defined results. The answer is immediate
for a PVM, consisting of projections onto the elements of an orthonormal basis,
that are at the same time `most classical' with respect to this measurement.
Such an obvious solution is not available for general POVM.
Because of concavity of the entropy of measurement as a function of state,
we only know that the optimal states must be pure.

Like many other optimization problems where the Shannon function $\eta\left(
x\right)  =-x\ln x, x>0$ is involved, the minimization of the entropy of
measurement seems to be too difficult to be solved analytically in the general
case. In fact, analytical solutions have been found so far only for a few
two-dimensional (qubit) cases, where the Bloch vectors of POVM elements
constitute an $n$-gon \cite{Sch89,Ghietal03,Arnetal11}, a~tetrahedron
\cite{Oreetal11} or an octahedron \cite{San95,Coletal11}. All these POVMs are
symmetric (group covariant), but, as we shall see, symmetry alone is not
enough to solve the problem analytically. However, for symmetric rank-$1$
POVMs the relative entropy of measurement gains an additional interpretation.
It follows from \cite{Oreetal11}, that it is equal to the
\textsl{informational power of measurement} \cite{Arnetal11,Arnetal14a}, viz.,
the \textsl{classical capacity of a quantum-classical channel} generated by
the POVM \cite{Hol12b}. To distinguish the class of measurements for which the
entropy minimization problem is feasible, we define \textsl{highly symmetric}
(\textsl{HS}) normalized rank-$1$ POVMs as the symmetric subsets of the state space
without non-trivial factors. The primary aim of
this paper is to present a general method of attacking the minimization problem for
such POVMs and to illustrate it, entirely solving the issue in the two-dimensional case.

Note that our method is not confined to qubits, and it works also in higher
dimensions, at least for some important cases, though it is true, that for dimension three
or larger it seems more difficult to be applied, mainly because the image of the Bloch representation
of pure states is only a proper subset of the (generalized) Bloch sphere. However, one of us
(A.S.) has published recently a paper \cite{Szy14}, where the technique developed in an earlier
version of the present paper has been used to find the minimum of the entropy for group
covariant SIC-POVMs in dimension three, including the Hesse HS SIC-POVM. The same method
works also for a POVM consisting of four MUBs, again in dimension three (this result was
first obtained by a different method in \cite{Arn14}),
as well as for the $64$ Hoggar lines HS SIC-POVM in dimension eight \cite{SloSzy15}.
After some additional work, one can prove that the technique developed in this paper for searching
the minima, can be also used to find the maximum entropy of the distribution in question
among pure pre-measurement states for an arbitrary SIC-POVM in any dimension, as well as the maximum entropy
for pure initial states for all HS-POVMs in dimension two \cite{Szy15}. Summarizing, this seems to
be a quite universal technique of finding extrema, limited neither to qubits, nor to the Shannon entropy,
as it can be applied to various `entropy-like' quantities obtained with the help of other functions
with similar properties as $\eta$, such as power functions leading to the R\'{e}nyi entropy or its
variant, the Tsallis-Havrda-Charv\'{a}t entropy \cite{Csi08}, and even to more general `information
functionals' considered in the same context in \cite{Busetal87}.

Going back to the dimension two, we first classify all HS-POVMs, proving that their Bloch
sphere representations must be either one of the five Platonic solids or the
two quasiregular Archimedean solids (the cuboctahedron and icosidodecahedron),
or belong to an infinite series of regular polygons. For such POVMs we show
that their \textit{entropy is minimal (and so the relative entropy is
maximal), if and only if the input state is orthogonal to one of the states constituting
a POVM}. We present a unified proof of this fact for all eight cases, and for
five of them (the cube, icosahedron, dodecahedron, cuboctahedron and
icosidodecahedron) the result seems to be new. Let us emphasize that commonly used
methods of minimizing entropy, e.g. based on majorization cannot be applied
in all these cases.

The proof strategy is as follows. We consider a set $S=\left\{  \sigma
_{j}:j=1,\ldots,k\right\}  $ contained in the space of pure states
(one-dimensional projections) $\mathcal{P}\left(  \mathbb{C}^{d}\right)  $
representing a normalized rank-$1$ HS-POVM. The entropy of measurement $H$ is
given by $H\left(  \rho\right)  =\sum\nolimits_{j=1}^{k}\eta\left(
p_{j}\left(  \rho\right)  \right)  $, where the probabilities of the
measurement outcomes are $p_{j}\left(  \rho\right)  =(d/k)\operatorname{tr}%
\left(  \sigma_{j}\rho\right)  $ for $\rho\in\mathcal{P}\left(  \mathbb{C}%
^{d}\right)  $ and $j=1,\ldots,k$. We start from analysing the group action of
$\operatorname*{Sym}(S)$, the group of unitary-antiunitary symmetries of $S$,
on $\mathcal{P}\left(  \mathbb{C}^{d}\right)  $. We identify points lying in
the maximal stratum for this action, called inert states in physical
literature. As the POVM is highly symmetric, this set contains $S$ itself.
According to the Michel theory of critical orbits of group actions
\cite{Mic71,MicZhi01}, the elements of the maximal stratum, being critical
points for the entropy of measurement $H$, which is a $\operatorname*{Sym}%
(S)$-invariant function, are natural candidates for the minimizers. Studying
their character, we see that $H$ has local minima at the inert states
$\sigma_{j}^{\perp}$ ($j=1,\ldots,k$) orthogonal to the elements of $S$. To
prove that these minima are indeed global we look for a simpler (polynomial)
$\operatorname*{Sym}(S)$-invariant function $P$ such that: $P\leq H$, and
$P=H$ at $\sigma_{j}^{\perp}$ ($j=1,\ldots,k$). To construct such a polynomial
function we define it as $P\left(  \rho\right)  =\sum\nolimits_{j=1}
^{k}p\left(  p_{j}\left(  \rho\right)  \right)  $, for a polynomial $p$ being
a suitable Hermite approximation of $\eta$ at values $p_{j}\left(  \sigma
_{i}^{\perp}\right)  $ ($j=1,\ldots,k$) for some, and hence for all,
$i=1,\ldots,k$. Now it is enough to prove that these `suspicious' points are
global minimizers for $P$, which is an apparently easier task. Proving that
$P$ has minimizers at $\sigma_{j}^{\perp}$ ($j=1,\ldots,k$), we use the fact
that the structure of invariant polynomials for any finite subgroup of the
projective unitary-antiunitary group is well known for $d=2$. Employing a
priori estimates for the degree of $p$, and hence for the degree of $P$, we
can show that $P$ is either constant, which completes the proof, or it is a
low degree polynomial function of known $\operatorname*{Sym}(S)$-invariant
polynomials, which reduces the proof to a relatively easy algebraic problem.
The following two points seem to be crucial to the proof: the form of the
function $\eta$ that guarantees that the Hermite interpolation polynomial $p$
bounds $\eta$ from below, and the knowledge of the subgroups of
unitary-antiunitary group that may act as symmetry groups of the sets
representing HS-POVMs as well as their invariant polynomials.

The problem considered in the present paper has a well-known continuous
counterpart: the minimization of the Wehrl entropy over all pure states, see
Sec.~\ref{Weh}, where the (approximate) quantum measurement is described by an
infinite family of group coherent states generated by a unitary and
irreducible action of a linear group on a highly symmetric fiducial vector
representing the vacuum. More than thirty years ago Lieb \cite{Lie78}, and
quite recently Lieb \& Solovej \cite{LieSol12} proved for harmonic oscillator
and spin coherent states, respectively, that the minimum value of the Wehrl
entropy is attained, when the state before the measurement is also a coherent
state. Surprisingly, an analogous theorem need not be true in the discrete
case, since the entropy of measurement need not be minimal for the states
constituting the POVM. This discrepancy requires further study.

In Sec.~\ref{Eup}\ we show that the minimization of the entropy of measurement
is also closely related to entropic uncertainty principles \cite{WehWin10}.
Indeed, every such principle leads to a lower bound for the entropy of some
measurement, and conversely, such bounds may yield new uncertainty principles
for single or multiple measurements. Moreover, in Sec.~\ref{QDE} we reveal the
connection between the entropy of measurement and the quantum dynamical
entropy with respect to this measurement \cite{SloZyc94}, the quantity
introduced independently by different authors to analyse the results of
consecutive quantum measurements interwind with a given unitary evolution.

The rest of this paper is organized as follows. In Sec.~\ref{QSP} we review
some of the standard material on quantum states and measurements including the
generalized Bloch representation. In Sec.~\ref{SRHS} we analyse the general
notion of highly symmetric sets in metric spaces, and in Sec.~\ref{HSPOVM} we
apply this universal notion to normalized rank-$1$ POVMs. Sec.~\ref{HSPOVM2}
contains the classification of all HS-POVMs in dimension two.
Sec.~\ref{ENTRELENT} provides a detailed exposition of entropy and relative
entropy of quantum measurement as well as their relations to the notions of
informational power and Wehrl entropy, and their connections with entropic
uncertainty principles and quantum symbolic dynamics. In Sec.~\ref{Local} we
study local minima for the entropy of measurement in dimension two, and in
Sec.~\ref{Gloext} we use Hermite interpolation and group invariant polynomials
techniques to derive our main theorem and to find the global minima in this
case. Finally, in Sec.~\ref{INFPOW} we apply the obtained results to give a
formula for the informational power of HS-POVMs in dimension two.

\section{Quantum states and POVMs}
\label{QSP}

In this section we collect all the necessary definitions and facts about
quantum states and measurements that can be found, e.g., in \cite{BenZyc06} or
\cite{HeiZem11}. Consider a quantum system for which the associated complex
Hilbert space $\mathcal{H}$ is finite dimensional, that is $\mathcal{H}%
=\mathbb{C}^{d}$ for some $d=2,3,\ldots$. The \textsl{pure states} of the
system can be described as the elements of the complex projective space
$\mathbb{P}\mathcal{H}=\mathbb{CP}^{d-1}$ endowed with the
\textsl{Fubini-Study} (called also \textsl{procrustean} after Procrustes)
K\"{a}hler metric given by $D_{FS}\left(  \left[  \varphi\right]  ,\left[
\psi\right]  \right)  :=\arccos\frac{\left|  \left\langle \varphi
|\psi\right\rangle \right|  }{\left\|  \varphi\right\|  \left\|  \psi\right\|
}$ for $\varphi,\psi\in\mathcal{H}$ \cite{BenZyc06,Fre12}. In this metric
there is only one geodesic between two pure states unless they are maximally
remote \cite[Theorem 1]{Ken84}. We can also identify $\mathbb{P}\mathcal{H}$
with the set $\mathcal{P}\left(  \mathcal{H}\right)  $ of one-dimensional
projections in $\mathcal{H}$ by sending $\left[  \varphi\right]  \rightarrow
P_{\varphi}:=\left|  \varphi\right\rangle \left\langle \varphi\right|
/\left\langle \varphi|\varphi\right\rangle $, where $\left|  \varphi
\right\rangle \left\langle \varphi\right|  $ denotes the orthogonal projection
operator onto the subspace generated by $\varphi\in\mathcal{H}$ (Dirac
notation). The transferred metric on $\mathcal{P}\left(  \mathcal{H}\right)
$, also called the \textsl{Fubini-Study metric}, is given by $D_{FS}\left(
\rho,\sigma\right)  :=\arccos\sqrt{\operatorname{tr}\left(  \rho\sigma\right)
}$ for $\rho,\sigma\in\mathcal{P}\left(  \mathcal{H}\right)  $. By
$\mathcal{S}\left(  \mathcal{H}\right)  $ we denote the convex closure of
$\mathcal{P}\left(  \mathcal{H}\right)  $, that is the set of density
(positive semi-definite, and trace one) operators on $\mathcal{H}$,
interpreted as $\textsl{mixed}$ $\textsl{states}$ of the system. Note that
$\dim_{\mathbb{R}}\mathcal{P}\left(  \mathcal{H}\right)  =2d-2$ and
$\dim_{\mathbb{R}}\mathcal{S}\left(  \mathcal{H}\right)  =d^{2}-1$. By
$m_{FS}$ we denote the unique unitarily invariant measure on $\mathbb{CP}^{d-1}$
or, equivalently, on $\mathcal{P}\left(  \mathbb{C}^{d}\right)  $.

The mixed states can be also described as elements of a ($d^{2}-1$)-dimensional
real Hilbert space (in fact, a Lie algebra) $\mathcal L_s^0(\mathcal H)$ of
Hermitian traceless operators on $\mathcal{H}$,
endowed with the Hilbert-Schmidt product given by $\left\langle \left\langle
\sigma,\tau\right\rangle \right\rangle _{HS}:=\operatorname{tr}\left(
\sigma\tau\right)  $ for $\sigma,\tau \in \mathcal L_s^0(\mathcal H)$.
Namely, the map defined by $b:\mathcal{S}\left(  \mathcal{H}\right)  \ni
\rho\rightarrow\rho-I/d \in \mathcal L_s^0(\mathcal H)$ gives us an affine
embedding (the \textsl{generalized Bloch representation}) of the set of mixed
(resp. pure) states into the ball (resp. sphere) in $\mathcal L_s^0(\mathcal H)$
of radius $\sqrt{1-d^{-1}}$, called the \textsl{generalized Bloch
ball} (resp. the \textsl{Bloch sphere}). Note, that the map $A\mapsto iA$ allows
us to identify $\mathcal L_s^0(\mathcal H)$ with  $\mathfrak{su}(d)$, the Lie
algebra of $\textrm{SU}(d)$, consisting of traceless skew-adjoint operators.
Only for $d=2$ the map is onto, and
for $d>2$ its image (the \textsl{Bloch vectors}) constitute a `thick' though
proper subset of ($d^{2}-1$)-dimensional ball, containing the (maximal) ball
of radius $1/\sqrt{d\left(  d-1\right)  }$ centered at $0$. On the other hand,
for $d>2$, $B\left(  d\right)  :=b\left(  \mathcal{P}\left(  \mathcal{H}%
\right)  \right)  $, the image of the space of pure states via $b$,
constitutes a `thin' ($2d-2$)-dimensional submanifold of the ($d^{2}%
-2$)-sphere. The metric spaces $\left(  \mathbb{P}\mathcal{H},D_{FS}\right)  $
and $\left(  B\left(  d\right)  ,D_{B}\right)  $, where $D_{B}$ is the great
arc distance on the Bloch sphere, though non-isometric for $d>2$, nevertheless
are ordinally equivalent, as the distances $D_{FS}$ and $D_{B}$ are related by
the formula $D_{B}\left(  b\left(  \left|  \varphi\right\rangle \left\langle
\varphi\right|  \right)  ,b\left(  \left|  \psi\right\rangle \left\langle
\psi\right|  \right)  \right)  =\gamma\left(  D_{FS}\left(  \left[
\varphi\right]  ,\left[  \psi\right]  \right)  \right)  $ ($\varphi,\psi
\in\mathcal{H}$), where a convex function $\gamma:\left[  0,\pi/2\right]
\rightarrow\mathbb{R}^{+}$ is given by $\gamma\left(  x\right)  =\sqrt
{1-d^{-1}}\arccos\frac{d\cos^{2}x-1}{d-1}$ for $0\leq x\leq\pi/2$. In other
words, scalar products of state vectors in $\mathbb{C}^{d}$ and their images
in $\mathbb{R}^{d^{2}-1}$ fulfill the relation: $\left|  \left\langle
\varphi|\psi\right\rangle \right|  ^{2}=\left\langle \left\langle b\left(
\left|  \varphi\right\rangle \left\langle \varphi\right|  \right)  ,b\left(
\left|  \psi\right\rangle \left\langle \psi\right|  \right)  \right\rangle
\right\rangle _{HS}+1/d$.

With a measurement of the system with a finite number $k$ of possible outcomes
one can associate a \textsl{positive operator valued measure} (\textsl{POVM})
defining the probabilities of the outcomes. A finite POVM is an ensemble of
positive semi-definite non-zero operators $\Pi_{j}$ ($j=1,\ldots,k$) on
$\mathcal{H}$ that sum to the identity operator, i.e. $\sum\nolimits_{j=1}%
^{k}\Pi_{j}=\mathbb{I}$. If the state of the system before the measurement
(the \textsl{input state}) is $\rho$, then the probability $p_{j}\left(
\rho\right)  $ of the $j$-th outcome is given by the Born rule, $p_{j}\left(
\rho\right)  =\operatorname{tr}\left(  \rho\Pi_{j}\right)  $. In general
situation, there is an infinite number of completely positive maps
(\textit{measurement instruments} in the sense of Davies and Lewis \cite{DavLew70})
describing conditional state changes due to the measurement and producing the
same measurement statistics, see \cite[Ch.\ 5]{HeiZem11}.
Among them, the \textit{efficient instruments} \cite{FucJac01} have particulary simple
form: they are given by the solutions of the set of
equations $\Pi_{j}=A_{j}^{\ast}A_{j}$ ($j=1,\ldots,k$), where $A_{j}$ are
bounded operators on $\mathcal{H}$. If $\rho$ is the input state and the
measurement outcome is $j$, then the state of the system after the measurement
is $\rho_{j}^{post}=A_{j}\rho A_{j}^{\ast}/p_{j}\left(  \rho\right)  $.
If, additionally, $A_{j}=\sqrt{\Pi_{j}}$ we get so called
\textit{generalised L\"{u}ders instrument} disturbing the initial
state in the minimal way \cite[p.~404]{DecGra07}.

A special class of POVMs are \textsl{normalized} \textsl{rank-}$\textsl{1}$
POVMs, where $\Pi_{j}$ ($j=1,\ldots,k$) are rank-$1$ operators and
$\operatorname{tr}\left(  \Pi_{j}\right)  =\operatorname{const}(j)=d/k$.
Necessarily, $k\geq d$ in this case, and there exists an ensemble of pure
states $\sigma_{j}\in\mathcal{P}\left(  \mathcal{H}\right)  $ ($j=1,\ldots,k$)
such that $\Pi_{j}=\left(  d/k\right)  \sigma_{j}$. Thus, $\sum\nolimits_{j=1}%
^{k}\sigma_{j}=\left(  k/d\right)  \mathbb{I}$, and so a normalized rank-$1$
POVM can be also defined as a (multi-)set of points in $\mathcal{P}\left(
\mathcal{H}\right)  $ that constitutes a \textsl{uniform }(or
\textsl{normalized}) \textsl{tight frame} in $\mathcal{P}\left(
\mathcal{H}\right)  $ \cite{EldFor02,BenFic03,Casetal13}, that is an ensemble
that fulfills $\sum\nolimits_{j=1}^{k}\operatorname{tr}\left(  \sigma_{j}%
\rho\right)  =k/d$ for every $\rho\in\mathcal{P}\left(  \mathcal{H}%
\right)  $. In this case we shall say that $\sigma_{j}$ ($j=1,\ldots,k$)
constitute a POVM. Equivalently, we can define normalized rank-$1$ POVMs as
complex projective $1$-designs, where by a \textsl{complex projective }$t$%
\textsl{-design} ($t\in\mathbb{N}$) we mean an ensemble $\left\{  \sigma
_{j}:j=1,\ldots,k\right\}  $ such that
\begin{equation}
\frac{1}{k^{2}}\sum_{j,m=1}^{k}f\left(  \operatorname{tr}\left(  \sigma_{j}%
\sigma_{m}\right)  \right)  =\int_{\mathcal{P}\left(  \mathbb{C}^{d}\right)
}\int_{\mathcal{P}\left(  \mathbb{C}^{d}\right)  }f\left(  \operatorname{tr}%
\left(  \rho\sigma\right)  \right)  dm_{FS}\left(  \rho\right)  dm_{FS}\left(
\sigma\right)
\end{equation}
for every $f:\mathbb{R\rightarrow R}$ polynomial of degree $t$ or less
\cite{Sco06}. The equality $\sum\nolimits_{j=1}^{k}\sigma_{j}=\left(
k/d\right)  \mathbb{I}$ is in turn equivalent to $\sum\nolimits_{j=1}%
^{k}b\left(  \sigma_{j}\right)  =0$, which gives the following simple
characterization of normalized rank-$1$ POVMs in the language of Bloch vectors:

\begin{proposition}
The generalized Bloch representation gives a one-to-one correspondence between
finite normalized rank-$1$ POVMs and finite (multi-)sets of points in
$B\left(  d\right)  $ with its center of mass at $0$.
\end{proposition}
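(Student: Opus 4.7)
The plan is to show both directions of the correspondence by a direct computation using the definition of the Bloch map $b(\rho) = \rho - \mathbb{I}/d$, exploiting that this is an affine injection on the whole state space $\mathcal{S}(\mathcal{H})$ and restricts to a bijection between $\mathcal{P}(\mathcal{H})$ and $B(d)$ by definition of $B(d)$.

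For the forward direction, I would start from a finite normalized rank-$1$ POVM, namely an ensemble $\{\sigma_j : j=1,\ldots,k\} \subset \mathcal{P}(\mathcal{H})$ satisfying $\sum_{j=1}^{k} \sigma_j = (k/d)\mathbb{I}$, as recalled just above the statement. Since each $\sigma_j$ is a one-dimensional projection, $b(\sigma_j) \in B(d)$ by definition. Applying $b$ termwise to the POVM identity gives
\begin{equation*}
\sum_{j=1}^{k} b(\sigma_j) \;=\; \sum_{j=1}^{k} \sigma_j \;-\; \frac{k}{d}\,\mathbb{I} \;=\; 0,
\end{equation*}
so the multi-set of Bloch vectors lies in $B(d)$ and has center of mass at $0$.

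For the reverse direction, I would take a finite multi-set $\{v_j : j=1,\ldots,k\} \subset B(d)$ with $\sum_{j=1}^{k} v_j = 0$. By the definition of $B(d)$ as the image of $\mathcal{P}(\mathcal{H})$ under $b$, each $v_j$ equals $b(\sigma_j)$ for a unique $\sigma_j \in \mathcal{P}(\mathcal{H})$, namely $\sigma_j = v_j + \mathbb{I}/d$. Summing yields
\begin{equation*}
\sum_{j=1}^{k} \sigma_j \;=\; \sum_{j=1}^{k} v_j \;+\; \frac{k}{d}\,\mathbb{I} \;=\; \frac{k}{d}\,\mathbb{I},
\end{equation*}
which is precisely the defining condition of a normalized rank-$1$ POVM recalled above, so $\{\sigma_j\}$ is such a POVM.

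Finally, I would note that the assignments $\{\sigma_j\} \mapsto \{b(\sigma_j)\}$ and $\{v_j\} \mapsto \{v_j + \mathbb{I}/d\}$ are mutually inverse at the level of multi-sets because $b$ is an affine bijection from $\mathcal{S}(\mathcal{H})$ onto its image, hence a bijection between $\mathcal{P}(\mathcal{H})$ and $B(d)$. There is no substantial obstacle here: once the POVM condition is rewritten in terms of Bloch vectors using the affine shift $b$, both implications reduce to the single identity $\sum_j b(\sigma_j) = \sum_j \sigma_j - (k/d)\mathbb{I}$, so the statement is essentially a reformulation of the normalization condition.
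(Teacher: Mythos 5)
Your proof is correct and follows exactly the paper's (essentially one-line) argument: the paper simply observes, immediately before the proposition, that $\sum_{j=1}^{k}\sigma_{j}=(k/d)\mathbb{I}$ is equivalent to $\sum_{j=1}^{k}b(\sigma_{j})=0$ under the affine shift $b(\rho)=\rho-\mathbb{I}/d$, which is precisely the computation you carry out in both directions. Nothing is missing.
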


The probabilities of the measurement outcomes in the generalized Bloch
representation take the form
\begin{equation}
\label{probabilities}
p_{j}\left(\rho\right)
=(d/k)\operatorname{tr}\left(  \sigma_{j}\rho\right)  =(d\cdot\left\langle
\left\langle b\left(  \sigma_{j}\right)  ,b\left(  \rho\right)  \right\rangle
\right\rangle _{HS}+1)/k
\end{equation}
for $\rho\in\mathcal{P}\left(  \mathbb{C}%
^{d}\right)  $ and $j=1,\ldots,k$. Obviously, the probability of obtaining
$j$-th outcome vary from 0, when the initial state is orthogonal to $\sigma_j$,
to $d/k$\label{bound}, when it coincides with $\sigma_j$. In consequence, any
outcome cannot be certain for given input state unless the measurement is
projective (in which case $k=d$).

\section{Symmetric, resolving and highly symmetric sets in metric
spaces}
\label{SRHS}

In this section we present a framework to investigate the concept of symmetry
in metric spaces. Let us start from general definition. Let $S$ be a subset of
a homogeneous metric space $\left(  X,r\right)  $, i.e.\ the group of all isometries (surjective maps
preserving metric $r$) acts transitively on $X$, that is for every $x,y\in X$ there exists an isometry
$f:X\to X$ such that $f(x)=y$. By
$\operatorname{Sym}\left(  S\right)$ we denote the group of symmetries of $S$,
that is, the group of all isometries leaving $S$ invariant. We call $S$ \textsl{symmetric}
if $\operatorname{Sym}\left(  S\right)$ acts transitively on $S$.

We say that $S$ is a \textsl{resolving set} \cite{DezDez13} if and only if $r\left( a,x\right)
= r\left(  b,x\right)$ for every $x\in S$ implies $a=b$, for $a,b\in X$.
The following proposition belongs to folklore:

\begin{proposition}
\label{folklore}If $S$ is resolving, then $f|_{S}=g|_{S}$
implies $f=g$ for every $f,g\in\operatorname{Sym}\left(  S\right)$.
Moreover, if $S$ is finite, then $\operatorname{Sym}\left(  S\right)$ is finite.
\end{proposition}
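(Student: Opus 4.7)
The plan is to reduce the first assertion to the statement that the only isometry of $X$ that fixes $S$ pointwise is the identity, and then read off the second assertion by a simple counting argument.

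For the first part, given $f,g\in\operatorname{Sym}(S)$ with $f|_S=g|_S$, I would set $h:=g^{-1}\circ f$, which is again an isometry of $X$ belonging to $\operatorname{Sym}(S)$, and satisfies $h|_S=\operatorname{id}_S$. Pick any $a\in X$ and let $b:=h(a)$. For every $x\in S$, using that $h$ is an isometry and $h(x)=x$, I get $r(b,x)=r(h(a),h(x))=r(a,x)$. The resolving property of $S$ then forces $a=b$, so $h=\operatorname{id}_X$ and $f=g$.

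For the second part, if $S$ is finite, every $f\in\operatorname{Sym}(S)$ restricts to a permutation of $S$, giving a map $\operatorname{Sym}(S)\to\operatorname{Sym}_{\mathrm{set}}(S)$ into the finite permutation group of $S$. The first part says exactly that this restriction map is injective, so $|\operatorname{Sym}(S)|\le |S|!<\infty$.

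I do not foresee a real obstacle: the only thing to be careful about is that $h=g^{-1}\circ f$ is a well-defined isometry of the whole space $X$ (this uses that elements of $\operatorname{Sym}(S)$ are genuine isometries of $X$ and form a group under composition), and that the resolving property is being applied correctly, namely that equal distances to \emph{all} points of $S$ already identify $a$ and $h(a)$ in $X$, not merely in $S$. Both of these are immediate from the definitions given in the paper.
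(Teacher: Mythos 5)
Your proof is correct and follows essentially the same route as the paper's: both reduce to showing that equal restrictions force equal distances from $fa$ and $ga$ (equivalently from $h(a)$ and $a$) to every point of $S$, invoke the resolving property, and then embed $\operatorname{Sym}(S)$ into the permutation group of $S$ for finiteness. Passing to $h=g^{-1}\circ f$ is only a cosmetic reformulation of the paper's direct comparison $r(fa,x)=r(a,f^{-1}x)=r(a,g^{-1}x)=r(ga,x)$.
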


\begin{proof}
Let $f,g\in\operatorname{Sym}\left(  S\right)  $, $f|_{S}=g|_{S}$, and $a\in
X$. Then, for every $x\in S$ we have $r(fa,x)=r\left(  a,f^{-1}x\right)
=r\left(  a,g^{-1}x\right)  =r\left(  ga,x\right)  $. Hence $fa=ga$. Now, if
$\left|  S\right|  =k$, then $\operatorname{Sym}\left(  S\right)  $ is a
subgroup of the symmetric group $S_{k}$, and so is finite.
\end{proof}

To single out sets of higher symmetry we have to recall some notions from the
general theory of group action, see e.g. \cite{Fie07}. Let $G$ be a group
acting on $X$. For $x\in X$ we define its \textsl{orbit} as $Gx:=\left\{
gx:g\in G\right\}  $ and its \textsl{stabilizer} (or \textsl{isotropy
subgroup}) $G_{x}$ as the set of elements in $G$ that fix $x$, i.e.
$G_{x}:=\left\{  g\in G:gx=x\right\}  $. Obviously, two points lying on the
same orbit have conjugate stabilizers, since $G_{gx}=gG_{x}g^{-1}$ for $x\in
X$ and $g\in G$. The points of $X$ with the same stabilizers up to a conjugacy
are said to be of the same \textsl{isotropy type}, which is a measure of
symmetry of points (orbits). The points of the same isotropy type as $x$ form
the \textsl{orbit stratum }$\Sigma_{x}$. The decomposition of $X$ into orbit
strata is called the \textsl{orbit stratification}. Clearly, it induces a
stratification of the orbit space $X/G$. The natural partial order on the set
of all conjugacy classes of subgroups of $G$ induces the order on the set of
strata, namely, $\Sigma_{x}\prec\Sigma_{y}$ if and only if there exists $g\in
G$ such that $G_{x}\subset gG_{y}g^{-1}$ for $x,y\in X$, so that the maximal
strata consist of points with maximal stabilizers.

Assume now that a non-empty finite set $S\subset X$ is symmetric and consider
the action of the group $\operatorname{Sym}\left( S\right)$ on $X$. Clearly,
the whole set $S$ is contained in one orbit and hence in one stratum. We shall
say that $S$ is \textsl{highly symmetric} if and only if this stratum is maximal.
The following proposition gives a simple sufficient condition for the high symmetry.

\begin{proposition}
\label{prisym}
If $\operatorname*{Sym}(S)$ acts primitively on $S$ (i.e. the only
$\operatorname*{Sym}(S)$-invariant partitions of $S$ are trivial)
and the set of its common fixed points in $X$ is empty, then $S$ is highly symmetric.
\end{proposition}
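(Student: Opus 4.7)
The plan is to translate the primitivity hypothesis into a maximality statement about the point stabilizer, and then to use the no-common-fixed-points hypothesis to rule out any strictly larger stratum above $\Sigma_x$ for $x \in S$. Set $G := \operatorname{Sym}(S)$. Since $S$ is symmetric, $G$ acts transitively on $S$; fixing any $x \in S$ and setting $H := G_x$, the $G$-set $S$ is $G$-equivariantly isomorphic to the coset space $G/H$. A classical fact about transitive group actions is that primitivity of the action on $G/H$ is equivalent to $H$ being a maximal subgroup of $G$. The primitivity hypothesis therefore gives us that $H$ is maximal in $G$.

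Next I would argue by contradiction. Suppose $S$ is not highly symmetric, i.e., $\Sigma_x$ is not maximal in the stratification. Then there exists $y \in X$ with $\Sigma_x \prec \Sigma_y$ strictly, which by definition means $G_x \subsetneq g G_y g^{-1}$ for some $g \in G$. Setting $z := gy$ and using the standard identity $G_{gy} = g G_y g^{-1}$, we obtain $H = G_x \subsetneq G_z \leq G$. Because $H$ is maximal in $G$, the only subgroup strictly containing $H$ is $G$ itself, forcing $G_z = G$. But then every element of $G = \operatorname{Sym}(S)$ fixes $z$, contradicting the assumption that the set of common fixed points of $\operatorname{Sym}(S)$ in $X$ is empty.

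This contradiction shows that $\Sigma_x$ is a maximal stratum; since $S \subset Gx \subset \Sigma_x$, the set $S$ lies in a maximal stratum and so is highly symmetric by definition. The main (and essentially only) non-mechanical step is the appeal to the classical equivalence between primitivity of a transitive action and maximality of the point stabilizer; after that, the argument is a direct manipulation of stabilizers together with the conjugation formula $G_{gy} = g G_y g^{-1}$, and the no-common-fixed-point hypothesis is precisely what converts the forbidden case $G_z = G$ into a contradiction.
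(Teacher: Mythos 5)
Your proof is correct and follows essentially the same route as the paper's: both translate primitivity into maximality of the point stabilizer $G_x$ (the paper cites \cite[Corollary 8.14]{Isa08} for this), then argue that a strictly larger stabilizer $G_z \supsetneq G_x$ would have to equal all of $G$, contradicting the absence of common fixed points. Your write-up is just slightly more explicit about unwinding the conjugation in the definition of the stratum order.
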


\begin{proof}
Put $G:=\operatorname{Sym}\left( S\right)$. As primitive action of $G$ on $S$ must be transitive,
so $S$ is symmetric. Assume that $S$ is not highly symmetric. Then there
exist $x \in S$ and $y\in X \setminus S$ such that $G_{x}\subsetneq G_{y}$. It follows from
the primitivity of $G$ that $G_{x}$ is its maximal subgroup \cite[Corollary 8.14]{Isa08}.
Hence $G_{y}=G$, a contradiction.
\end{proof}

If $\operatorname{Sym}\left(  S\right)$ acts \textsl{doubly transitively} on $S$
\cite[p.~225]{Isa08}, i.e., if for every
$x_{1},x_{2},y_{1},y_{2}\in S$, $x_{1}\neq x_{2}$ and $y_{1}\neq y_{2}$ there
is $g\in\operatorname*{Sym}(S)$ such that $g\left(  x_{i}\right) = y_{i}$ for
$i=1,2$, we shall call such a set \textsl{super-symmetric}
after \cite{Zhu14,Zhu15b}. It is well-known that doubly transitive group action
is primitive \cite[Lemma 8.16]{Isa08}. Hence we get

\begin{corollary}
\label{supersym}
If $S\subset X$ is super-symmetric and the set of common fixed points of
$\operatorname*{Sym}(S)$ is empty, then $S$ is highly symmetric.
\end{corollary}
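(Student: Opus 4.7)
The plan is essentially to invoke Proposition \ref{prisym}: once I can show that the action of $\operatorname{Sym}(S)$ on $S$ is primitive, the corollary follows immediately from the hypothesis that the common fixed point set is empty. So the whole task reduces to deducing primitivity from super-symmetry.

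First I would recall the standard group-theoretic fact already cited in the paragraph preceding the statement, namely that any doubly transitive action is primitive (this is Lemma 8.16 in Isaacs). Super-symmetry of $S$ is literally the definition of $\operatorname{Sym}(S)$ acting doubly transitively on $S$, so this gives primitivity of the action on $S$ for free.

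With primitivity in hand, both hypotheses of Proposition \ref{prisym} hold: $\operatorname{Sym}(S)$ acts primitively on $S$, and by assumption the set of its common fixed points in $X$ is empty. Applying Proposition \ref{prisym} then yields that $S$ is highly symmetric. There is no real obstacle here; the corollary is a one-line consequence of the proposition combined with the standard primitivity-from-double-transitivity fact, and the only point worth emphasizing in the proof is the explicit citation of that general group-theoretic lemma.
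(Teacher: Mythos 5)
Your proposal is correct and coincides with the paper's own argument: the corollary is derived by noting that double transitivity implies primitivity (Isaacs, Lemma 8.16) and then applying Proposition \ref{prisym}. Nothing is missing.
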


Let $S\subset X$ be symmetric. We say that $\kappa:S\rightarrow X$ is
$\operatorname{Sym}\left(  S\right)  $-\textsl{equivariant} if and only if
$g\kappa(x)=\kappa(gx)$ for all $g\in\operatorname{Sym}\left(  S\right)  $ and for some
(and hence for all) $x\in S$. For such $\kappa$ we call $\kappa\left(  S\right)  $ a
\textsl{factor of }$S$. Note that in this case $\operatorname{Sym}\left(  S\right)
\subset\operatorname{Sym}\left(  \kappa\left(  S\right)  \right)  $ and
$G_{x}\subset G_{\kappa\left(  x\right)  }$ for every $x\in S$.
A symmetric set is highly symmetric if and only if it does not have
a non-trivial factor:

\begin{proposition}
\label{higsymequ}Let $S\subset X$ be symmetric. Then $S$ is highly symmetric
if and only if every $\operatorname{Sym}\left(  S\right)  $-equivariant map
$\kappa:S\rightarrow X$ is one-to-one.
\end{proposition}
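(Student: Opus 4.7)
The plan is to parametrize every $\operatorname{Sym}(S)$-equivariant map $\kappa: S \to X$ by its value at a single base point $x_0 \in S$, and to read both implications off the stabilizer inclusion this value is forced to satisfy. Write $G := \operatorname{Sym}(S)$; since $G$ acts transitively on $S$, a $G$-equivariant $\kappa$ is determined by $\kappa(x_0)$ via $\kappa(hx_0) = h\kappa(x_0)$, and the prescription is well-defined precisely when $G_{x_0} \subset G_{\kappa(x_0)}$. The whole argument turns on this inclusion.

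For the forward direction, assume $S$ is highly symmetric. Equivariance applied to $g \in G_{x_0}$ gives $g\kappa(x_0) = \kappa(gx_0) = \kappa(x_0)$, so $G_{x_0} \subset G_{\kappa(x_0)}$. Taking the trivial element in the defining condition of the stratum order yields $\Sigma_{x_0} \prec \Sigma_{\kappa(x_0)}$, so the assumed maximality of $\Sigma_{x_0}$ forces $\kappa(x_0)$ into the same stratum as $x_0$. Combining conjugacy of the stabilizers with the direct inclusion $G_{x_0} \subset G_{\kappa(x_0)}$ and the finiteness of $[G:G_{x_0}] = |S|$, one obtains $G_{x_0} = G_{\kappa(x_0)}$. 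Injectivity then falls out: if $\kappa(x) = \kappa(y)$ with $y = gx$ (using transitivity), then $g \in G_{\kappa(x)} = G_x$, so $y = x$.

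For the reverse direction, assume $\Sigma_{x_0}$ is not maximal. Then there is $y' \in X$ with $G_{x_0} \subsetneq gG_{y'}g^{-1} = G_{gy'}$ for some $g \in G$; replacing $y'$ by $y := gy'$ gives a direct strict inclusion $G_{x_0} \subsetneq G_y$. Define $\kappa: S \to X$ by $\kappa(hx_0) := hy$; the inclusion $G_{x_0} \subset G_y$ makes this well-defined on $S \cong G/G_{x_0}$ and equivariance is built in, while the index comparison $[G:G_y] < [G:G_{x_0}] = |S|$ forces $|\kappa(S)| = |Gy| < |S|$, so $\kappa$ cannot be one-to-one. The only delicate point is the forward direction's upgrade from the set-theoretic inclusion $G_{x_0} \subset G_{\kappa(x_0)}$ to a genuine equality of stabilizers; once maximality has placed $\kappa(x_0)$ in the same stratum as $x_0$, the finiteness of $S$ closes the gap, and the remaining steps are bookkeeping.
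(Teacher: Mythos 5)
Your proof is correct and follows essentially the same route as the paper's: both directions reduce to the stabilizer inclusion $G_{x_0}\subset G_{\kappa(x_0)}$ forced by equivariance, and your construction $\kappa(hx_0):=hy$ for the converse is exactly the one the paper uses. The only difference is cosmetic --- where the paper derives the contradiction directly from $G_{x}\subsetneq G_{\kappa(x)}$ and argues non-injectivity via ``otherwise $G_y\subset G_x$'', you spell out the finite-index bookkeeping (conjugate stabilizers of equal finite index cannot be properly nested, and $|Gy|<|S|$ forces a collision), which makes the appeal to maximality slightly more explicit but does not change the argument.
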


\begin{proof}
If $\left|  S\right|  =1$, then the proposition is trivial, as every singleton is
highly symmetric. Assume that $\left|  S\right|  \geq2$ and put
$G:=\operatorname{Sym}\left(  S\right)$.
If $S$ is not highly symmetric, then there exist $x\in S=Gx$ and $y\notin S$
such that $G_{x}\subsetneq G_{y}$. Put $\kappa(gx):=g(y)$ for every $g\in G$.
Clearly, $\kappa$ is well defined, $\operatorname*{Sym}\left(  S\right)
$-equivariant and it is not one-to-one, since otherwise $G_{y}\subset G_{x}$,
which is a contradiction. On the other hand, take a $G$-equivariant map
$\kappa:S\rightarrow X$ that is not one-to-one. Then, there exist $x\in S$ and
$g\in\operatorname{Sym}\left(  S\right)  $ such that $x\neq gx$ and $\kappa \left(
x\right)  =\kappa \left(  gx\right)  =g  \kappa \left(x\right)  $, and so $G_{x}\subsetneq
G_{\kappa \left(  x\right)  }$, a contradiction.
\end{proof}

It is interesting that an analogous idea was explored almost fifty years ago
by Zajtz, who defined so called \textsl{primitive geometric objects} in quite
similar manner as highly symmetric sets defined above and proved the fact
parallel to Proposition \ref{higsymequ} \cite[Theorem 1]{Zaj66}.

\section{Symmetric, informationally complete and highly symmetric normalized
rank-1 POVMs}
\label{HSPOVM}

To apply these general definitions to normalized rank-$1$ POVMs, note that
from the celebrated Wigner theorem \cite{Wig31} it follows that for every
separable Hilbert space $\mathcal{H}$ the group of isometries of homogeneous metric space $\left(
\mathcal{P}\left(  \mathcal{H}\right)  ,D_{FS}\right)  $ (\textsl{quantum
symmetries}) is isomorphic to the projective unitary-antiunitary group
$\operatorname{PUA}\left(  \mathcal{H}\right)  $, consisting of unitary and
antiunitary transformations of $\mathcal{H}$ defined up to phase factors, see
also \cite{Casetal97,Casetal03,KelPapRey08,GraKusMar09,Fre12}. To be more precise,
each such isometry is given by the map $\sigma_{U}:\mathcal{P}\left(
\mathcal{H}\right)  \ni\rho\rightarrow U\rho U^{\ast}\in\mathcal{P}\left(
\mathcal{H}\right)  $ for a unitary or antiunitary $U$, and two such
isometries coincide if and only if the corresponding transformations differ
only by a phase. Equivalence classes of unitary isometries form a normal subgroup of
$\operatorname{PUA}\left(  \mathcal{H}\right)  $ of index $2$, namely the
projective unitary group $\operatorname*{PU}\left(  \mathcal{H}\right)  $.
Clearly, every such isometry can be uniquely extended to a continuous affine
map on $\mathcal{S}\left(  \mathcal{H}\right)  $.

If $\mathcal{H}=\mathbb{C}^{d}$, then the generalized Bloch representation
gives a one-to-one correspondence between the compact group
$\operatorname{PUA}\left(  d\right)  $ and the group of isometries of the unit
sphere in $(d^{2}-1)$-dimensional real vector space $\mathcal L_s^0(\mathbb C^d)$
endowed with the Hilbert-Schmidt product, whose action
leaves the Bloch set $b(\mathcal S(\mathbb C^d))  $ invariant. This
correspondence is given by $[U] \rightarrow \left\{\rho\rightarrow U\rho
U^{\ast}:\rho\in\mathcal L_s^0(\mathbb C^d) \right\} $ for $U\in\operatorname{UA}\left(d\right)
$ (the unitary case is shown in \cite{ArrPat03} and it can be easily generalized
to the antiunitary case). Hence $\operatorname{PUA}\left(  d\right) $ is isomorphic
to a subgroup of the orthogonal group $O\left(  d^{2}-1\right)  $. Moreover,
$m_{FS}$ is the unique $\operatorname{PUA}\left(  d\right)  $-invariant
measure on $\mathcal{P}\left(  \mathbb{C}^{d}\right)  \simeq\mathbb{CP}^{d-1}%
$. In particular, for $d=2$, we have $\operatorname{PUA}\left(  2\right)
\simeq O\left(  3\right)  $, and so all quantum symmetries of qubit states can
be interpreted as rotations (for unitary symmetries, as $\operatorname*{PU}%
\left(  2\right)  \simeq SO\left(  3\right)  $), reflections or
rotoreflections of the three dimensional Euclidean space.

Taking this into account we can transfer the notions of symmetry and high
symmetry from $\mathcal{P}\left(  \mathbb{C}^{d}\right)  $ to finite
normalized rank-$1$ POVMs in $\mathbb{C}^{d}$. Let $\Pi=(\Pi_{j}%
)_{j=1,\ldots,k}$ be a finite normalized rank-$1$ POVM in $\mathbb{C}^{d}$ and
$S$ be a corresponding set of pure quantum states. We say that
\begin{itemize}
\item $\Pi$ is a \textsl{symmetric POVM} $\Leftrightarrow$ $S$ is symmetric in
$(\mathcal{P}\left(  \mathbb{C}^{d}\right)  ,D_{FS})$;

\item $\Pi$ is a \textsl{highly symmetric POVM} (\textsl{HS-POVM}%
)\textsl{\ }$\Leftrightarrow$ $S$ is highly symmetric in $(\mathcal{P}\left(
\mathbb{C}^{d}\right)  ,D_{FS})$.
\end{itemize}
For finite normalized rank-$1$ measurements symmetric POVMs coincide with
\textsl{group covariant POVMs} introduced by Holevo \cite{Hol82} and studied
since then by many authors. We say that a measurement $\Pi=(\Pi_{j}%
)_{j=1,\ldots,k}$ is $G$\textsl{-covariant} for a group $G$ if and only if
there exists $G\ni g\rightarrow\sigma_{U_{g}}\in\operatorname{PUA}\left(
d\right)  $, a \textsl{projective unitary-antiunitary representation} of $G$
(i.e. a homomorphism from $G$ to $\operatorname{PUA}\left(  d\right)  $), and
a surjection $s:G\rightarrow\left\{  1,\ldots,k\right\}  $ such that
$\sigma_{U_{g_1}}(\Pi_{s( g_2)})=U_{g_1}\Pi_{s(g_2)}
U_{g_1}^{\ast}=\Pi_{s(g_1g_2)}$ for all $g_1,g_2\in G$.
For the greater convenience, we can assume that $\Pi$ is a multiset,
and so we can label its elements by $g$ instead of $s(g)$: $\Pi=(\Pi_g)_{g\in G}$.
In order to guarantee that $\sum_{g\in G}\Pi_g=\mathbb I$ we need to put
$\Pi_g=(|s(G)|/|G|)\Pi_{s(g)}$. Let $\Pi$ be a
finite normalized rank-$1$ POVM in $\mathbb{C}^{d}$ and $S$ be a corresponding
set of pure quantum states. It is clear that a symmetric finite normalized
rank-$1$ POVM is $\operatorname{Sym}\left(  S\right)  $-covariant, and,
conversely, if a finite normalized rank-$1$ POVM  is $G$-covariant, then
$(  \sigma_{U_{g}})  _{g\in G}$ is a subgroup of the group of
isometries of  $(\mathcal{P}(\mathbb{C}^{d}),D_{FS})$, acting transitively on the corresponding (multi-)set of
pure states. We call the representation \textsl{irreducible} if and only if
$\mathbb{I}/d$ is the only element of $\mathcal{S}(\mathbb{C}^{d})$ invariant under action of the representation. It follows from
the version of Schur's lemma for unitary-antiunitary maps \cite[Theorem~II]{Dim63}
that this definition coincides with the classical one.
Irreducibility of the representation can be also equivalently expressed as
follows: for any pure state $\tau\in\mathcal{P}(\mathbb{C}^{d})$ its orbit under the action of the representation generates a rank-$1$
$G$-covariant POVM, i.e.\ $\frac{1}{|G|}\sum_{g\in G}\sigma_{U_{g}}\left(
\tau\right)  =\mathbb{I}/d$, see also \cite{ValWal05}.

In the next section we shall describe all HS-POVMs in dimension 2. From Corollary \ref{supersym} and \cite[Theorem 1]{Zhu14},
we already know that the SIC-POVM in dimension two (represented by
a tetrahedron), the Hesse SIC-POVM in dimension three, and the set of $64$ Hoggar
lines in dimension eight are highly symmetric POVMs, see also \cite{Zhu15}.
Note that our definition of highly symmetric POVMs resembles the definition of highly
symmetric frames introduced by Broome and Waldron \cite{Bro10,BroWal13,Wal13}.
However, they consider subsets of $\mathbb{C}^{d}$ rather than $\mathbb{CP}%
^{d-1}$ and unitary symmetries rather than projective unitary-antiunitary symmetries.

The next proposition clarifies the relations between the properties of the set
of pure states constituting a finite normalized rank-$1$ POVM and the
properties of its Bloch representation. We call a normalized rank-$1$ POVM
$\Pi=(\Pi_{j})_{j=1,\ldots,k}$ \textsl{informationally complete} (resp.
\textsl{purely informationally complete}) if and only if the probabilities
$p_{j}\left(  \rho\right)  $ ($j=1,\ldots,k$) determine uniquely every input
state $\rho\in\mathcal{S}\left(  \mathbb{C}^{d}\right)$ (resp.
$\mathcal{P}\left(  \mathbb{C}^{d}\right)$).  Since we need $d^2-1$ independent
parameters to describe uniquely a quantum state, any IC-POVM must contain at
least $d^2$ elements\label{ic}. The following result provides necessary and
sufficient conditions for informational completeness and purely informational
completeness:

\begin{proposition}
\label{infcom}Let $\Pi=(\Pi_{j})_{j=1,\ldots,k}$ be a finite normalized
rank-$1$ POVM in $\mathbb{C}^{d}$ and $S:=\left\{  \sigma_{j}:j=1,\ldots
,k\right\}  $ be a corresponding set of pure quantum states, i.e.\ $\sigma
_{j}\in\mathcal{P}(  \mathbb{C}^{d})  $ and $\Pi_{j}=\left(
d/k\right)  \sigma_{j}$ for $j=1,\ldots,k$. Let us consider the following properties:

\begin{enumerate}
\item[(a)] $S$ is a complex projective $2$-design;
\item[(b)] $b\left(  S\right)  $ is a normalized tight frame in $\mathcal L_s^0(\mathbb C^d) $;
\item[(c)] $b\left(  S\right)  $ is a spherical $2$-design in $\mathcal L_s^0(\mathbb C^d)  $;
\item[(d)] $\Pi$ is informationally complete;
\item[(e)] $b\left(  S\right)  $ generates $\mathcal L_s^0(\mathbb C^d) $;
\item[(f)] $b\left(  S\right)  $ is a frame in $\mathcal L_s^0(\mathbb C^d)$;
\item[(g)] $\Pi$ is purely informationally complete;
\item[(h)] $S$ is a resolving set in $(\mathcal{P}(\mathbb{C}^{d})  ,D_{FS})$;
\item[(i)] $b\left(  S\right)  $ is a resolving set in $\left(  B\left(
d\right)  ,D_{B}\right)  $.
\end{enumerate}

Then $(a)\Leftrightarrow(b)\Leftrightarrow(c)\Rightarrow(d)\Leftrightarrow(e)
\Leftrightarrow(f)\Rightarrow(g)\Leftrightarrow(h)\Leftrightarrow(i)$.
Moreover, if $d=2$, then $(g)\Rightarrow(d)$.
\end{proposition}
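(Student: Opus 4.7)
The plan is to route every equivalence through the Bloch representation $b$, so that the proposition reduces to linear algebra plus the geometric dictionary provided by formula \eqref{probabilities}. I would begin with the linear-algebraic block. Since $\rho\mapsto (p_j(\rho))_{j=1,\ldots,k}$ is affine in $\rho$ with linear part $b(\rho)\mapsto(\langle\langle b(\sigma_j),b(\rho)\rangle\rangle_{HS})_j$, injectivity on $\mathcal{S}(\mathbb{C}^d)$ is equivalent to $b(S)$ separating points of $\mathcal{L}_s^0(\mathbb{C}^d)$, i.e.\ to $\mathrm{span}(b(S))=\mathcal{L}_s^0(\mathbb{C}^d)$; this gives $(d)\Leftrightarrow(e)$. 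In a finite-dimensional real Hilbert space, frames are precisely spanning (multi-)sets, so $(e)\Leftrightarrow(f)$; tight frames are frames, hence $(b)\Rightarrow(f)$; and restricting injectivity from $\mathcal{S}(\mathbb{C}^d)$ to $\mathcal{P}(\mathbb{C}^d)$ immediately yields $(f)\Rightarrow(g)$.

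Next I would treat the design block $(a)\Leftrightarrow(b)\Leftrightarrow(c)$. Two features of $b(S)$ matter: every $b(\sigma_j)$ has the same norm $\sqrt{1-1/d}$ (since $\mathrm{tr}(\sigma_j^2)=1$), and $\sum_j b(\sigma_j)=0$ because $\Pi$ is a POVM. A standard characterization for such equinorm, centered configurations is that the tight-frame identity $\sum_j |b(\sigma_j)\rangle\langle b(\sigma_j)|=c\,\mathbb{I}$ is equivalent to being a spherical 2-design on the sphere of radius $\sqrt{1-1/d}$: the first-moment condition is automatic, and the second-moment condition is precisely the tight-frame identity. This gives $(b)\Leftrightarrow(c)$. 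For $(a)\Leftrightarrow(c)$, I would use $\mathrm{tr}(\sigma_j\sigma_m)=\langle\langle b(\sigma_j),b(\sigma_m)\rangle\rangle_{HS}+1/d$ to convert any polynomial of degree $\leq 2$ in trace inner products into one of the same degree in Hilbert-Schmidt inner products of Bloch vectors, and appeal to unitary invariance: the pushforward of $m_{FS}$ via $b$ has the same first two moments as the uniform probability measure on that sphere, because $\mathrm{PUA}(d)$ acts isotropically on $\mathcal{L}_s^0(\mathbb{C}^d)$ and fixes the barycentre of $B(d)$ at the origin.

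For the resolving block $(g)\Leftrightarrow(h)\Leftrightarrow(i)$, I would use that for pure $\rho$, $p_j(\rho)=(d/k)\cos^2 D_{FS}(\sigma_j,\rho)$, and $\cos^2$ is a bijection on $[0,\pi/2]$. Hence the probability vector $(p_j(\rho))_j$ and the Fubini-Study distance vector $(D_{FS}(\sigma_j,\rho))_j$ determine each other, which is $(g)\Leftrightarrow(h)$. The equivalence $(h)\Leftrightarrow(i)$ is then immediate from $D_B=\gamma\circ D_{FS}$ with $\gamma$ a strictly increasing homeomorphism onto its image.

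It remains to handle the special case $d=2$. Here $b(\mathcal{P}(\mathbb{C}^2))$ is the entire unit sphere of $\mathcal{L}_s^0(\mathbb{C}^2)\cong\mathbb{R}^3$. If $(e)$ failed, $\mathrm{span}(b(S))$ would lie in some $2$-plane $V$ through the origin; the Euclidean reflection $R_V$ then preserves the unit sphere and fixes every $b(\sigma_j)$, so for any $b(\rho)\notin V$ the pure states $\rho$ and $R_V\rho$ are distinct yet have identical probability vectors, contradicting $(g)$. I expect the main obstacle in the whole proposition to be the moment computation underlying $(a)\Leftrightarrow(c)$; once the second moment of the pushforward of $m_{FS}$ under $b$ is identified, every remaining step reduces either to finite-dimensional linear algebra or to the trigonometric bijection between probabilities and Fubini-Study distances.
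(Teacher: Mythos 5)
Your proposal is correct and uses the same block decomposition as the paper ($\{a,b,c\}$, $\{d,e,f\}$, $\{g,h,i\}$, plus the two one-way bridges and the $d=2$ addendum), but it differs in that it is essentially self-contained where the paper is not: the paper proves only the easy implications and the ordinal-equivalence argument for $(g)\Leftrightarrow(h)\Leftrightarrow(i)$ directly, outsourcing $(a)\Leftrightarrow(b)$ to Scott, $(b)\Leftrightarrow(c)$ to Waldron, $(d)\Leftrightarrow(e)$ to Heinosaari--Ziman, and $(g)\Rightarrow(d)$ for $d=2$ to a remark of Heinosaari, Mazzarella and Wolf. Your replacements are sound. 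The linear-algebra argument for $(d)\Leftrightarrow(e)$ works because differences of density operators fill a neighbourhood of $0$ in $\mathcal L_s^0(\mathbb C^d)$, so separating states is the same as spanning; you gloss this but it is routine. The reflection argument for $d=2$ is a genuinely elementary substitute for the citation: it exploits that $b(\mathcal P(\mathbb C^2))$ is the \emph{whole} sphere, so the mirror image $R_V b(\rho)$ of a pure state is again a pure state --- exactly the point that fails for $d>2$ and explains why the implication is special to qubits. The one place where your sketch leans hardest on an unproved ``standard characterization'' is the design block: since the paper defines a complex projective $2$-design by the averaged (frame-potential) identity, passing from ``equality for all $f$ of degree $\leq 2$'' to ``the empirical first and second moment tensors of $b(S)$ equal those of the pushforward of $m_{FS}$'' requires the Welch-bound/Sidelnikov argument (write the double sum of $\langle x,y\rangle^t$ as $\bigl\|\tfrac1k\sum_j b_j^{\otimes t}\bigr\|^2$ and use that the cross term vanishes because $\|b(\sigma_j)\|^2=1-1/d$ is constant, so equality forces the moment tensors to coincide); that step is precisely the content of the propositions the paper cites, and your identification of the second moment of $b_*m_{FS}$ via irreducibility of the adjoint action is the right key to finishing it. In short: same architecture, with your version trading brevity for a proof that can be read without the four external references.
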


\begin{proof}
It is obvious that $(b)\Rightarrow(f)$ and $(d)\Rightarrow(g)$. The proof of
$(a)\Leftrightarrow(b)$ can be found in \cite[Proposition 13]{Sco06},
$(b)\Leftrightarrow (c)$ in \cite[p.~5]{Wal03} and
$(d)\Leftrightarrow(e)$ in \cite[Proposition 3.51]{HeiZem11}. It is well known
that in finite dimensional spaces frames are generating sets, hence
$(e)\Leftrightarrow(f)$. Furthermore, $(g)\Leftrightarrow(h)\Leftrightarrow(i)$
follows from the fact that the distances $D_{FS}$ and $D_{B}$ are ordinally
equivalent, and from the equality $\operatorname{tr}\left(  \rho\sigma\right)
=\cos^{2}D_{FS}\left(  \rho,\sigma\right)  $ for $\rho,\sigma\in
\mathcal{P}\left(  \mathbb{C}^{d}\right)  $. Moreover, for $d=2$ the notions of purely
informational completeness and informational completeness coincide
\cite[Remark~1]{Heietal13}.
\end{proof}

A POVM that satisfies $(a)$ (or, equivalently, $(b)$ or $(c)$) is called
\textsl{tight informationally complete POVM} \cite{Sco06}.
Note that $(d)$ does not imply $(b)$, even if $S$ is symmetric and $d=2$. To
show this, consider $S\subset\mathcal{P}\left(  \mathbb{C}^{2}\right)  $ such
that $b(S)=\{  2^{-1/2}(e_{1} \pm e_{2}),$ $ 2^{-1/2}(- e_{1} \pm e_{3}
)\}  $, where $\left\{  e_{1},e_{2},e_{3}\right\}  $ is any orthonormal
basis of $\mathcal L_s^0(\mathbb C^2)$. Then $b\left(  S\right)  $ is a
tetragonal disphenoid with the antiprismatic
symmetry group $D_{2d}$. Clearly, $b\left(  S\right)  $ is a frame in
$\mathcal L_s^0(\mathbb C^2) $, but simple calculations show that it is not
tight. On the other hand, one can prove $(d)\Rightarrow(b)$, under the additional
assumption that the natural action of $\operatorname*{Sym}\left(  S\right)$
on $\mathcal L_s^0(\mathbb C^d) $ is irreducible, applying
\cite[Theorem 6.3]{ValWal05}. Moreover, as we shall see in the next section,
all the conditions above are equivalent if $S$ is highly symmetric and $d=2$.

\section{Classification of highly symmetric POVMs in dimension
two}
\label{HSPOVM2}

\begin{theorem}
There are only eight types of HS-POVMs in two dimensions, seven exceptional
informationally complete HS-POVM represented in $\mathbb{R}^{3}$ by five
Platonic solids (convex regular polyhedra): the \textbf{tetrahedron}, \textbf{cube},
\textbf{octahedron}, \textbf{icosahedron} and \textbf{dodecahedron} and two convex
quasi-regular polyhedra: the \textbf{cuboctahedron} and \textbf{icosidodecahedron},
and an infinite series of non informationally complete HS-POVMs represented in
$\mathbb{R}^{3}$ by \textbf{regular polygons}, including \textbf{digon}.
\end{theorem}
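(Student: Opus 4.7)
The strategy is to recast the problem on the Bloch sphere and then invoke the classification of closed subgroups of $O(3)$. By Wigner's theorem, $\operatorname{PUA}(2)\cong O(3)$, and Proposition~1 identifies finite normalized rank-$1$ POVMs in $\mathbb{C}^{2}$ with finite balanced multisets $S\subset S^{2}\subset\mathbb{R}^{3}$. The HS condition becomes: $G:=\operatorname{Sym}(S)$ is a closed subgroup of $O(3)$, $S$ is a single $G$-orbit, and no $y\in S^{2}$ has isotropy $G_{y}$ strictly containing any $G$-conjugate of the stabilizer of some $s\in S$. The argument splits according to whether $G$ has positive dimension or is finite.

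If $G$ is positive-dimensional, it contains a $1$-parameter rotation group stabilizing some axis, whose orbits on $S^{2}$ are either singleton poles or full circles. Since $S$ is finite, $S$ must consist of poles of that axis. Running through the infinite closed subgroups of $O(3)$ preserving the axis setwise ($SO(2)$, $C_{\infty v}$, $C_{\infty h}$, $D_{\infty}$, $D_{\infty h}$) and imposing transitivity on $S$ together with balance $\sum_{s\in S} s=0$, the only surviving case is $S=\{N,-N\}$ with $G=D_{\infty h}$. The pole stabilizer is then $C_{\infty v}$, the largest possible isotropy on $S^{2}$, so the digon is automatically HS.

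If $G$ is finite, the Hessel--Klein classification lists the possibilities as the axial families $C_{n},C_{nh},C_{nv},S_{2n},D_{n},D_{nd},D_{nh}$ and the seven polyhedral groups $T,T_{d},T_{h},O,O_{h},I,I_{h}$. For each $G$ I would tabulate the orbits on $S^{2}$ with non-trivial isotropy (they sit on rotation axes or mirror planes), compute the poset of isotropy types up to $G$-conjugacy, and for every orbit check (i) balance, (ii) maximality of the stratum, (iii) that $\operatorname{Sym}(S)=G$. In the axial case, the stabilizer of a point off the principal axis is at most $C_{2v}$, and only $G=D_{nh}$ with $n\geq 3$ survives all three tests: the orbit of $n$ equatorial vertices of a regular $n$-gon is balanced, and its stabilizer $C_{2v}$ (with a horizontal $C_{2}$ axis) is not $D_{nh}$-conjugate to any subgroup of the polar stabilizer $C_{nv}$ (whose only $C_{2}$ axis, when $n$ is even, is vertical), so the polar and equatorial strata are incomparable and both maximal.

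For the polyhedral case, only the full groups $T_{d},O_{h},I_{h}$ occur as $\operatorname{Sym}(S)$: any $T$-, $O$-, $I$- or $T_{h}$-invariant finite subset of $S^{2}$ one checks directly to be invariant under the corresponding full reflection group. Within each of $T_{d},O_{h},I_{h}$ the orbits with non-trivial stabilizer are exactly the fixed-point orbits of the $2$-, $3$-, and highest-order rotation axes, with stabilizers $C_{2v},C_{3v},C_{nv}$ for $n=3,4,5$; these three isotropy types are pairwise non-conjugate and pairwise incomparable in the poset, so each gives a maximal stratum, and each orbit is balanced because the only $G$-fixed vector in $\mathbb{R}^{3}$ is $0$. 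This produces exactly the tetrahedron in $T_{d}$, the octahedron, cube and cuboctahedron in $O_{h}$, and the icosahedron, dodecahedron and icosidodecahedron in $I_{h}$. Every other orbit of a polyhedral group (for instance the vertex sets of the remaining Archimedean solids) has stabilizer $C_{s}$ or trivial, which is strictly contained in each $C_{mv}$ appearing above and therefore lies on a non-maximal stratum. The main obstacle is simply the bookkeeping of this finite case analysis: verifying pairwise non-comparability of the candidate maximal stabilizers and systematically eliminating all non-maximal or non-balanced orbits across every axial and polyhedral group. Once this is complete, the eight families in the statement are exhaustive.
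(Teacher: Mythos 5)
Your overall strategy --- pass to the Bloch sphere via Wigner's theorem and Proposition 1, split on whether $\operatorname{Sym}(S)$ is infinite or finite, and then run through the classification of point groups looking for balanced single orbits lying in maximal strata --- is exactly the route the paper takes (the paper is in fact terser than you are on the finite case, deferring the orbit-and-stratum analysis to the literature and recording the outcome in a table). Your treatment of the infinite case and of the axial groups, including the incomparability of the equatorial $C_{2v}$ and polar $C_{nv}$ strata for $D_{nh}$, is sound, as is the list of maximal strata you extract from $T_d$, $O_h$ and $I_h$.

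There is, however, one step that is false as stated: the claim that ``any $T$-, $O$-, $I$- or $T_h$-invariant finite subset of $S^{2}$ one checks directly to be invariant under the corresponding full reflection group.'' A generic orbit of the chiral group $O$ is the vertex set of a snub cube, which is invariant under no reflection at all (similarly the snub dodecahedron for $I$, a generic chiral $12$-point orbit for $T$, and a generic mirror-plane $12$-point orbit of $T_h$, whose full symmetry group is exactly $T_h$); such orbits are balanced and so do yield symmetric POVMs with these symmetry groups. So you cannot discard the rotation groups and $T_h$ on the grounds that they never occur as $\operatorname{Sym}(S)$. The correct way to eliminate them --- and the reason your conclusion survives --- is the stratum condition itself: if $\operatorname{Sym}(S)$ equals $T$, $O$ or $I$ exactly, then $S$ must be a generic orbit with trivial stabilizer and hence lies in the principal (minimal) stratum; if $\operatorname{Sym}(S)=T_h$, then $S$ is a generic or mirror-plane orbit with stabilizer at most $C_s$, properly contained in the $C_{2v}$ stabilizing the $2$-fold-axis points, so again the stratum is not maximal. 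The orbits of these groups that do lie on rotation axes are precisely those whose full symmetry group jumps up to $T_d$, $O_h$ or $I_h$, where your (correct) analysis of the full reflection groups takes over. With that repair, and the finite bookkeeping you acknowledge, the argument closes and agrees with the paper's.
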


\begin{proof}
Let $S=\left\{  \sigma_{j}:j=1,\ldots,k\right\}  \subset\mathcal{P}\left(
\mathbb{C}^{2}\right)  \simeq\mathbb{CP}$ constitute a HS-POVM, and let
$B:=b(S)\subset S^{2}$. Put $G:=\operatorname{Sym}\left(  B\right)  $. Then it
follows from the equivalence $(d)\Leftrightarrow (e)$ in Proposition \ref{infcom}
that either $B$ is contained in a proper (one- or two-dimensional) subspace of
$\mathbb{R}^{3}$, or the POVM is informationally complete and, according to the
implication $(d)\Rightarrow (h)$ in Proposition \ref{infcom} and
Proposition \ref{folklore}, $G$~is finite.

If $G$ is infinite, then necessarily the stabilizer of any element $x\in B$
has to be infinite, since otherwise the whole orbit of $x$ would be infinite.
As the only linear isometries of $\mathbb{R}^{3}$ leaving possibly $x$
invariant are either rotations about the axis $l_{x}$ through $x $, or
reflections in any plane containing $l_{x}$, the stabilizer $G_{x}$ has to
contain an infinite subgroup of rotations about $l_{x}$. Thus the orbit of any
point beyond $l_{x}$ under $G$ must be infinite. In consequence, $B=\left\{
-x,x\right\}  $, and $G=D_{\infty h}\simeq O\left(  2\right)  \times C_{2}$.

If $G$ is finite, it must be one of the point groups, i.e., finite subgroups
of $O\left(  3\right)  $. The complete characterization of such subgroups has
been known for very long time \cite{Sen90}: there exist seven infinite
families of axial (or prismatic) groups $C_{n}$, $C_{nv}$, $C_{nh}$, $S_{2n}$,
$D_{n}$, $D_{nd}$ and $D_{nh}$, as well as seven additional polyhedral (or
spherical) groups: $T$ (chiral tetrahedral), $T_{d}$ (full tetrahedral),
$T_{h}$ (pyritohedral), $O$ (chiral octahedral), $O_{h}$ (full octahedral),
$I$ (chiral icosahedral) and $I_{h}$ (full icosahedral). Analysing their
standard action on $S^{2}$ (see e.g.
\cite{RobCar70,LimMonRob01,MicZhi01,Zhi01,Quietal05}), one can find in all
cases the orbits with maximal stabilizers. Gathering this information
together, we get all highly symmetric finite subsets of $S^{2}$, and so all
HS-POVMs in two dimensions. These sets are listed in Tab.~\ref{HSdim2} together with
their symmetry groups and the stabilizers of their elements with respect to
these symmetry groups. For all but the first two types of HS-POVMs, the
symmetry group $G$ is a polyhedral group, and so it acts irreducibly on
$\mathbb{R}^{3}$. Hence, $B$ must be a tight frame in all these cases.
\qedhere
\begin{table}
\caption{HS-POVMs in dimension two, with their cardinalities, symmetry groups and stabilizers of elements (in Schoenflies notation).}
\label{HSdim2}
\centering
\begin{tabular}{lccc}
\hline\noalign{\smallskip}
convex hull of the orbit & cardinality of the orbit & group &
stabilizer  \\
\noalign{\smallskip}\hline\noalign{\smallskip}
digon & $2$ & $D_{\infty h}$ & $C_{\infty v}$ \\
regular $n$-gon ($n\geq3$) & $n$ & $D_{nh}$ & $C_{2v}$ \\
tetrahedron & $4$ & $T_{d}$ & $C_{3v}$ \\
octahedron & $6$ & $O_{h}$ & $C_{4v}$ \\
cube & $8$ & $O_{h}$ & $C_{3v}$ \\
cuboctahedron & $12$ & $O_{h}$ & $C_{2v}$ \\
icosahedron & $12$ & $I_{h}$ & $C_{5v}$ \\
dodecahedron & $20$ & $I_{h}$ & $C_{3v}$ \\
icosidodecahedron & $30$ & $I_{h}$ & $C_{2v}$\\
\noalign{\smallskip}\hline
\end{tabular}
\end{table}

\end{proof}

We have just shown that if $S\subset\mathcal P(\mathbb C^2)$ constitutes
an informationally complete HS-POVM in dimension two, then $b(S)$ is
a spherical 2-design. However, it follows from \cite[Theorem~2]{CraPerKri10}
and the form of corresponding group invariant polynomials (listed in
Sect.~\ref{invpol}) that if $\textrm{Sym}(b(S))=O_h$, then $b(S)$ is a spherical
3-design and if $\textrm{Sym}(b(S))=I_h$, then $b(S)$ is a spherical 5-design.

Classification of all finite symmetric subsets of $S^{2}$ and, in consequence,
all symmetric normalized rank-1 POVMs in two dimensions, is of course more
complicated than for highly symmetric case. In particular, the number of such
non-isometric subsets is uncountable. However, since each symmetric subset
generates a \textsl{vertex-transitive polyhedron} in three-dimensional
Euclidean space (and each such polyhedron is a symmetric set generating
symmetric normalized rank-1 POVM), the task reduces to classifying
such polyhedra, which was done by Robertson and Carter in the 1970s, see
\cite{RobCar70,Robetal70,Rob84,Cro97}. They proved that the transitive
polyhedra in $\mathbb{R}^{3}$ can be parameterized (up to isometry) by metric
space (with the Hausdorff distance under the action of Euclidean isometries
related closely to the Gromov-Hausdorff distance, see \cite{Mem08}), which is
a two-dimensional CW-complex with $0$-cells corresponding exactly to highly
symmetric subsets of $S^{2}$.

Note that not only `regular polygonal' POVMs (e.g.\ the trine or
`Mercedes-Benz' measurement for $k=3$ \cite{Jozetal03} and the `Chrysler'
measurement for $k=5$ \cite{Wil13}), but also `Platonic solid' POVMs have
been considered earlier by several authors in various quantum mechanical
contexts, including quantum tomography, at least since 1989 \cite{Jon89,Jon91b}, see for instance \cite{Cab03,Cavetal04,DecJanBet04,Bur08}.

\section{Entropy and relative entropy of measurement}
\label{ENTRELENT}

\subsection{Definition}
\label{Def}

Let $\Pi=(\Pi_{j})_{j=1,\ldots,k}$ be a finite POVM in $\mathbb{C}^{d}$. We
shall look for the most `classical' (with respect to a given measurement)
or `coherent' quantum states, i.e. for the states that minimize the uncertainty
of the outcomes of the measurement. This uncertainty can be measured by the
quantity called the \textsl{entropy of measurement} given by
\begin{equation}
H(\rho,\Pi):=\sum_{j=1}^{k}\eta\left(  p_{j}\left(  \rho,\Pi\right)  \right)
\text{,}%
\end{equation}
for $\rho\in\mathcal{S}\left(  \mathbb{C}^{d}\right)  $, where the probability
$p_{j}\left(  \rho,\Pi\right)  $ of the $j$-th outcome ($j=1,\ldots,k$) is
given by $p_{j}\left(  \rho,\Pi\right)  :=\operatorname{tr}\left(  \rho\Pi
_{j}\right)  $, and the \textsl{Shannon entropy function} $\eta:\left[
0,1\right]  \rightarrow\mathbb{R}^{+}$ by $\eta\left(  x\right)  :=-x\ln x$
for $x>0$, and $\eta\left(  0\right)  :=0$. (In the sequel, we shall use
frequently the identity $\eta\left(  xy\right)  =\eta\left(  x\right)
y+\eta\left(  y\right)  x$, $x,y\in\left[  0,1\right]  $.) Thus, the entropy
of measurement $H(\rho,\Pi)$ is just the Boltzmann-Shannon entropy of the
probability distribution of the measurement outcomes, assuming that the state
of the system before the measurement was $\rho$. This quantity (as well as its
continuous analogue) has been considered by many authors, first in the 1960s
under the name of \textsl{Ingarden-Urbanik entropy} or $A$\textsl{-entropy},
then, since the 1980s, in the context of \textsl{entropic uncertainty
principles} \cite{Deu83,KriPar02,Mas07,WehWin10}, and also quite recently for
more general statistical theories \cite{Slo03,ShoWer10}. Wilde called it the
\textsl{Shannon entropy of POVM} \cite{Wil13}. For a history of this notion,
see \cite{Weh78} and \cite{Baletal86}.

The function $H(\cdot,\Pi):\mathcal{S}\left(\mathbb{C}^{d}\right)  \rightarrow\mathbb{R}$
is continuous and concave. In consequence, it attains minima in the set of pure states.
Moreover, it is obviously bounded from above by $\ln k$, the entropy of the uniform distribution,
and the upper bound is achieved for the maximally mixed state $\rho_{\ast}:=\mathbb I/d$.
The general bound from below is expressed with the help of the \textsl{von Neumann entropy} of
the state $\rho$ given by $S(\rho):=-\operatorname{tr}(\rho\ln\rho)$ \cite[Sect.~2.3]{Lanetal11}:
\begin{equation}
S\left(  \rho\right)-\sum_{j=1}^k p_j\ln(\operatorname{tr} (\Pi_j)) \leq H(\rho,\Pi)\leq\ln k
\text{.}
\end{equation}
Since for the normalized rank-$1$ POVM $\operatorname{tr}(\Pi_j)=d/k$ for all $j=1,\ldots,k$,
we get
\begin{equation}
S\left(  \rho\right)+\ln(k/d)\leq H(\rho,\Pi)\leq\ln k
\text{.}
\label{entine}
\end{equation}
(Moreover for $\rho\in\mathcal{S}\left(  \mathbb{C}^{d}\right)  $,
$S\left(\rho\right)  =\min H(\rho,\Pi)$, where the minimum is taken over all
normalized rank-$1$ POVMs $\Pi$, see, e.g.\ \cite[Sect.~11.1.2]{Wil13}.)
In consequence, for $\rho \in \mathcal P(C^d)$ we have
\begin{equation}
\ln (k/d) \leq H(\rho,\Pi) \leq \ln k.
\label{entine2}
\end{equation}
The first inequality in (\ref{entine2}) follows also from the inequalities
$p_j(\rho,\Pi) \leq d/k$ for every $j=1,\ldots,k$, and from the fact that
$\ln$ is an increasing function.

It is sometimes much more convenient to work with the \textsl{relative entropy
of measurement }(\textsl{with respect to the uniform distribution}) \cite[p.~67]{Gre11}
that measures non-uniformity of the distribution of the measurement outcomes and is
given by
\begin{equation}
\widetilde{H}(\rho,\Pi):=\ln k-H(\rho,\Pi)\text{,}
\end{equation}
and to look for the states that maximize this quantity. Clearly, it follows from
(\ref{entine}) that the relative entropy of measurement is bounded from below
by $0$, and from above by the relative von Neumann entropy of the state $\rho$
with respect to the maximally mixed state $\rho_{\ast}=I/d$:%
\begin{equation}
0\leq\widetilde{H}(\rho,\Pi)\leq S\left(  \rho|\rho_{\ast}\right)  \leq\ln
d\text{.}%
\end{equation}

\subsection{Relation to informational power}
\label{Infpow}

The problem of minimizing entropy (and so maximizing relative entropy) is
connected with the problem of maximization of the mutual information between
ensembles of initial states (classical-quantum states) and the POVM $\Pi$.

Let us consider an ensemble $\mathcal{E}=\left(  \left(  \tau_{i}\right)
_{i=1}^{m},\left(  p_{i}\right)  _{i=1}^{m}\right)  $, where $p_{i}\geq0$ are
\emph{a priori} probabilities of density matrices $\tau_{i}\in\mathcal{S}%
\left(  \mathbb{C}^{d}\right)  $, where $i=1,\ldots,m$, and $\sum
\nolimits_{i=1}^{m}p_{i}=1$. The \textsl{mutual information} between
$\mathcal{E}$ and $\Pi$ is given by:%

\begin{equation}
I\left(  \mathcal{E},\Pi\right)  :=\sum\limits_{i=1}^{m}\eta\left(
\sum\limits_{j=1}^{k}P_{ij}\right)  +\sum\limits_{j=1}^{k}\eta\left(
\sum\limits_{i=1}^{m}P_{ij}\right)  -\sum\limits_{j=1}^{k}\sum\limits_{i=1}%
^{m}\eta\left(  P_{ij}\right)
\end{equation}
where $P_{ij}:=p_{i}\operatorname{tr}\left(  \tau_{i}\Pi_{j}\right)  $ is the
probability that the initial state of the system is $\tau_{i}$ and the
measurement result is $j$ for $i=1,\ldots,m$ and $j=1,\ldots,k$.

The problem of maximization of $I(\mathcal{E},\Pi)$ consists of two dual
aspects \cite{Arnetal14a,Hol12,Hol13}: maximization over all possible
measurements, providing the ensemble $\mathcal{E}$ is given, see,
e.g.\ \cite{Hol73,Dav78,Sasetal99,Suzetal07}, and (less explored) maximization
over all ensembles, when the POVM $\Pi$ is fixed \cite{Arnetal11,Oreetal11}.
In the former case, the maximum is called \textsl{accessible information}. In
the latter case, Dall'Arno et al.\ \cite{Arnetal11,Arnetal14a} introduced the
name \textsl{informational power of }$\Pi$ for the maximum and denoted it by
$W\left(  \Pi\right)  $. Dall'Arno et al.\ \cite{Arnetal11} and,
independently, Oreshkov et al.\ \cite{Oreetal11} showed that there always exists a
\textsl{maximally informative ensemble} (i.e.\ ensemble that maximizes the
mutual information) consisting of pure states only. Note that
a POVM $\Pi$ generates a \textsl{quantum-classical channel} $\Phi
:\mathcal{S}\left(  \mathbb{C}^{d}\right)  \rightarrow\mathcal{S}\left(
\mathbb{C}^{k}\right)  $ given by $\Phi\left(  \rho\right)  =\sum_{j=1}%
^{k}\operatorname{tr}\left(  \rho\Pi_{j}\right)  \left|  e_{j}\right\rangle
\left\langle e_{j}\right|  $, where $\left(  \left|  e_{j}\right\rangle
\right)  _{j=1}^{k}$ is any orthonormal basis in $\mathbb{C}^{k}$. The
\textsl{minimum output entropy} of $\Phi$ is equal to the minimum entropy of
$\Pi$, i.e.\ $\min_{\rho\in\mathcal{P}\left(  \mathbb{C}^{d}\right)  }%
S(\Phi(\rho))=\min_{\rho\in\mathcal{P}\left(  \mathbb{C}^{d}\right)  }%
H(\rho,\Pi)$ \cite{Sho02}. On the other hand, the informational power of $\Pi$
can be identified \cite{Arnetal11,Oreetal11,Hol12b} as the \textsl{classical (Holevo)
capacity} $\chi(\Phi)$ of the channel $\Phi$, i.e.\
\[
W(\Pi)=\chi(\Phi):=\max_{\left(  \left(  \tau_{i}\right)  _{i=1}^{m},\left(
p_{i}\right)  _{i=1}^{m}\right)  }\left\{  S\left(  \sum\limits_{i=1}^{m}%
p_{i}\Phi(\tau_{i})\right)  -\sum\limits_{i=1}^{m}p_{i}S(\Phi(\tau
_{i}))\right\}  \text{.}%
\]

What are the relation between informational power and entropy minimization? It
follows from \cite[p. 2]{Hol12b} that%
\[
I\left(  \mathcal{E},\Pi\right)  =H\left(  \sum\limits_{i=1}^{m}p_{i}\tau
_{i},\Pi\right)  -\sum\limits_{i=1}^{m}p_{i}H\left(  \tau_{i},\Pi\right)
\text{.}%
\]
Clearly, $H\left(  \sum\nolimits_{i=1}^{m}p_{i}\tau_{i},\Pi\right)  \leq\ln k$
and for $i=1,\ldots,k$ we have $\min\limits_{\rho\in\mathcal{P}\left(
\mathbb{C}^{d}\right)  }H\left(  \rho,\Pi\right)  \leq H\left(  \tau_{i}%
,\Pi\right)  $. Hence%
\begin{equation}
W\left(  \Pi\right)  \leq\ln k-\min\limits_{\rho\in\mathcal{P}\left(
\mathbb{C}^{d}\right)  }H\left(  \rho,\Pi\right)  =
\max\limits_{\rho\in\mathcal{P}\left(
\mathbb{C}^{d}\right)  }\widetilde{H}\left(  \rho,\Pi\right)\text{.}
\label{ineinfpow}%
\end{equation}

In consequence, the equality in (\ref{ineinfpow}) holds if and only if there
exists an ensemble $\mathcal{E}=\left(  \left(  \tau_{i}\right)  _{i=1}%
^{m},\left(  p_{i}\right)  _{i=1}^{m}\right)  $ such that $\operatorname{tr}%
\left(  \left(  \sum\nolimits_{i=1}^{m}p_{i}\tau_{i}\right)  \Pi_{j}\right)
=1/k$ for $j=1,\ldots,k$ and $\tau_{1},\ldots,\tau_{m}\in\arg\min H$.

Assume now that $\Pi$ is normalized $1$-rank POVM with $\Pi_{j}=(d/k)\sigma
_{j}$, for $j=1,\ldots,k$, where $S:=\left\{  \sigma_{j}:j=1,\ldots,k\right\}
\subset\mathcal{P}(  \mathbb{C}^{d})  $. Then, applying the
Carath\'{e}odory convexity theorem, we can characterize the situation, where
the two maximization problems coincide:

\begin{proposition}
The following two conditions are equivalent:

\begin{enumerate}
\item $\left\langle b\left(  S\right)  \right\rangle ^{\perp}\cap
\operatorname*{conv}\left(  b\left(  \arg\min H\right)  \right)  \neq
\emptyset$;
\item  the equality in (\ref{ineinfpow}) holds.
\label{coninfpow}
\end{enumerate}

Moreover, if $\Pi$ is informationally complete, then (1) can be replaced by
\linebreak $0\in\operatorname*{conv}\left(  b\left(  \arg\min H\right)  \right)  $.
\end{proposition}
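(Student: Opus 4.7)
The plan is to translate the characterization of equality in (\ref{ineinfpow}) given immediately before the proposition into the Bloch picture, where both sides of (1) live naturally. According to the characterization, equality holds iff there exists an ensemble $\mathcal{E}=((\tau_i)_{i=1}^m,(p_i)_{i=1}^m)$ with $\tau_1,\ldots,\tau_m\in\arg\min H$ and $\operatorname{tr}((\sum_i p_i\tau_i)\Pi_j)=1/k$ for every $j$. Since the cited result of Dall'Arno et al.\ and Oreshkov et al.\ guarantees that a maximally informative ensemble can be chosen to consist of pure states, I would from the outset take $\tau_i\in\mathcal{P}(\mathbb{C}^d)$, so that $b(\tau_i)\in b(\arg\min H)$ makes sense.

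First I would rewrite the uniformity condition using (\ref{probabilities}): the identity $p_j(\bar\tau,\Pi)=(d\langle\langle b(\sigma_j),b(\bar\tau)\rangle\rangle_{HS}+1)/k$ applied to $\bar\tau:=\sum_i p_i\tau_i$ shows that $\operatorname{tr}(\bar\tau\Pi_j)=1/k$ for all $j=1,\ldots,k$ is equivalent to $\langle\langle b(\sigma_j),b(\bar\tau)\rangle\rangle_{HS}=0$ for all $j$, i.e.\ $b(\bar\tau)\in\langle b(S)\rangle^\perp$. Next I would use affinity of the Bloch map to get $b(\bar\tau)=\sum_i p_i b(\tau_i)\in\operatorname{conv}(b(\arg\min H))$. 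This already yields the implication (2)$\Rightarrow$(1): $b(\bar\tau)$ lies in both sets.

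For (1)$\Rightarrow$(2) I would invoke the Carath\'eodory theorem on the convex set $\operatorname{conv}(b(\arg\min H))$ inside the $(d^2-1)$-dimensional real space $\mathcal{L}_s^0(\mathbb{C}^d)$: any $v$ in the intersection can be written as $v=\sum_{i=1}^m p_ib(\tau_i)$ with $\tau_i\in\arg\min H$, $p_i\geq 0$, $\sum p_i=1$, and $m\leq d^2$. Setting $\bar\tau=\sum_i p_i\tau_i$ we get $b(\bar\tau)=v\in\langle b(S)\rangle^\perp$, and undoing the Bloch translation gives an ensemble saturating the characterization above, hence the equality in (\ref{ineinfpow}).

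For the ``moreover'' part, if $\Pi$ is informationally complete then by Proposition \ref{infcom} conditions (d)--(f) we have that $b(S)$ generates $\mathcal{L}_s^0(\mathbb{C}^d)$, so $\langle b(S)\rangle^\perp=\{0\}$; condition (1) then collapses to $0\in\operatorname{conv}(b(\arg\min H))$. No step is really hard here — the only point requiring a moment of care is the tacit use of purity of $\tau_i$ so that $b(\tau_i)$ actually lies in $b(\arg\min H)$ (as opposed to the larger set of Bloch images of mixed minimizers), and this is exactly what the Dall'Arno--Oreshkov observation supplies.
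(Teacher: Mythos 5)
Your proof is correct and follows exactly the route the paper intends: it combines the characterization of equality in (\ref{ineinfpow}) stated just before the proposition with the Bloch-picture reformulation of $\operatorname{tr}(\bar\tau\Pi_j)=1/k$ as $b(\bar\tau)\in\langle b(S)\rangle^\perp$ and the Carath\'eodory theorem, which is precisely what the paper invokes (without writing out the details). Your closing remark on the purity of the $\tau_i$ correctly identifies the only delicate point and resolves it appropriately.
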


In particular, condition (1) is fulfilled if $\Pi$ (and so $S$) is symmetric
or if $\operatorname*{Sym}(S)$ acts irreducibly on $\mathcal P(\mathbb C^d)$. (The latter is true, if e.g.
$d=2$ and $S$ is a union of pairs of orthogonal states.) To see this, it is
enough to consider the ensemble consisting of equiprobable elements of the
orbit of any minimizer of $H$ under the action of $\operatorname*{Sym}(S)$ and
use the fact that $(1/\left|  \operatorname*{Sym}(S)\right|  )\sum
\nolimits_{g\in\operatorname*{Sym}(S)}\left(  g\sigma\right)  =I/d$ for all
$\sigma\in S$. This fact was observed already by Holevo in \cite{Hol05}.

\subsection{Relation to entropic uncertainty principles}
\label{Eup}

The entropic uncertainty principles form another area of research related to quantifying the uncertainty in quantum theory.
They were introduced  by Bia\l ynicki-Birula and Mycielski \cite{BiaMyc75}, who showed that they are stronger than `standard'
Heisenberg's uncertainty principle, and Deutsch \cite{Deu83}, who provided the first lower bound for the sum of entropic uncertainties
of two observables independent on the initial state. This bound has been later improved by Maassen and Uffink \cite{MaaUff88,Uff94},
and de Vicente and S\'anchez-Ruiz \cite{VicSan07}. The generalizations for POVMs (all previous results referred to PVMs) has been
formulated subsequently in \cite{Hal97,KriPar02,Mas07} and \cite{Ras08}. More detailed survey of the topic can be found in \cite{WehWin10}.

The entropic uncertainty relations are closely connected with entropy minimization.
In fact, any lower bound for the entropy of
measurement can be regarded as an \textsl{entropic uncertainty relation for
single measurement} \cite{KriPar02}. Moreover, combining $m$ normalized
rank-$1$ $k$-element POVMs $\Pi^{i}=(\Pi_{j}^{i})_{j=1,\ldots,k}$ ($i=1,\ldots,m$) we
obtain another normalized rank-$1$ $km$-element POVM $\Pi:=(\frac{1}{m}\Pi_{j}%
^{i})_{j=1,\ldots,k}^{i=1,\ldots,m}$. Now, from an entropic uncertainty
principle for $\left(  \Pi^{i}\right)  _{i=1,\ldots,m}$ written in the form
$\frac{1}{m}\sum\nolimits_{i=1}^m H(\rho,\Pi^{i})\geq C>0$ \cite[p.~3]{WehWin10}
we get automatically a lower bound for entropy of $\Pi$, namely
\begin{equation}
H(\rho,\Pi)=\frac{1}{m}\sum_{i=1}^m H(\rho,\Pi^{i})+\ln m\geq
C+\ln m \label{eupsingle}
\end{equation}
for $\rho\in\mathcal{P}\left(  \mathbb{C}^{d}\right)$, and vice versa, proving a lower bound for entropy of $\Pi$ we get immediately an entropic uncertainty principle.

To be more specific, assume now that $m=2$ and $\Pi_{j}^{i}=\left(  d/k\right)  \sigma
_{j}^{i}$, where $\sigma_{j}^{i}=\left|  \varphi_{j}^{i}\right\rangle
\left\langle \varphi_{j}^{i}\right|  \in\mathcal{P}\left(  \mathbb{C}%
^{d}\right)  $, denoting their Bloch vectors by $x_{j}^{i}:=b\left(  \sigma
_{j}^{i}\right)  \in\mathcal L_s^0(\mathbb C^d)  \simeq\mathbb{R}^{d^{2}-1}$
for  $j=1,\ldots,k$, $i=1,2$. The \emph{Krishna-Parthasarathy entropic uncertainty
principle} \cite[Corollary~2.6]{KriPar02}, combined with (\ref{eupsingle})  gives us
\begin{align}
H(\rho,\Pi)&\geq\ln\left(  2k/d\right)  -\ln\max_{j,l=1,\ldots,k}\left|  \left\langle
\varphi_{j}^{1}|\varphi_{l}^{2}\right\rangle \right|  \\
& =\ln\left(  2k/d\right)  -\frac{1}{2}\ln (  \max_{j,l=1,\ldots,k}
\left\langle \left\langle x_{j}^{1},x_{l}^{2}\right\rangle \right\rangle
_{HS} +1/d )   \nonumber
\end{align}
for $\rho\in\mathcal{P}\left(  \mathbb{C}^{d}\right)  $. In consequence, taking into account
that the radius of the Bloch sphere is $\sqrt{1-1/d}$, we get  an upper bound for relative entropy
\begin{align}
\widetilde{H}(\rho,\Pi)  &  \leq\ln d+\frac{1}{2}\ln(\max_{j,l=1,\ldots
,k}  \left\langle \left\langle x_{j}^{1},x_{l}^{2}\right\rangle
\right\rangle _{HS}+1/d) \label{uppbou}\\
&  =\ln d+\frac{1}{2}\ln((  1-1/d)  (\max_{j,l=1,\ldots
,k}  \cos\left(  2\theta_{jl}\right)    )+1/d))\text{,}\nonumber
\end{align}
where $\theta_{jl}:=\measuredangle\left(  x_{j}^{1},x_{l}^{2}\right)  /2$ for
$j,l=1,\ldots,k$. As this upper bound does not depend on the input state
$\rho$, it gives us also an upper bound for the informational power of $\Pi$.

If $d=2$, this inequality takes a simple form%
\begin{equation}
\widetilde{H}(\rho,\Pi)\leq\ln2+\ln\max_{j,l=1,\ldots,k}\left|\cos\theta_{jl}\right|   \text{.}
\label{uppboudim2}%
\end{equation}
We may used this bound, e.g.\ for the `rectangle' POVM analysed in
Sect.~\ref{Local}, that can be treated as the aggregation of two pairs of
antipodal points on the sphere representing two PVM measurements. In this case
we deduce from (\ref{uppboudim2}) that $\widetilde{H}\leq\ln2+\ln\max\left(  \left|
\sin\left(  \alpha/2\right)  \right|  ,\left|  \cos\left(  \alpha/2\right)
\right|  \right)  $, where $\alpha$ is the measure of the angle between the
diagonals of the rectangle. In particular, for the `square' POVM we get
$\widetilde{H}\leq\frac{1}{2}\ln2$. As we shall see in Sect.~\ref{INFPOW}, this
bound is actually reached for each of four states constituting the POVM and
represented by the vertices of the square.

\subsection{Relation to Wehrl entropy minimization}
\label{Weh}

Let us consider now a normalized rank-1 POVM $\Pi=(\Pi_{j})_{j=1,\ldots,k}$  in
$\mathbb{C}^{d}$ with $\Pi_{j}=\left(
d/k\right)  \sigma_{j}$ for $j=1,\ldots,k$, where $S:=\left\{  \sigma_{j}:j=1,\ldots,k\right\} \subset\mathcal P(\mathbb C^d) $.
Then we get after simple calculations%
\begin{equation}
H(\rho,\Pi)=\frac{d}{k}\sum_{j=1}^{k}\eta\left(  \operatorname{Tr}\left(
\rho\sigma_{j}\right)  \right)  -\ln\left(  d/k\right)  \label{ent}%
\end{equation}
and so
\begin{equation}
\widetilde{H}(\rho,\Pi)=\ln d-\frac{d}{k}\sum_{j=1}^{k}\eta\left(
\operatorname{Tr}\left(  \rho\sigma_{j}\right)  \right)  \label{relent}%
\end{equation}
for $\rho\in\mathcal{P}(  \mathbb{C}^{d})  $. Assume now, that
$\Pi$ is symmetric (group covariant) and put $G:=\operatorname{Sym}\left(
S\right)  $. Then for each $\tau\in S$ we have $S=\left\{  g\tau:g\in
G\right\}  $ and
\begin{align}
\widetilde{H}(\rho,\Pi)  &  =
\ln d-\frac{d}{\left|  S\right|  }\sum_{\left[  g\right]  \in G/G_{\tau}%
}\eta\left(  \operatorname{Tr}\left(  \rho\left(  g\tau\right)  \right)
\right)
\label{relentbis} \\
&  =
\ln d-\frac{d}{\left|  G\right|  }\sum_{g\in
G}\eta\left(  \operatorname{Tr}\left(  \rho\left(  g\tau\right)  \right)
\right)
\nonumber
\end{align}
for $\rho\in\mathcal{P}(  \mathbb{C}^{d})  $. The same formulae are
true for any subgroup of $\operatorname{Sym}\left(  S\right)  $ acting
transitively on $S$. Note that the behaviour of the functions $H(\cdot
,\Pi),\ \widetilde{H}(\cdot,\Pi):\mathcal{P}(  \mathbb{C}^{d})
\rightarrow\mathbb{R}^{+}$ depends only on the choice of the fiducial state $\tau$.
Moreover, observe that both functions are $G$-invariant, as for $\rho\in\mathcal{P}(\mathbb{C}^{d})$
and $g \in G$ we have $\operatorname{Tr}(\rho (g \tau)) = \operatorname{Tr} (\tau (g^{-1} \rho))$,
and so from (\ref{relentbis}) we get
\begin{equation}
H(\rho,\Pi)=\frac{d}{\left|  G\right|  }\sum_{g\in G}\eta\left(  \operatorname{Tr}\left(  \tau\left(  g\rho\right)  \right)
\right) -\ln(d/k)
\end{equation}
and
\begin{equation}
\widetilde{H}(\rho,\Pi)=\ln d-\frac{d}{\left|  G\right|  }\sum_{g\in G}%
\eta\left(  \operatorname{Tr}\left(  \tau\left(  g\rho\right)  \right)
\right)  \text{.}%
\end{equation}

Now, one can observe that relative entropy of symmetric POVM is
closely related to the semi-classical quantum entropy introduced in 1978 by
Wehrl for the harmonic oscillator coherent states \cite{Weh78} and named later
after him. The definition was generalised by Schroeck \cite{Sch85}, who
analysed its basic properties. Let $G$ be a compact topological group acting
unitarily and irreducibly on $\mathcal{P}\left(  \mathbb{C}^{d}\right)  $.
Fixing fiducial state $\tau\in\mathcal{P}\left(  \mathbb{C}^{d}\right)  $ we
get the family of states $\left(  g\tau\right)  _{g\in G/G_{\tau}}$ called
(\textsl{generalized} or \textsl{group}) \textsl{coherent states}
\cite{Per86,Alietal14} that fulfills the identity: $\int_{G/G_{\tau}}g\tau
d\mu([g]_{G/G_{\tau}})=\mathbb{I}$, where $\mu$~is the
$G$-invariant measure on $G/G_{\tau}$ such that $\mu\left(  G/G_{\tau}\right)
=d$. Then for $\rho\in\mathcal{S}\left(  \mathbb{C}^{d}\right)  $ we define
the \textsl{generalized Wehrl entropy} of $\rho$ by
\begin{equation}
S_{Wehrl}\left(  \rho\right)  :=\int_{G/G_{\tau}}\eta\left(
\operatorname*{tr}\left(  \rho\left(  g\tau\right)  \right)  \right)
d\mu(\left[  g\right]  _{G/G_{\tau}})\text{.}%
\end{equation}
It is just the Boltzmann-Gibbs entropy for the density function on $\left(
G/G_{\tau},\mu\right)  $ called the \textsl{Husimi function} of $\rho$ and
given by $G/G_{\tau}\ni\left[  g\right]  _{G/G_{\tau}}\rightarrow
\operatorname*{tr}\left(  \rho\left(  g\tau\right)  \right)  \in\mathbb{R}%
^{+}$ that represents the probability density of the results of an approximate
coherent states measurement (or in other words continuous POVM)
\cite{Dav76,BusSch89}. Then the relative Boltzmann-Gibbs entropy of the
Husimi distribution of $\rho$ with respect to the Husimi distribution of the
maximally mixed state $\rho_{\ast}$, that is the\ constant density on $\left(
G/G_{\tau},\mu\right)  $ equal $1/d$, given by%
\begin{equation}
S_{Wehrl}\left(  \rho|\rho_{\ast}\right)  :=\ln d-S_{Wehrl}\left(
\rho\right)
\end{equation}
is a continuous analogue of $\widetilde{H}(\cdot,\Pi)$ given by (\ref{relentbis}).
What is more, the
relative entropy of measurement is just a special case of such transformed
Wehrl entropy, when we consider the discrete coherent states (i.e.\ POVM)
generated by a finite group. On the other hand, the entropy of measurement
$H(\cdot,\Pi)$ has no continuous analogue, as it may diverge to infinity, where
$k\rightarrow\infty$. In principle, to define coherent states we can use an
arbitrary fiducial state. However, to obtain coherent states with sensible
properties one has to choose the fiducial state $\tau$ to be the
\textsl{vacuum state}, that is the state with maximal symmetry with respect to
$G$ \cite{Kly07},\cite[Sect.~2.4]{Per86}.

To investigate the Wehrl entropy it is enough to require that $G$ should be
locally compact. In fact, Wehrl defined this quantity for the \textsl{harmonic
oscillator coherent states}, where $G$ is the Heisenberg-Weyl group $H_{4}$
acting on projective (infinite dimensional and separable) Hilbert space, $G_{\tau}\simeq
U\left(  1\right)  \times U\left(  1\right)  $, and $G/G_{\tau}\simeq
\mathbb{C}$. This notion was generalized by Lieb \cite{Lie78} to \textsl{spin
}(\textsl{Bloch})\textsl{\ coherent states}, with $G=SU(2)$ acting on
$\mathbb{CP}^{d-1}$ ($d\geq2$), $G_{\tau}\simeq U(1)$ and $G/G_{\tau}\simeq
S^{2}$. In this paper Lieb proved that for harmonic oscillator coherent states
the minimum value of the Wehrl entropy is attained for coherent states
themselves. (It follows from the group invariance that this quantity is the
same for each coherent state.) He also conjectured that the statement is true
for spin coherent states, but, despite many partial results, the problem,
called the \textsl{Lieb conjecture}, had remained open for next thirty five
years until it was finally proved by Lieb himself and by Solovej in 2012
\cite{LieSol12}. They also expressed the hope that the same result holds for
$SU(N)$ coherent states for arbitrary $N\in\mathbb{N}$, or even for any
compact connected semisimple Lie group (the \textsl{generalized Lieb
conjecture}), see also \cite{GnuZyc01,Slo03}. Bandyopadhyay received recently
some partial results in this direction for $G=SU(1,1)$ coherent states
\cite{Ban09}, where $G_{\tau}\simeq U(1)$ and $G/G_{\tau}$ is the hyperbolic plane.

For finite groups and covariant POVMs the minimization of Wehrl entropy is
equivalent to the maximization of the relative entropy of measurement, which
is in turn equivalent to the minimization of the entropy of measurement.
Consequently, one could expect that the entropy of measurement should be
minimal for the states constituting the POVM that are already known to be
critical as inert states.
We shall see in Sec.~\ref{Gloext} that this need not be always the case.
In particular, it is not true for the tetrahedral POVM or in the situation where
the states constituting  a POVM form a regular polygon with odd number of vertices.
Thus, it is conceivable that to prove the `generalized Lieb conjecture'
some additional assumptions will be necessary.

\subsection{Relation to quantum dynamical entropy}
\label{QDE}

As in the preceding section, let $\Pi=(\Pi_{j})_{j=1,\ldots,k}$ be a finite
normalized rank-$1$ POVM in $\mathbb{C}^{d}$ and let $S=\left\{  \sigma
_{j}:j=1,\ldots,k\right\}  $ be a corresponding (multi-)set of pure quantum
states. Set $\sigma_{j}=\left|  \varphi_{j}\right\rangle \left\langle
\varphi_{j}\right|  $, where $\varphi_{j}\in\mathbb{C}^{d}$, $\left\|
\varphi_{j}\right\|  =1$. Assume that successive measurements described by the
generalized L\"uders instrument connected with $\Pi$, where $\sigma_i$ serve as the
`output states', are performed on an evolving quantum system and that the motion
of the system between two subsequent measurements is governed by a unitary matrix
$U$. Clearly, the sequence of measurements introduces a nonunitary evolution
and the complete dynamics of the system can be described by a quantum
Markovian stochastic process, see \cite{SloZyc94}.

The results of consecutive measurements are represented by finite strings of
letters from a $k$-element alphabet. Probability of obtaining the string
$\left(  i_{1},\ldots,i_{n}\right)  $, where $i_{j}=1,\ldots,k$ for
$j=1,\ldots,n$ and $n\in\mathbb{N}$ is then given by%
\begin{equation}
P_{i_{1},\ldots,i_{n}}\left(  \rho\right)  :=p_{i_{1}}\left(  \rho\right)
\cdot%
{\textstyle\prod\nolimits_{m=1}^{n-1}}
p_{i_{m}i_{m+1}}\text{,} \label{seqpro}%
\end{equation}
where $\rho$ is the initial state of the system, $p_{i}\left(
\rho\right) :=\left(  d/k\right)\operatorname{tr}\left(\rho\sigma_{i}\right)$
is the probability of obtaining $i$ in the first measurement,
and $p_{ij}:=\left(  d/k\right)  \operatorname{tr}\left(  U\sigma_{i}U^{\ast
}\sigma_{j}\right)  =\left(  d/k\right)  \left|  \left\langle \varphi
_{j}|U|\varphi_{i}\right\rangle \right|  ^{2}$ is the probability of getting $j$
as the result of the measurement, providing the result of the preceding
measurement was $i$, for $i,j=1,\ldots,k$ \cite{SloZyc94,Slo03}. The
randomness of the measurement outcomes can be analysed with the help of
(\emph{quantum}) \emph{dynamical entropy}, the quantity introduced for the
L\"{u}ders-von Neumann\ measurement independently by Srinivas \cite{Sri78},
Pechukas \cite{Pec82}, Beck \& Graudenz \cite{BecGra92} and many others, see
\cite[p.~5685]{SloZyc94}, then generalized by \.{Z}yczkowski and one of the
present authors (W.S.) to arbitrary classical or quantum measurements and
instruments \cite{SloZyc94,Kwaetal97,SloZyc98,Slo03}, and recently
rediscovered by Crutchfield and Wiesner under the name of
\textsl{quantum entropy rate} \cite{CruWie08}.

The definition of (\emph{quantum}) \emph{dynamical entropy of }$U$\textsl{ with
respect to }$\Pi$ mimics its classical counterpart, the Kolmogorov-Sinai
entropy:%
\begin{equation}
H\left(  U,\Pi\right)  :=\lim_{n\rightarrow\infty}(H_{n+1}-H_{n})=\lim
_{n\rightarrow\infty}H_{n}/n\text{,} \label{dynentdef}%
\end{equation}
where $H_{n}:=\sum_{i_{1},...,i_{n}=1}^{k}\eta\left(  P_{i_{1},\ldots,i_{n}%
}\left(  \rho_{\ast}\right)  \right)  $ for $n\in\mathbb{N}$. The maximally
mixed state $\rho_{\ast}=\mathbb I/d$ plays here the role of the `stationary state'
for combined evolution. It is easy to show that the quantity is given by%
\begin{align}
H\left(  U,\Pi\right)   &  =\frac{1}{k}\sum_{i,j=1}^{k}\eta\left(  \left(
d/k\right)  \operatorname{tr}\left(  U\sigma_{i}U^{\ast}\sigma_{j}\right)  \right)
\label{dynentfor}\\
&  =\ln\left(  k/d\right)  +\frac{d}{k^{2}}\sum_{i,j=1}^{k}\eta\left(
\operatorname{tr}\left(  U\sigma_{i}U^{\ast}\sigma_{j}\right)  \right) \nonumber\\
&  =\ln\left(  k/d\right)  +\frac{d}{k^{2}}\sum_{i,j=1}^{k}\eta\left(  \left|
\left\langle \varphi_{i}|U|\varphi_{j}\right\rangle \right|  ^{2}\right)
\text{,} \nonumber%
\end{align}
which is a special case of much more general integral entropy formula
\cite{Slo03}. Using (\ref{ent}) and (\ref{dynentfor}) we see that the
dynamical entropy of $U$ is expressed as the mean entropy of measurement over
output states transformed by $U$:%
\begin{equation}
H\left(  U,\Pi\right)  =\frac{1}{k}\sum_{i=1}^{k}H(U\sigma_{i}U^{\ast},\Pi)
\label{dynentrel}%
\end{equation}
There are two sources of randomness in this model: the underlying unitary
dynamics and the measurement process. The latter can be measured by the
quantity $H_{meas}\left(  \Pi\right)  :=H\left( \mathbb I,\Pi\right)  $ called
\textsl{(quantum) measurement entropy}. From (\ref{dynentrel}) we get%
\begin{equation}
H_{meas}\left(  \Pi\right)  =\frac{1}{k}\sum_{i=1}^{k}H(\sigma_{i},\Pi)\text{.}
\label{meaentrel}%
\end{equation}

If $\Pi$ is symmetric, then all the summands in (\ref{meaentrel}) are the
same. Hence, in this case, the measurement entropy $H_{meas}\left(
\Pi\right)  $ is equal to the entropy of measurement $H(\rho,\Pi)$, where the
input state $\rho$ is one of the output states from $S$.

\subsection{Entropy in the Bloch representation}
\label{EntropyBloch}

Using the Bloch representation for states and normalized rank-$1$ POVMs
(see Sect.~\ref{QSP}) one can reformulate the problems of entropy minimization and
relative entropy maximization  as the problems of finding the global extrema of the
corresponding function on $B(d)\subset S^{d^2-2}$. Such reformulation significantly reduces
the complexity of the problem, especially in dimension two, since in this case $B(2)$ is
isomorphic with $S^2$. Let $\Pi=(\Pi_{j})_{j=1,\ldots,k}$ be a normalized rank-$1$ POVM
in $\mathbb{C}^{d}$ such that
$\Pi_{j}=\left(  d/k\right)  \sigma_{j}$, $\sigma_{j}\in\mathcal{P}\left(
\mathbb{C}^{d}\right)$, and let $B := \left\{  v_{j}|j=1,\ldots,k\right\}$,
where $v_{j} := \sqrt{d/(d-1)}b(\sigma_{j})  \in S^{d^2-2}$ ($j=1,\ldots,k$).
For $\rho\in\mathcal{S}\left(  \mathbb{C}^{d}\right)  $, $u := \sqrt{d/(d-1)}b(\rho) \in B^{d^2-2}$
and $j=1,\ldots,k$  we get from (\ref{probabilities})
\begin{equation}
p_j(\rho,\Pi)=((d-1)u\cdot v_j+1)/k. \label{prob}
\end{equation}
 Applying (\ref{prob}), (\ref{ent}) and (\ref{relent}) we obtain
\begin{equation}
H_{B}(u):=H(\rho,\Pi)=\sum_{j=1}^{k}\eta\left(  \frac{(d-1)u\cdot v_{j}+1}%
{k}\right)  =\ln\frac{k}{d}+\frac{d}{k}\sum_{j=1}^{k}h\left(  u\cdot
v_{j}\right)
\label{entdimtwo}
\end{equation}
and
\begin{equation}
\widetilde{H}_{B}(u):=\widetilde{H}(\rho,\Pi)=\ln d-\frac{d}{k}\sum_{j=1}%
^{k}h\left(  u\cdot v_{j}\right)  \text{,}
\label{entreldimtwo}
\end{equation}
where the function  $h:\left[  -1/(d-1),1\right]  \rightarrow\mathbb{R}^{+}$ is given by%
\begin{equation}
h\left(  t\right)  :=\eta\left(  \frac{(d-1)t+1}{d}\right)  \label{h}%
\end{equation}
for $-1\leq t\leq1$. It is clear that the functions $H_{B}$ and $\widetilde
{H}_{B}$ restricted to \linebreak $\sqrt{d/(d-1)}B(d)$ are of $C^{2}$ class (even analytic)
except at the points `orthogonal' to the points from $B$,
in the sense that they represent orthogonal states to
states $\sigma_j$, $j=1,\ldots,k$. For $d=2$ they are just the points
antipodal to the points from $B$. Despite the fact that
the function $h$ is non-differentiable at $-1/(d-1)$, some standard calculations
show that in dimension two $H_B$ and $\widetilde H_B$ are at these points of $C^{1}$ class
but not twice differentiable.

For $d=2$, it follows from (\ref{entdimtwo}) that if $B$ is contained in a plane $L$,
then $H_{B}$ attains global minima on this plane. Indeed, for $u \in S^2$ we have
$H_B(u)=H_B(\widetilde{u})$, where $\widetilde{u}$ is the orthogonal projection of $u$
onto $L$, as $u\cdot v_{j}=\widetilde{u}\cdot v_{j}$ for $j=1,\ldots,k$.

For a symmetric POVM, there exists a finite group $G\subset O\left(
d^2-1\right)$ acting transitively on $B$. It follows from (\ref{entdimtwo})
and (\ref{entreldimtwo}) that $H_{B}$ and $\widetilde{H}_{B}$ are $G$-invariant
functions on $B(d)$ given by%
\begin{equation}
H_{v}(u):=H_{B}(u)=\ln\frac{\left|  Gv\right|  }{d}+\frac{d}{\left|  G\right|
}\sum_{g\in G}h\left(  gu\cdot v\right)  \label{entfor}
\end{equation}
and
\begin{equation}
\widetilde{H}_{v}(u):=\widetilde{H}_{B}(u)=\ln d-\frac{d}{\left|  G\right|
}\sum_{g\in G}h\left(  gu\cdot v\right)  \text{,} \label{relentfor}
\end{equation}
for $u\in B(d)$, where $v \in B=\left\{  gv:g\in G\right\}$ is
the normalized Bloch vector of an arbitrary fiducial state. This fact allows
us to use the theory of solving symmetric variational problems developed by
Louis Michel and others in the 1970s, and applied since then in many physical
contexts, especially in analysing the spontaneous symmetry breaking phenomenon
\cite{Mic80}.

\section{Local extrema of entropy for symmetric POVMs in dimension two}
\label{Local}

We start by quoting several theorems concerning smooth action of finite groups
on finite-dimensional manifolds. They are usually formulated for compact Lie
groups, but since finite groups are zero-dimensional Lie group, thus the
results apply equally well in this case.

Let $G$ be a finite group of $C^{1}$ maps acting on a compact
finite-dimensional manifold $M$. In the set of strata consider the order
$\prec$ introduced in Sect.~\ref{SRHS}.\ Then

\begin{Th*}
[Montgomery and Yang {\cite[Theorem~4a]{MicZhi01}, \cite{MonYan57,Fie07}}] The set of strata is
finite. There exists a unique minimal stratum, comprising elements of trivial
stabilizers, that is open and dense in $M$, called \textbf{generic} or
\textbf{principal}. For every $x\in M$ the set $\bigcup\left\{  \Sigma
_{u}:u\in M,\text{ }\Sigma_{x}\preceq\Sigma_{u}\right\}  $ is closed in $M$;
in particular, the maximal strata are closed.
\end{Th*}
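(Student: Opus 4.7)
The plan is to exploit two features simultaneously: the finiteness of the subgroup lattice of $G$, and the local linearizability of a smooth finite group action via an averaged tubular neighborhood. First I would prove finiteness of the set of strata: since $G$ is finite it has finitely many subgroups, and a fortiori finitely many conjugacy classes thereof; each isotropy type is exactly one such class, so the number of strata is at most this finite count. Next I would establish the key local fact: for every $x\in M$ there is a neighborhood $U\ni x$ with $G_{y}\subseteq G_{x}$ for all $y\in U$. Indeed, since $G$ is finite and $gx\neq x$ for each $g\in G\setminus G_{x}$, continuity of the action yields a neighborhood $V_{g}$ of $x$ on which $gy\neq y$, and $U:=\bigcap_{g\notin G_{x}}V_{g}$ does the job. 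Averaging any Riemannian metric over the finite group $G_{x}$ then produces a $G_{x}$-invariant metric whose exponential map identifies a ball in $T_{x}M$ with $U$ equivariantly; this is the slice theorem in the finite case.

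For the principal stratum, pick a subgroup $H$ that is minimal, under inclusion, among stabilizers realised in $M$ (it exists because the set of stabilizers is finite). For any $x$ with $G_{x}=H$, the inclusion $G_{y}\subseteq H$ on $U$ combined with minimality of $H$ forces $G_{y}=H$ for $y\in U$, so the locus of such points is open and consequently the whole stratum $\Sigma_{x}$ is open. Density requires identifying $H$ with the trivial subgroup: for $g\neq e$ the set $\operatorname{Fix}(g)$ is, in the linearising chart, the $+1$-eigenspace of $\mathrm{d}g_{x}$ on $T_{x}M$, hence a closed $C^{1}$-submanifold. Under effectiveness and connectedness of $M$, no such submanifold can equal $M$, so each $\operatorname{Fix}(g)$ is proper and of positive codimension; the complement $M\setminus\bigcup_{g\neq e}\operatorname{Fix}(g)$ of the resulting finite union is open and dense, and this is precisely the trivial-stabilizer stratum, which is then the unique minimal one.

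For closedness of $\bigcup\{\Sigma_{u}:\Sigma_{x}\preceq\Sigma_{u}\}$, I would rewrite it as
\[
\{y\in M:gG_{x}g^{-1}\subseteq G_{y}\text{ for some }g\in G\}=\bigcup_{H}\operatorname{Fix}(H),
\]
where the union ranges over the finitely many conjugates $H$ of $G_{x}$ and $\operatorname{Fix}(H):=\bigcap_{h\in H}\operatorname{Fix}(h)$. Each $\operatorname{Fix}(h)$ is closed as the preimage of the diagonal under $y\mapsto(y,hy)$, and a finite union of finite intersections of closed sets is closed; specialising to $x$ of maximal isotropy yields closedness of maximal strata.

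The main obstacle lies in the density of the principal stratum: it tacitly requires the action to be effective on a connected $M$ (without these, the minimum stabilizer is the ineffectivity kernel rather than $\{e\}$), and, more delicately, it uses that $\operatorname{Fix}(g)$ has positive codimension, i.e.\ contains no open subset of $M$. The latter is the genuine $C^{1}$-input and rests on the linearisation together with the fact that a non-identity element of $G$ cannot fix an open set in a connected manifold on which $G$ acts effectively. Once this is granted, the remaining assertions are bookkeeping in the subgroup lattice of $G$ together with the elementary topology of fixed-point sets.
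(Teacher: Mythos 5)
The paper does not prove this statement at all: it is quoted as a known result with references to \cite{MicZhi01}, \cite{MonYan57} and \cite{Fie07}, so there is no internal proof to compare against. Your argument is a correct, self-contained proof of the finite-group case and is essentially the standard one: finiteness of the set of strata from finiteness of the subgroup lattice of $G$; closedness of $\bigcup\{\Sigma_u:\Sigma_x\preceq\Sigma_u\}$ by rewriting it as the finite union of $\operatorname{Fix}(H)$ over the conjugates $H$ of $G_x$, each a finite intersection of closed fixed-point sets; and openness and density of the principal stratum from the local inclusion $G_y\subseteq G_x$ together with the fact that each $\operatorname{Fix}(g)$, $g\neq e$, is closed with empty interior, so that the complement of their finite union is open and dense. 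You are also right to flag where the hypotheses genuinely enter: the statement as quoted tacitly assumes the action is effective and $M$ is connected (otherwise the minimal stabilizer is the ineffectivity kernel, and disconnected $M$ can carry several incomparable minimal strata); both assumptions hold in the paper's application, where a finite subgroup of $O(3)$ acts on $S^2$. The empty-interior claim for $\operatorname{Fix}(g)$ is indeed the one genuine $C^1$ input, proved by the clopen argument in a linearizing chart. Two small refinements: for a merely $C^1$ action the clean linearization is Bochner's chart-averaging $\phi(y)=\frac{1}{|G_x|}\sum_{g\in G_x}(dg_x)^{-1}\psi(gy)$ rather than the exponential map of an averaged Riemannian metric (geodesics of a $C^1$ metric are delicate), and compactness of $M$ is nowhere needed in the finite-group case --- it matters only for the general compact-Lie-group version, where finiteness of the number of orbit types is a substantive theorem requiring the slice theorem.
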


The next result tells us, where we should look for the critical points of an
invariant function, i.e.\ the points where its gradient vanishes: we have to
focus on the maximal strata of $G$ action on $M$.

\begin{Th*}
[Michel {\cite[Corollary~4.3]{MicZhi01}, \cite{Mic71}}] Let $F:M\rightarrow\mathbb{R}$ be a
$G$-invariant $C^{1}$ map, and let $\Sigma$ be a maximal stratum. Then

\begin{enumerate}
\item $\Sigma$ contains some critical points of $F$;

\item  if $\Sigma$ is finite, then all its elements are critical points of $F
$.
\end{enumerate}
\end{Th*}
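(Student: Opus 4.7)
The plan is to combine the $G$-equivariance of $\nabla F$ with the local linear structure of the orbit stratification. The heart of the argument is the local identification $(T_x M)^{H} = T_x \Sigma$ for every $x \in \Sigma$, where $H := G_x$ and $(T_x M)^{H}$ denotes the $H$-fixed subspace of $T_x M$. To justify this I would appeal to an equivariant tubular neighborhood (slice) theorem: averaging a Riemannian metric over the finite group $G$ makes $G$ act by isometries, and then $\exp_x$ yields an $H$-equivariant diffeomorphism between a neighborhood of $0$ in $T_x M$ and a neighborhood of $x$ in $M$, transporting the local action of $H$ to its linear action on $T_x M$. In this linearized picture a nearby point $v \in T_x M$ has stabilizer $G_v \subset H$; if in addition $G_v$ is $G$-conjugate to $H$, a cardinality argument applied to $G_v$ as a subgroup of $H$ forces $G_v = H$, so $v \in (T_x M)^{H}$. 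Thus $\Sigma$ coincides locally with the linear subspace $(T_x M)^{H}$, and passing to tangent spaces yields the claimed equality.

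Invariance of $F$ under $G$, differentiated in the $G$-invariant metric, gives $dg_x(\nabla F(x)) = \nabla F(gx)$ for every $g \in G$; specialising to $g = h \in H$ yields $\nabla F(x) \in (T_x M)^{H} = T_x \Sigma$. Hence at every stratum point the gradient of a $G$-invariant function is automatically tangent to the stratum and has no normal component at all.

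For part (1), maximality of $\Sigma$ together with the Montgomery--Yang theorem already quoted shows that $\Sigma$ is closed in the compact manifold $M$, hence itself compact; the restriction $F|_{\Sigma}$ therefore attains its minimum at some $x \in \Sigma$. Because $\nabla F(x) \in T_x \Sigma$, this gradient coincides with the intrinsic gradient of $F|_{\Sigma}$ at $x$, which vanishes at a minimum, so $\nabla F(x) = 0$. For part (2), if $\Sigma$ is a finite subset of $M$ every point is isolated in $\Sigma$, so $T_x \Sigma = 0$; the containment $\nabla F(x) \in T_x \Sigma$ then forces $\nabla F(x) = 0$ at every $x \in \Sigma$.

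The main technical obstacle is the local lemma $(T_x M)^{H} = T_x \Sigma$, which is where the slice theorem and the finiteness of $G$ enter nontrivially; once it is in hand both parts reduce to a few lines. A subsidiary care point is that non-maximal strata need not be closed, which is why Montgomery--Yang rather than any more elementary compactness argument is the right ingredient for part (1); without it one could only assert that $F|_{\Sigma}$ has an infimum on $\Sigma$, possibly attained on a larger (higher-isotropy) stratum in the closure.
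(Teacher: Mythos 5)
Your argument is correct, and it is essentially the standard proof of Michel's theorem for finite group actions; note that the paper itself offers no proof of this statement, quoting it from Michel and Michel--Zhilinski\'{i}, so there is nothing in the text to diverge from. The three ingredients you isolate are exactly the right ones: the local identification $T_x\Sigma=(T_xM)^{H}$ via linearization of the $H$-action at $x$ (together with the observation that for $y$ near $x$ one has $G_y\subseteq G_x$, and that conjugate subgroups of a \emph{finite} group with one contained in the other must coincide); the equivariance of $\nabla F$ forcing $\nabla F(x)\in(T_xM)^{H}$; and closedness of the maximal stratum (Montgomery--Yang) to extract an extremum of $F|_{\Sigma}$ for part (1). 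Your closing remark correctly locates where maximality is used and why part (2) is in fact true for any finite stratum. One minor technical point: since the hypotheses only give a $C^{1}$ action and a $C^{1}$ function, the averaged Riemannian metric is merely continuous, so invoking $\exp_x$ is slightly too heavy a tool; the cleaner route to the local lemma is Bochner's linearization theorem, which for a finite (or compact) group acting by $C^{1}$ diffeomorphisms produces a local $C^{1}$ chart at $x$ in which $G_x$ acts linearly, after which your cardinality argument and the identification of $\Sigma$ with the fixed subspace go through verbatim. In the paper's actual application ($G$ a finite subgroup of $O(3)$ acting on $S^{2}$) the action is linear to begin with, so this refinement is immaterial there.
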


Such points are called \textsl{inert states}\label{inertstates} in physical literature; they are
critical regardless of the exact form of $F$, see, e.g.\ \cite{Xuetal12}. Of
course, an invariant function can have other critical points than those
guaranteed by the above theorem (\textsl{non-inert states}). However, we shall
see that for highly symmetric POVMs in dimension two, the global minima of
entropy function $H_{B}$ lie always on maximal strata. Although Michel's
theorem indicates a special character of the points with maximal stabilizer,
it does not give us any information about the nature of these critical points.
In some cases we can apply the following result:

\begin{Th*}
[Modern Purkiss Principle {\cite[p.~385]{Wat83}}] Let $F:M\rightarrow
\mathbb{R}$ be a $G$-invariant $C^{2}$ map and let $u\in M$. Assume that the
action of the linear isotropy group $\left\{  T_{u}h:h\in G_{u}\right\}  $ on
$T_{u}M$ is irreducible. Then $u$ is a critical point of $F$, which is either
degenerate (i.e.\ the Hessian of $F$ is singular at $u$) or a local extremum of
$F$.
\label{MPP}
\end{Th*}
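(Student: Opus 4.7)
The plan is to prove two things: first that $u$ is critical, and then that the Hessian of $F$ at $u$ is a scalar multiple of any $G_u$-invariant inner product on $T_uM$, which immediately gives the dichotomy (degenerate Hessian versus definite Hessian, i.e.\ local extremum). In both parts the finiteness of $G_u$ is used only to guarantee the existence of a $G_u$-invariant inner product on $T_uM$ (by the usual averaging trick), and the rest of the argument is representation-theoretic.

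First I would fix a $G_u$-invariant inner product $\langle\cdot,\cdot\rangle$ on $T_uM$. Differentiating the invariance identity $F\circ g = F$ at $u$ for every $g\in G_u$ yields $dF_u\circ T_u g = dF_u$, so $dF_u$ is a $G_u$-invariant linear functional on $T_uM$. Identifying it with $\nabla F(u)\in T_uM$ via the inner product, we see that $\nabla F(u)$ lies in the subspace of $G_u$-fixed vectors. This subspace is $G_u$-invariant, hence by irreducibility equals $\{0\}$ or all of $T_uM$; in the latter case the action is trivial, which (for $\dim T_uM\ge 2$) contradicts irreducibility, so $\nabla F(u)=0$ and $u$ is critical.

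Next I would analyze the Hessian $\mathrm{Hess}(F)_u$, which at a critical point is a well-defined symmetric bilinear form on $T_uM$. Differentiating $F\circ g=F$ twice at $u$ (for $g\in G_u$) shows $\mathrm{Hess}(F)_u(T_u g\,v,T_u g\,w)=\mathrm{Hess}(F)_u(v,w)$. Write $\mathrm{Hess}(F)_u(v,w)=\langle Av,w\rangle$ for a self-adjoint operator $A$ on $T_uM$; the joint invariance of the bilinear form and of $\langle\cdot,\cdot\rangle$ forces $A$ to commute with the $G_u$-action. Because $A$ is self-adjoint on a real vector space, each of its eigenspaces is a real $G_u$-invariant subspace; by irreducibility there can be only one such eigenspace, hence $A=\lambda\,\mathrm{Id}$ for some $\lambda\in\mathbb{R}$. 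If $\lambda=0$ the Hessian vanishes and $u$ is degenerate; otherwise $\mathrm{Hess}(F)_u=\lambda\langle\cdot,\cdot\rangle$ is positive or negative definite, so $u$ is a strict local minimum or maximum.

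The main obstacle is the Schur-type step for real representations: one must not invoke the complex Schur lemma naively, since for real irreducible representations the commutant may be $\mathbb{R}$, $\mathbb{C}$, or $\mathbb{H}$. The trick is to use self-adjointness of $A$ rather than just commutation with $G_u$: eigenspaces of a self-adjoint operator are orthogonal and are preserved by any commuting isometric action, so irreducibility forces a single eigenvalue regardless of the division-algebra type. After this observation the proof is essentially a one-line application of the spectral theorem, and everything else is bookkeeping with the chain rule.
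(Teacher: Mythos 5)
Your proof is correct. The paper itself offers no proof of this statement --- it is quoted from Waterhouse \cite{Wat83} --- and your argument is essentially the one Waterhouse gives: average over the finite isotropy group to get a $G_u$-invariant inner product, observe that $dF_u$ is $G_u$-invariant so that $\nabla F(u)$ lies in the fixed subspace and hence vanishes, and then show the Hessian corresponds to a self-adjoint operator commuting with the isotropy representation, which must be a real scalar. Your remark about the commutant of a real irreducible representation possibly being $\mathbb{C}$ or $\mathbb{H}$ rather than $\mathbb{R}$ is exactly the right point of care, and resolving it via the spectral theorem (eigenspaces of a self-adjoint commuting operator are invariant, so irreducibility forces a single eigenvalue) is the standard fix. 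The only caveat worth recording is the one you already flag in passing: if $T_uM$ is one-dimensional and $G_u$ acts trivially, the representation is irreducible yet the fixed subspace is all of $T_uM$, so criticality can fail; the statement therefore implicitly assumes the isotropy representation is nontrivial (equivalently, that the only $G_u$-fixed tangent vector is $0$). In the paper's application this is automatic, since there $T_uS^2$ is two-dimensional and $G_u$ contains a cyclic subgroup of order greater than $2$.
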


For a finite group $G\subset \textrm{O}(2)$ acting irreducibly on the sphere $S^{2}$
points lying on its rotation axes form the maximal strata, and so, it follows from Michel's
theorem that they are critical for entropy functions. We can divide them into
three categories depending on whether they are antipodal to the elements of
the fiducial vector's orbit (type I), and, if not, whether their stabilizers
act irreducibly on the tangent space (type II) or not (type III). In the first
case, as well as, generically, in the second case, we can determine the character
of critical point using the following proposition:

\begin{proposition}
\label{criterion} Let $u\in S^{2}$ be a point lying on a rotation axis of the group $G\subset \textrm{O}(2)$
acting irreducibly on the sphere $S^{2}$ and let $v$ denote the normalized Bloch vector of an arbitrary fiducial
state for a rank-1 $G$-covariant POVM. Then:

\begin{enumerate}[1)]

\item If there exists $g\in G$ such that $u=-gv$, then $u$ is a local
minimizer (resp. maximizer) for $H_{v}$ (resp. $\widetilde{H}_{v}$);

\item If $u\neq-gv$ for every $g\in G$ and the linear isotropy group
$\left\{  T_{u}g:g\in G_{u}\right\}  $ acts irreducibly on $T_{u}S^{2}$ (or,
equivalently, $G_{u}$ contains a cyclic subgroup of order greater than $2$), then:

\begin{enumerate}
\item  if
\begin{equation}
\frac{2}{|G/G_{u}|}\sum_{\left[  h\right]  \in G/G_{u}}(hu\cdot v)\ln
(1+hu\cdot v)>1\text{,}%
\end{equation}
then $u$ is a local minimizer (resp. maximizer) for $H_{v}$ (resp.
$\widetilde{H}_{v}$),

\item  if
\begin{equation}
\frac{2}{|G/G_{u}|}\sum_{\left[  h\right]  \in G/G_{u}}(hu\cdot v)\ln
(1+hu\cdot v)<1\text{,}%
\end{equation}
then $u$ is a local maximizer (resp. minimizer) for $H_{v}$ (resp.
$\widetilde{H}_{v}$).
\end{enumerate}
\end{enumerate}
\end{proposition}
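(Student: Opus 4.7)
\smallskip\noindent\textit{Proof proposal.}
The plan is to dispatch the two cases separately: case~2 by the Modern Purkiss Principle combined with a Schur-lemma reduction to a single scalar Hessian eigenvalue, and case~1 by a direct expansion that isolates the logarithmic singularity of $h$ at $t=-1$.

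For case~2, observe first that $H_v$ is smooth near $u$ (the hypothesis $u\neq -gv$ for all $g\in G$ excludes every non-smooth point) and that $u$ lies in a maximal stratum, hence is a critical point of $H_v$ by Michel's theorem. The irreducibility hypothesis together with Schur's lemma forces the $G_u$-invariant Hessian of $H_v$ on $T_uS^2$ to be a scalar multiple of the round metric; the sign of this scalar distinguishes local minimum, degeneracy and local maximum by the Modern Purkiss Principle. The scalar is read off from the trace of the Hessian. Using $h(t)=-\tfrac{t+1}{2}\ln\tfrac{t+1}{2}$ a direct calculation gives
\[
h''(t)(1-t^2)-2h'(t)t \;=\; t\ln\!\tfrac{t+1}{2}+\tfrac{3t-1}{2}.
\]
Summing over $g\in G$ with $t_g:=u\cdot g^{-1}v$ and invoking the irreducibility of $G$ on $\mathbb{R}^3$ (which gives $\sum_g g^{-1}=0$ and hence $\sum_g t_g=0$) cancels the linear and constant parts, leaving $\sum_g t_g\ln(1+t_g)-|G|/2$. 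Reindexing $g\mapsto g^{-1}$ (using orthogonality to identify $t_g$ with $g^{-1}u\cdot v$) and then grouping by cosets of $G_u$ (each orbit element appearing exactly $|G_u|$ times) rewrites this as
\[
|G_u|\!\!\sum_{[h]\in G/G_u}\!(hu\cdot v)\ln(1+hu\cdot v) \;-\; \tfrac{|G|}{2},
\]
whose positivity, respectively negativity, is precisely condition 2(a), respectively 2(b).

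For case~1, set $u_0:=-gv$ and split the sum in \eqref{entfor} into the terms with $hu_0=-v$ and the rest. There are exactly $|G_v|$ such terms (namely $h\in G_vg^{-1}$), and for each of them the orthogonality of $h$ gives $hu\cdot v=-u\cdot u_0$ for every $u$, so these $|G_v|$ terms collapse to the single radial function $|G_v|\,h(-\cos s)$ with $s:=\arccos(u\cdot u_0)$. The remainder $\phi(u)$ is smooth at $u_0$. The key point is that $h(-\cos s)\sim\tfrac{s^2}{2}\ln(2/s)$ as $s\to 0^+$: strictly positive, vanishing at $s=0$ together with its first derivative, and dominating every $O(s^2)$ remainder once $s$ is small. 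Since $H_v$ is $C^1$ at $u_0$ (Sect.~\ref{EntropyBloch}) and $G_{u_0}$ contains a non-trivial rotation about $u_0$ in every HS-POVM of the classification (cf.\ Table~\ref{HSdim2}), the gradient $\nabla H_v(u_0)$ is a rotation-invariant vector of $T_{u_0}S^2$, hence zero; combined with the vanishing gradient of the singular piece this forces $\nabla\phi(u_0)=0$. Consequently
\[
H_v(u)-H_v(u_0) \;=\; \tfrac{2|G_v|}{|G|}\,h(-\cos s)+O(s^2) \;>\; 0
\]
for $0<s$ small, yielding a strict local minimum for $H_v$ and a strict local maximum for $\widetilde H_v$.

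The main obstacle is case~1: since the standard Hessian-based tests fail at the non-smooth point $u_0$, the argument must quantify how the $s^2\ln(1/s)$ blow-up of the singular piece dominates every smooth $O(s^2)$ perturbation while itself producing no gradient in any tangent direction. The saving observation is the exact radial symmetry of the singular contribution about $u_0$: this both reconciles the $C^1$ regularity of $H_v$ with the apparent divergence of higher derivatives and supplies a direction-independent lower bound that beats any smooth remainder.
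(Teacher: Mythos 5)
Your proof is correct, and it splits into a part that essentially reproduces the paper's argument and a part that takes a genuinely different route. For case 2 your computation is the paper's in different clothing: the paper evaluates the second derivative of $H_v$ along a single geodesic and uses the fact that $\{hgq:g\in G_u\}$ is a normalized tight frame in the plane orthogonal to $hu$ to make the answer direction-independent, which is exactly your "scalar Hessian, so read off the trace" step; the resulting criterion $\frac{2}{|G/G_u|}\sum_{[h]}(hu\cdot v)\ln(1+hu\cdot v)\gtrless 1$ and the appeal to the Modern Purkiss Principle are identical. For case 1 the two arguments diverge: the paper stays at the level of second derivatives, showing that $(H_v\circ\gamma)''(\delta)\to+\infty$ as $\delta\to 0$ uniformly over geodesics $\gamma$ through $u_0$ (the singular coset contributes $-1-(\cos\delta)(\ln(1-\cos\delta)+2)$ while the remaining cosets' second derivatives admit an explicit uniform bound), so that $H_v$ is strictly convex near $u_0$ along every geodesic. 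You instead compare function values directly: you isolate the $|G_v|$ singular terms as the radial function $\frac{2|G_v|}{|G|}h(-\cos s)\sim\frac{|G_v|}{|G|}s^2\ln(2/s)$, kill the gradient of the smooth remainder by the $G_{u_0}$-invariance argument (together with the $C^1$ regularity established in Sect.~\ref{EntropyBloch}), and let the $s^2\ln(1/s)$ term beat the $O(s^2)$ remainder. Your version is more elementary in that it avoids the uniform bound on the non-singular second derivatives; the paper's version yields the slightly stronger conclusion of local strict convexity along geodesics. Both correctly identify the logarithmic singularity of $h$ at $-1$ as the mechanism forcing a strict minimum, and your bookkeeping (exactly $|G_v|$ singular terms, indexed by $G_vg_0^{-1}$, each equal to $h(-u\cdot u_0)$) checks out.
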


\begin{proof}
Fix any geodesic (i.e.\ a great circle)
passing by $u$. Let $q$ be one of two vectors lying on the intersection of the
plane orthogonal to $u$ passing through $0$ and the geodesic. As $0$ is the
only $G$-invariant vector in $\mathbb{R}^{3}$ and, at the same time, the only
$G_{u}$-invariant element orthogonal to $u$, we have $(1/|G_u|)\sum_{g\in G}gu=\sum_{\left[  h\right]
\in G/G_{u}}hu=0={\textstyle\sum_{g\in G_{u}}}gq$.
Consider a natural parametrisation of the great circle $\gamma
:(-\pi,\pi)\rightarrow\mathbb{S}^{2}$ (throwing away $-u$) given by
$\gamma\left(  \delta\right)  :=(\sin\delta)q+(\cos\delta)u$ for $\delta
\in(-\pi,\pi)$, where $\delta$ is the measure of the angle between vectors $u$
and $\gamma\left(  \delta\right)  $. Put $w:=\gamma\left(  \delta\right)  $.
Then it follows from (\ref{h}), (\ref{entfor}) and the equality $\sum_{g\in G} gw=0$ that%
\begin{align*}
\left(  H_{v}\circ\gamma\right)  \left(  \delta\right)   &  =H_{v}(w)\\
& = \ln\frac{|Gv|}{2}+\frac{2}{|G|} \sum_{g \in G} \eta( (1+gw\cdot v)/2)\\
&  =\ln{|Gv|}+\frac{1}{|G|}\sum_{g\in G}\eta(1+gw\cdot v)\\
&  =\ln{|Gv|}+\frac{1}{|G|}\sum_{\left[  h\right]  \in G/G_{u}}\sum_{g\in
Gu}\eta(1+hgw\cdot v)\\
&  =\ln{|Gv|}+\frac{1}{|G|}\sum_{\left[  h\right]  \in G/G_{u}}\sum_{g\in
Gu}\eta(1+(\sin\delta)hgq\cdot v+(\cos\delta)hu\cdot v)\\
&  =\ln{|Gv|}+\frac{1}{|G/G_{u}|}\sum_{\left[  h\right]  \in G/G_{u}}%
f_{h}(\delta)\text{,}%
\end{align*}
where
\[
f_{h}(\delta):=\frac{1}{|G_{u}|}\sum_{g\in G_{u}}\eta(1+(\sin\delta)hgq\cdot
v+(\cos\delta)hu\cdot v)\text{.}%
\]
Let $h\in G$ be such that $hu\neq-v$. Then, for $\delta$ small enough, we get
\begin{align*}
f_{h}^{\prime}(\delta)  &  =\frac{1}{|G_{u}|}\sum_{g\in G_{u}}\eta^{\prime
}(1+(\sin\delta)hgq\cdot v+(\cos\delta)hu\cdot v)\times\\
&  \times((\cos\delta)hgq\cdot v-(\sin\delta)hu\cdot v)\text{.}%
\end{align*}
In particular $f_{h}^{\prime}(0)=0$. Moreover,%
\begin{align*}
f_{h}^{\prime\prime}(\delta)  &  =\frac{1}{|G_{u}|}\sum_{g\in G_{u}}%
\eta^{\prime\prime}(1+(\sin\delta)hgq\cdot v+(\cos\delta)hu\cdot v)\times\\
&  \times((\cos\delta)hgq\cdot v-(\sin\delta)hu\cdot v)^{2}+\\
&  +\eta^{\prime}(1+(\sin\delta)hgq\cdot v+(\cos\delta)hu\cdot v)(-(\sin
\delta)hgq\cdot v-(\cos\delta)hu\cdot v))\text{.}%
\end{align*}
\smallskip

\emph{1}) Let $\widetilde{h}u=-v$ for some $\widetilde{h}\in G$. Then $\widetilde{h}gq\cdot v=0$ and so $f_{\widetilde{h}}(\delta)$ reduces to
$$f_{\widetilde{h}}(\delta)=1/|G_u|\sum_{g\in G_u}\eta(1-\cos\delta).$$
In consequence, for $\delta\neq 0$
\[
f_{\widetilde{h}}^{\prime}(\delta)=-(\ln(1-\cos\delta)+1)\sin\delta\text{,}%
\]
and so
\[
f_{\widetilde{h}}^{\prime\prime}(\delta)=-1-(\cos\delta)(\ln(1-\cos
\delta)+2)\text{.}%
\]
Since $f_{\widetilde{h}}^{\prime}(\delta)\to 0$ as $\delta\to 0$, so $f_{\widetilde{h}}^{\prime}(0)=0$.
Moreover, $f_{\widetilde{h}}^{\prime\prime}(\delta)\to \infty$ as $\delta\to 0$.
Let us observe that there exists $1>c>0$, such that the inequality $hu\cdot v\geq-1+c$
holds for any $[h]\in G/G_{u}$,$\ \left[  h\right]  \neq\lbrack\widetilde{h}]$.
Now we can estimate $|f_{h}^{\prime\prime}(\delta)|$ as follows:%
\begin{align*}
|f_{h}^{\prime\prime}(\delta)|  &  \leq\frac{1}{|G_{u}|}\sum_{g\in G_{u}%
}\left|  \frac{((\cos\delta)hgq\cdot v-(\sin\delta)hu\cdot v)^{2}}{1+hgw\cdot
v}\right| \\
&  +\frac{1}{|G_{u}|}\sum_{g\in G_{u}}\left|  (\ln(1+hgw\cdot v)+1)(hgw\cdot
v)\right| \\
&  \leq\frac{1}{|G_{u}|}\sum_{g\in G_{u}}\left(  \frac{1}{\left|  1+hgw\cdot
v\right|  }+(|\ln(1+hgw\cdot v)|+1)\right) \\
&  \leq f\left(  1-|\sin\delta|+(c-1)\cos\delta\right)  \text{,}%
\end{align*}
for $\left|  \delta\right|  <c$, where $f\left(  x\right)  :=\frac
{1}{\left|  x\right|  }+\left|  \ln x\right|  +1$ for $x>0$.
The last inequality follows from the fact that  $f$ is decreasing in
$(0,1)$, $1 + hgw\cdot v \geq  1 - |\sin \delta| + (c-1) \cos \delta$
and $1 - |\sin \delta| + (c-1) \cos \delta \geq 0$ for $|\delta| < c$.

Thus%
\begin{align*}
(H_{v}\circ\gamma)^{\prime\prime}(\delta)  &  =\frac{1}{|G/G_{u}%
|}\Big(f_{\widetilde{h}}^{\prime\prime}(\delta)+\sum_{[h]\in G/G_{u},[h]\neq
[\widetilde{h}]}f_{h}^{\prime\prime}(\delta)\Big)\\
&  \geq g\left(  \delta\right)  \overset{\delta\rightarrow0}{\longrightarrow
}+\infty,
\end{align*}
where%
\[
g\left(  \delta\right)  :=-\frac{1+(\cos\delta)(\ln(1-\cos\delta
)+2)+(|G/G_{u}|-1)f(1-\sin\delta+(c-1)\cos\delta)}{|G/G_{u}|}%
\]
for $\delta>0$. In particular, $(H_v \circ\gamma)'(0) = 0$ and there is $\varepsilon>0$ such that $(H_{v}%
\circ\gamma)^{\prime\prime}(\delta)>0$ for $\left|  \delta\right|
<\varepsilon$. Hence, one can find a neighbourhood $\mathcal{V}\subset S^{2}$ of
$u$ such that for any geodesic passing by $u$, $H_{v}$ is strictly convex on its part
contained in $\mathcal{V}$ and has minimum at $u$. Consequently,
$H_{v}(u)>H_{v}(w)$ for every $w\in\mathcal{V}$, $w\neq u$, which completes
the proof of 1).

\smallskip
\emph{2}) Assume that $u\neq-gv$ for every $g\in G$ and the linear isotropy group
$\left\{  T_{u}g:g\in G_{u}\right\}  $ acts irreducibly on $T_{u}S^{2}$. Then for every $h\in G$ we have
\begin{align*}
f_{h}^{\prime\prime}(0)  &  =\frac{1}{|G_{u}|}\sum_{g\in G_{u}}(\eta
^{\prime\prime}(1+hu\cdot v)(hgq\cdot v)^{2}+\eta^{\prime}(1+hu\cdot
v)(-hu\cdot v))\\
&  =\eta^{\prime}(1+hu\cdot v)(-hu\cdot v)+\eta^{\prime\prime}(1+hu\cdot
v)\frac{1}{|G_{u}|}\sum_{g\in G_{u}}(hgq\cdot v)^{2}\\
&  =(hu\cdot v)(\ln(1+hu\cdot v)+1)-\frac{1}{1+hu\cdot v}\frac{1}%
{2}(1-(hu\cdot v)^{2})\\
&  =(hu\cdot v)(\ln(1+hu\cdot v)+3/2)-1/2\text{,}%
\end{align*}
where the last but one identity follows from the fact that $\{hgq:g\in
G_{u}\}$ is  a~normalized tight frame in $S^{2}$ contained in the plane
orthogonal to $hu$ for each $h\in G_{u}$. Thus we obtain
\[
(H_{v}\circ\gamma)^{\prime\prime}(0)=\frac{1}{|G/G_{u}|}\sum_{\left[
h\right]  \in G/G_{u}}(hu\cdot v)\ln(1+hu\cdot v)-1/2
\]
and 2) follows from the Modern Purkiss Principle.
\end{proof}

If $\Pi$ is a HS-POVM, we can assume that $G$ is one of the following groups:
$D_{nh}$, $T_{d}$, $O_{h}$ or $I_{h}$, and the Bloch vector $v$ of the fiducial
vector lies in the maximal strata, consisting of points where the rotation
axes of the group intersect the Bloch sphere.  For $D_{nh}$ group
we have one $n$-fold and $n$ $2$-fold rotation axes ($2n+2$ points: a digon
and two regular $n$-gons); for $T_{d}$ group: three $2$-fold, four $3$-fold
rotation axes ($14$ points: an octahedron and two dual tetrahedra); for
$O_{h}$ group: six $2$-fold, four $3$-fold, three $4$-fold rotation axes ($26$
points: a~cuboctahedron, a~cube and an octahedron); for $I_{h}$ group: fifteen
$2$-fold, ten $3$-fold, six $5$-fold rotation axes ($62$ points: an
icosidodecahedron, a dodecahedron and an icosahedron). The character of these
singularities is described by the following proposition.

\begin{proposition}\label{localextr}
In the situation above, singular points of type I are minima (resp. maxima),
of type II maxima (resp. minima), and of type III saddle points for $H_{B}$
(resp. $\widetilde{H}_{B}$).
\end{proposition}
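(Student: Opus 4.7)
The plan is to handle the three types in turn, reducing Types I and II to Proposition \ref{criterion} and treating Type III by a direct local Hessian analysis. Because $H_{B}+\widetilde H_{B}=\ln k$, minima of $H_{B}$ correspond to maxima of $\widetilde H_{B}$ and vice versa, so it suffices to determine the character with respect to $H_{B}$. Type I is then immediate from part~(1) of Proposition \ref{criterion}: any $u$ with $u=-gv$ for some $g\in G$ is a local minimizer of $H_{v}$.

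For Type II, since $G_{u}$ acts irreducibly on $T_{u}S^{2}$, part~(2) of Proposition \ref{criterion} applies, and the question reduces to verifying that
\[
\Sigma(u):=\frac{2}{|G/G_{u}|}\sum_{[h]\in G/G_{u}}(hu\cdot v)\ln(1+hu\cdot v)<1,
\]
which then forces $u$ to be a local maximum of $H_{v}$. This is a finite calculation to be carried out case-by-case for each of the eight HS-POVMs classified in Sec.~\ref{HSPOVM2}: one enumerates the orbit $Gv$, evaluates the inner products $hu\cdot v$ as cosines of known polyhedral angles, and sums the resulting logarithmic terms. The value depends only on the combinatorial data of the polyhedron and on which rotation-axis class $u$ belongs to, and in each case the inequality $\Sigma(u)<1$ can be checked symbolically.

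Type III requires a separate argument since the Modern Purkiss Principle no longer applies. At such a $u$ the stabilizer $G_{u}\cong C_{2v}$ preserves an orthogonal splitting $T_{u}S^{2}=L_{1}\oplus L_{2}$ given by the intersections of $T_{u}S^{2}$ with the two reflection planes in $G_{u}$, and the gradient of the $G$-invariant function $H_{v}$ vanishes at $u$ because $0$ is the only $G_{u}$-fixed vector in $T_{u}S^{2}$. Equivariance forces the Hessian to be diagonal in the splitting $L_{1}\oplus L_{2}$, so it suffices to compute the two principal curvatures $(H_{v}\circ\gamma_{i})''(0)$ for unit-speed geodesics $\gamma_{1},\gamma_{2}$ along $L_{1},L_{2}$ and to verify that they carry opposite signs. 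The computation imitates the second-derivative analysis in the proof of Proposition \ref{criterion}(2); the key difference is that the tight-frame identity $\tfrac{1}{|G_{u}|}\sum_{g\in G_{u}}(gq\cdot v)^{2}=\tfrac12(1-(u\cdot v)^{2})$ is unavailable, and the average instead depends on the choice of $q\in L_{i}$. Exactly this asymmetry between the two coordinate directions produces the saddle behaviour.

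The main obstacle is the explicit bookkeeping required for Types II and III: for each of the eight HS-POVMs one must list the inert points of each type, and in the Type III case identify the two $G_{u}$-invariant tangent lines before evaluating the pair of Hessian eigenvalues. No single closed form unifies all eight cases, but in every instance the computation reduces to a finite sum of elementary trigonometric quantities coming from the polyhedral angles, and the desired sign pattern can be read off directly.
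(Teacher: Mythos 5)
Your proposal follows essentially the same route as the paper: Types I and II are reduced to parts (1) and (2) of Proposition \ref{criterion}, and Type III is handled by exhibiting two great circles (here, those tangent to the two $G_{u}$-invariant lines in $T_{u}S^{2}$) along which the second derivatives have opposite signs. The paper's own proof is only a sketch that explicitly omits the case-by-case verifications (the sign of $\Sigma(u)$ for Type II and the two Hessian eigenvalues for Type III) because the proposition is not used in the sequel, and your write-up leaves exactly the same finite computations unexecuted, so it sits at the same level of completeness.
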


The proof of this fact is quite elementary. From Proposition \ref{criterion}%
.1) we deduce the character of singular points of type I. For type II it is
enough to use Proposition \ref{criterion}.2). For type III one have to
indicate two great circles such that the second derivatives along these curves
have different sign. As we will not use this fact in the sequel, we omit the details.

Hence the points of type I are the natural candidates for minimizing $H_{B}$
(resp. maximizing $\widetilde{H}_{B}$), and indeed, we will show in the next
section that they are global minimizers (resp. maximizers). However, if a
POVM\ is merely symmetric, the global extrema of entropy functions may also
occur in other (non inert) points. An example of this phenomenon can be found in
\cite{Ghietal03}, see also \cite{Bosetal12,ZozBosPor13}. Let us consider a symmetric (but
non-highly symmetric) POVM generated by the set of four Bloch vectors forming
a rectangle $B=\left\{  v_{1},-v_{1},v_{2},-v_{2}\right\}  $, where
$v_{1},v_{2}\in S^{2}$, $v_{1}\notin\left\{  -v_{2},v_{2}\right\}  $, and
$v_{1}\cdot v_{2}\neq0$, with $\operatorname{Sym}\left(  B\right)  \simeq
D_{2h}$ having three mutually perpendicular 2-fold rotation axis. In this way we get
six vectors in $S^{2}$ that are necessarily critical for $H_{B}$ and
$\widetilde{H}_{B}$: two perpendicular both to $v_{1}$ and to $v_{2}$, and
four lying in the plane generated by $v_{1}$ and $v_{2}$, proportional to $\pm
v_{1}\pm v_{2}$. The former are local maxima of $H_{B}$, and the latter either
local minima or saddle points, depending on the value of the parameter
$\alpha:=\arccos\left(  v_{1}\cdot v_{2}\right)  \in\left(  0,\pi\right)  $,
$\alpha \neq \pi /2$.
Let $\overline{\alpha}\approx1.17056$ be a unique solution of the equation
$(\cos(\overline{\alpha}/2))\ln(\tan^{2}(\overline{\alpha}/4))  =-2$ in the interval $\left(  0,\pi/2\right)  $. In
\cite{Ghietal03} the authors showed that for $\alpha\in\left(  0,\overline
{\alpha}\right]  $ the function $H_{B}$ (resp. $\widetilde{H}_{B}$) attains
the global minimum (resp. maximum) at the points $\pm\left(  v_{1}%
+v_{2}\right)  /\left(  2\left|  \cos\left(  \alpha/2\right)  \right|
\right)  $, whereas $\pm\left(  v_{1}-v_{2}\right)  /\left(  2\left|
\sin\left(  \alpha/2\right)  \right|  \right)  $ are saddle points, and for
$\alpha\in\left[  \pi-\overline{\alpha},\pi\right)  $ the situation is
reversed. However, for $\alpha\in\left(  \overline{\alpha},\pi-\overline
{\alpha}\right)  $ all these inert states become saddles, and two pairs of new
global minimizers emerge, lying symmetrically with respect to the old ones.
The appearance of this pitchfork bifurcation phenomenon shows also that, in
general, one cannot expect an analytic solution of the minimization problem in
a merely symmetric case. This is why we restrict our attention to highly
symmetric POVMs.

Note also that for highly symmetric POVMs we can use, instead of full symmetry
group $\operatorname*{Sym}\left(  B\right)$, the rotational symmetry group of $B$
that acts transitively on $B$, i.e., $C_{n}$ for the regular $n$-gon, $T$ for the
tetrahedron, $O$ for the cuboctahedron, cube and octahedron,  and $I$ for the
icosidodecahedron, dodecahedron and icosahedron.

\section{Global minima of entropy for highly symmetric POVMs in dimension two}
\label{Gloext}

\subsection{The minimization method based on the Hermite interpolation}
\label{HerInt}

In order to prove that the antipodal points to the Bloch vectors of POVM
elements are not only local but also global minimizers, we shall use a method
based on the Hermite interpolation.

Consider a sequence of points
$a\leq t_{1}<t_{2}<\ldots<t_{m}\leq b$, a sequence of positive integers
$k_{1},k_{2},\ldots,k_{m}$, and a real valued function
$f\in C^{D}([a,b])$, where $D:=k_{1}+k_{2}+\ldots+k_{m}$.
We are looking for a polynomial $p$ of degree less than $D$ that agree
with $f$ at $t_{i}$ up to a derivative of order $k_{i}-1$ (for $1\leq i\leq
m$), that is,
\begin{equation}
p^{(k)}(t_{i})=f^{(k)}(t_{i}),\qquad0\leq k<k_{i}\text{.} \label{Her}%
\end{equation}
The existence and uniqueness of such polynomial follows from the injectivity
(and hence also the surjectivity) of a linear map $\Phi:\mathbb{R}_{<D}\left[
X\right]  \rightarrow\mathbb{R}^{D}$ given by $\Phi\left(  p\right)
:=(p(t_{1}),p^{\prime}(t_{1}),\ldots,p^{(k_{1}-1)}(t_{1}),\ldots
,p(t_{m}),\ldots,p^{(k_{m}-1)}(t_{m}))$. We will also use the following
well-known formula for the error in Hermite interpolation \cite[Sec.~2.1.5]{StoBur02}:

\begin{lemma}
\label{hermite}For each $t\in\left(  a,b\right)  $ there exists $\xi\in\left(
a,b\right)  $ such that $\min\{t,t_{1}\}<\xi<\max\{t,t_{m}\}$ and%
\begin{equation}
\label{hererr}
f(t)-p(t)=\frac{f^{(D)}(\xi)}{D!}\prod_{i=1}^{m}(t-t_{i})^{k_{i}}\text{.}%
\end{equation}
\end{lemma}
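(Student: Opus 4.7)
The proof follows the standard Rolle's theorem argument adapted to handle the higher-order vanishing conditions. The plan is as follows.

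First, I would dispose of the trivial cases: if $t$ coincides with one of the nodes $t_i$, then both sides of \eqref{hererr} vanish (the left side by interpolation condition \eqref{Her} with $k=0$, the right side from the factor $(t-t_i)^{k_i}$), and any $\xi$ in the required interval works. So assume $t \neq t_i$ for all $i$. Set $I := (\min\{t,t_1\}, \max\{t,t_m\})$, and define the constant
\begin{equation*}
K := \frac{f(t)-p(t)}{\prod_{i=1}^{m}(t-t_{i})^{k_{i}}},
\end{equation*}
which is well defined since the denominator is nonzero. Introduce the auxiliary function
\begin{equation*}
g(s) := f(s) - p(s) - K\prod_{i=1}^{m}(s-t_{i})^{k_{i}}, \qquad s \in [a,b].
\end{equation*}
By construction $g(t)=0$, and for each $i$ the interpolation condition \eqref{Her} together with the factor $(s-t_i)^{k_i}$ ensures $g^{(\ell)}(t_i)=0$ for $0\le \ell < k_i$.

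Next, I apply a repeated Rolle argument, counting multiplicities. The function $g$ has a zero of multiplicity at least $k_i$ at each $t_i$ plus a simple zero at $t$, giving in total at least $D+1$ zeros in the closure of $I$ counted with multiplicity. A standard induction on the derivative order (each application of Rolle between consecutive zeros of $g^{(j)}$ produces a zero of $g^{(j+1)}$, while zeros of multiplicity $\geq 2$ descend to $g^{(j+1)}$ with multiplicity one less) shows that $g^{(D)}$ has at least one zero $\xi \in I$.

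Finally, I compute $g^{(D)}(\xi)$ explicitly. Since $\deg p < D$, we have $p^{(D)}\equiv 0$. The polynomial $\prod_{i=1}^{m}(s-t_{i})^{k_{i}}$ is monic of degree exactly $D$, so its $D$-th derivative is the constant $D!$. Therefore
\begin{equation*}
0 = g^{(D)}(\xi) = f^{(D)}(\xi) - K \cdot D!,
\end{equation*}
which yields $K = f^{(D)}(\xi)/D!$, and substituting back gives \eqref{hererr}. The only step requiring real care is the multiplicity-counting version of Rolle's theorem, which is routine but should be stated explicitly to justify the existence of $\xi$; the rest is a direct computation.
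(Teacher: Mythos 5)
Your proof is correct and is precisely the standard argument the paper relies on: the paper does not prove Lemma \ref{hermite} itself but cites Stoer--Bulirsch and notes only that the proof proceeds by repeated use of Rolle's theorem, which is exactly your auxiliary-function construction $g(s)=f(s)-p(s)-K\prod_{i}(s-t_i)^{k_i}$ with $D+1$ zeros counted with multiplicity. The one step you flag as needing explicit justification (the multiplicity-counting Rolle induction) is indeed the only nontrivial point, and your treatment of it is adequate.
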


Now we apply this general method in our situation. We will interpolate the function
$h:\left[  -1,1\right]  \rightarrow\mathbb{R}^{+}$ defined by (\ref{h}),
choosing the interpolation points from the set $T:=\{-gv\cdot v|g\in
G\}\subset\lbrack-1,1]$, where $v$ is the Bloch vector representation of the
fiducial vector and $-v$ is supposed to be the Bloch vector of a global
minimizer. We must distinguish two situations: either the inversion $-I\in G$
(equivalently $-v\in B$) or not. The former is the case for $G=D_{nh}$ (for
even $n$), $O_{h}$, $I_{h}$, and then $1\in T$, the latter for $G=D_{nh}$ (for
odd $n$), $T_{d}$, and then $1\notin T$. After reordering the elements of $T$
we obtain an increasing sequence $\{t_{i}\}_{i=1}^{m}$, where $m := \left|  T\right|$.
In particular, $t_{1}=-1$. We are looking for a polynomial $p_{v}$ that matches the
values of $h$ at all points from $T$ and the values of $h^{\prime}$ at all points but
$-1$ and, possibly, $1$, if $1\in T$, i.e. such that (\ref{Her}) holds for
$f=h$ with%
\begin{equation}
k_{i}:=\left\{
\begin{tabular}
[c]{ll}%
$1$, & $\text{\text{if}\ }t_{i}\in\left\{  {-1,1}\right\}  $\\
$2$, & $\text{\text{otherwise}}$%
\end{tabular}
\right.  \text{.}%
\end{equation}
Then $\deg p_{v}<D(v):=2m-2$, if $1\in T$, and $\deg p_{v}<D(v):=2m-1$, otherwise.
Though $h$ is not differentiable at $t_1$, we still can use (\ref{hererr})
to estimate the interpolation error, as proof of Lemma~\ref{hermite} is based
on repeated usage of Rolle's Theorem.

If $1\in T$, then $t_{m}=1$ and we have
\begin{equation}
\prod_{i=1}^{m}(t-t_{i})^{k_{i}}=(t+1)(t-1)\prod_{i=2}^{m-1}(t-t_{i})^{2}%
\leq0\text{,}%
\end{equation}
for $t\in\left[  -1,1\right]  $. Similarly, if $1\notin T$, then $t_{m}<1$ and%
\begin{equation}
\prod_{i=1}^{m}(t-t_{i})^{k_{i}}=(t+1)\prod_{i=2}^{m}(t-t_{i})^{2}\geq0
\end{equation}
for $t\in\left[  -1,1\right]  $. Moreover, inequalities above turn into
equalities only for $t\in T$. Furthermore, as all the derivatives of $h$ of
even order are strictly negative in $\left(  -1,1\right)  $ and these of odd
order greater than $1$ are strictly positive, we get
\begin{equation}
h^{(D\left(  v\right)  )}(\xi)=\left\{
\begin{tabular}
[c]{ll}%
$h^{(2m-2)}(\xi)<0$, & if $1\in T$\\
$h^{(2m-1)}(\xi)>0$, & if $1\notin T$%
\end{tabular}
\right.
\end{equation}
for each $\xi\in\left(  -1,1\right)  $. Hence and from Lemma \ref{hermite} the
interpolating polynomial $p_{v}$ fulfills $p_{v}(t)=h(t)$ if and only if $t\in
T$ and it interpolates $h$ from below, see the illustration of this for the
octahedral POVM in Fig.~\ref{F1}.

\begin{figure}[h]
\begin{center}
\includegraphics[scale=0.7]{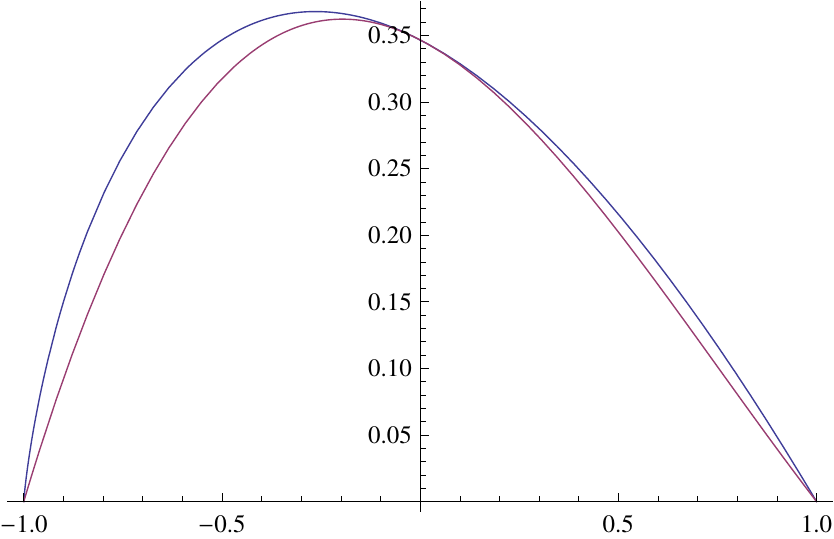}
\end{center}
\caption{The cubic polynomial function $p_{v}$ (\emph{purple}) interpolating $h$
(\emph{violet}) from below for the octahedral measurement, with
$t_{1}=-1$, $t_{2}=0$ and $t_{3}=1$.}%
\label{F1}
\end{figure}

Let us define now a $G$-invariant polynomial function $P_{v}:\mathbb{R}%
^{3}\rightarrow\mathbb{R}$ replacing $h$ in (\ref{entfor}) by its
interpolation polynomial $p_{v}$, i.e.,%
\begin{equation}
P_{v}(u):=\ln\frac{|Gv|}{2}+\frac{2}{|G|}\sum_{g\in G}p_{v}(gv\cdot u)
\end{equation}
for $u\in\mathbb{R}^{3}$. Combining the above facts, we get
\begin{equation}\label{interP}
H_{v}(u)=\ln\frac{|Gv|}{2}+\frac{2}{|G|}\sum_{g\in G}h(gv\cdot u)\geq
\ln\frac{|Gv|}{2}+\frac{2}{|G|}\sum_{g\in G}p_{v}(gv\cdot u)=P_{v}(u)
\end{equation}
for $u\in S^{2}$, and $H_{v}(-gv)=P_{v}(-gv)$ for $g\in G$.

In consequence, now it is enough to show that $-v$ is a global minimizer of $P$
(and hence all the elements of its orbit $\left\{  -gv:g\in G\right\}  $ are),
because then we have $H_{v}(u)\geq P_{v}(u)\geq P_{v}(-v)=H_{v}(-v)$ for all
$u\in S^{2}$. This method of finding global minima was inspired by the one used
in \cite{Oreetal11} for $G=T_{d}$, where $h$ is constant. Note, however, that
a similar technique was used by Cohn, Kumar and Woo \cite{CohCum07,CohWoo12}
to solve the problem of potential energy minimization on the unit sphere. The
whole idea can be traced back even further to \cite{Yud92} and \cite{And97}.
Our method has been already adapted in \cite{Arn15} to provide some upper bounds
for the informational power of $t$-design POVMs for $1 \leq t \leq 5$.

Of course, the lower the degree of the interpolating polynomial $p_{v}$ is,
the easier it is to find the minima of $P_{v}$, as $\deg P_{v}\leq\deg p_{v}$.
The last quantity in turn depends on the cardinality of $T:=\{-gv\cdot
v|g\in G\}$, that can be calculated by analyzing double cosets of isotropy
subgroups of any subgroup $K\subset G\cap SO(3)$ acting transitively on $B$,
because $T=\{-gv\cdot v|g\in K\}$ and for $h,g\in K$, if $h$ is in a double
coset $K_{v}gK_{v}$ or $K_{v}g^{-1}K_{v}$, then $hv\cdot v=gv\cdot v$. Hence
$\left|  T\right|  \leq n(v):=n_{s}(v)+\frac{1}{2}n_{a}(v)$, where $n_{s}(v)$
is the number of self-inverse double cosets of $K_{v}$, i.e.\ the cosets
fulfilling $K_{v}gK_{v}=K_{v}g^{-1}K_{v}$, and $n_{a}(v)$ is the number of non
self-inverse ones. Thus
\begin{equation}
\deg p_{v}\leq\left\{
\begin{tabular}
[c]{ll}%
$2n(v)-3$, & if $-v\in Kv$\\
$2n(v)-2$, & if $-v\notin Kv$%
\end{tabular}
\right.  \text{.} \label{degree}%
\end{equation}
Moreover, for $g\in K$, using the well-known formula for the cardinality of a
double coset, see, e.g.\ \cite[Prop.~5.1.3]{Bog08}, we have $\left|
K_{v}gK_{v}\right|  =\left|  K_{v}\right|  \left|  K_{v}/\left(  K_{v}\cap
K_{gv}\right)  \right|  =\left|  K_{v}\right|  $, if $gv=v$ or $gv=-v$, and
$\left|  K_{v}\right|  ^{2}$, otherwise. Hence, if $-v\in Kv$, then $\left|
Kv\right|  \left|  K_{v}\right|  =\left|  K\right|  =2\left|  K_{v}\right|
+\left(  n_{s}(v)-2\right)  \left|  K_{v}\right|  ^{2}+n_{a}(v)\left|
K_{v}\right|  ^{2}$, and so $n_{s}(v)+n_{a}(v)=\left(  \left|  Kv\right|
-2\right)  /\left|  K_{v}\right|  +2$. Analogously, if $-v\notin Kv$, then
we have $n_{s}(v)+n_{a}(v)=\left(  \left|  Kv\right|  -1\right)  /\left|
K_{v}\right|  +1$. Using these facts and (\ref{degree}) we get finally%
\begin{equation}
\deg p_{v}\leq\left\{
\begin{tabular}
[c]{ll}%
$\frac{\left|  Kv\right|  -2}{\left|  K_{v}\right|  }+n_{s}(v)-1$, & if $-v\in
Kv$ \\ \noalign{\vskip 1mm}
$\frac{\left|  Kv\right|  -1}{\left|  K_{v}\right|  }+n_{s}(v)-1$, & if
$-v\notin Kv$%
\end{tabular}
\right.  \text{.} \label{degreebound}%
\end{equation}
Applying (\ref{degreebound}) to HS-POVMs in dimension two we get the
upper bounds for the degree of interpolating polynomials gathered in Tab.~\ref{upperbounds}.

\begin{table}[h]
\caption{HS-POVMs in dimension two: upper bounds for the number of interpolating points ($n(v)$) and the degree of interpolating polynomial.}
\label{upperbounds}
\centering
\begin{tabular}{cccccccc}
\hline\noalign{\smallskip}
$Kv$ & $\left|  Kv\right|  $ & $K$ & $K_{v}$ & $n_{a}(v)$ & $n_{s}(v)$  & $n(v)$ & $\deg
p_{v}\leq$\\
\noalign{\smallskip}\hline\noalign{\smallskip}
regular $n$-gon ($n$-even) & $n$ & $C_{n}$ & $C_{1}$ & $n-2$ & $2$  & $n/2+1$ &
$n-1$\\
regular $n$-gon ($n$-odd) & $n$ & $C_{n}$ & $C_{1}$ & $n-1$ & $1$  & $n/2+1/2$ & $n-1$\\
tetrahedron & $4$ & $T$ & $C_{3}$ & $0$ & $2$ & $2$ & $2$\\
octahedron & $6$ & $O$ & $C_{4}$ & $0$ & $3$ & $3$ & $3$\\
cube & $8$ & $O$ & $C_{3}$ & $0$ & $4$  & $4$ & $5$\\
cuboctahedron & $12$ & $O$ & $C_{2}$ & $4$ & $3$  & $5$ & $7$\\
icosahedron & $12$ & $I$ & $C_{5}$ & $0$ & $4$  & $4$ & $5$\\
dodecahedron & $20$ & $I$ & $C_{3}$ & $4$ & $4$  & $6$ & $9$\\
icosidodecahedron & $30$ & $I$ & $C_{2}$ & $14$ & $2$  & $9$ & $15$\\
\noalign{\smallskip}\hline
\end{tabular}
\end{table}

To find global minimizers of $P_{v}$ we can express the polynomial in terms of
primary and secondary invariants for the corresponding ring of $G$-invariant
polynomials. In fact, as we will see in the next section, only the former will
be used.

\subsection{Group invariant polynomials}\label{invpol}

The material of this subsection is taken from \cite[Ch.\ 3]{DerKem02} and
\cite{JarMicSha84}, see also \cite{FriHau10}.
Let $G$ be a finite subgroup of the general linear group
$GL_{n}\left(  \mathbb{R}\right)  $. By $\mathbb{R}\left[  x_{1},\ldots
,x_{n}\right]  ^{G}$ we denote the ring of $G$-invariant real polynomials in
$n$ variables. Its properties were studied by Hilbert and Noether at the
beginning of twentieth century. In particular, they showed that $\mathbb{R}%
\left[  x_{1},\ldots,x_{n}\right]  ^{G}$ is finitely generated as an
$\mathbb{R}$-algebra. Later, it was proven that it is possible to represent
each $G$-invariant polynomial in the form ${\textstyle\sum_{j=1}^{m}}
P_{j}\left(  \theta_{1},\ldots,\theta_{n}\right)  \eta_{j}$, where $\theta
_{1},\ldots,\theta_{n}$ are algebraically independent homogeneous $G$-invariant
polynomials called \textsl{primary invariants}, forming so called
\textsl{homogeneous system of parameters}, $\eta_{1}=1,\ldots,\eta_{m}$ are
$G$-invariant homogeneous polynomials called \textsl{secondary invariants}, and
$P_{j}$ ($j=1,\ldots,m$) are elements from $\mathbb{R}\left[  x_{1}%
,\ldots,x_{n}\right]  $. Moreover, $\eta_{1},\ldots,\eta_{m}$ can be chosen in
such a way that they generate  $\mathbb{R}\left[  x_{1},\ldots,x_{n}\right]
^{G}$ as a free module over $\mathbb{R}\left[  \theta_{1},\ldots,\theta
_{n}\right]  $. Both sets of polynomials combined form so called
\textsl{integrity basis}. Note that neither primary nor secondary invariants
are uniquely determined. If $m=1$, we call the basis \textsl{regular} and
the group $G$ \textsl{coregular}. The invariant polynomial functions on
$\mathbb{R}^{n}$ separate the $G$-orbits. In consequence, the map
$\mathbb{R}^{n}/G\ni Gx\rightarrow\left(  \theta_{1}\left(  x\right)
,\ldots,\theta_{n}\left(  x\right), \eta_{2}\left(  x\right)  ,\ldots
,\eta_{m}\left(  x\right)  \right)  \in\mathbb{R}^{n+m-1}$ maps bijectively
the orbit space onto an $n$-dimensional connected closed semialgebraic subset
of $\mathbb{R}^{n+m-1}$. There is also a correspondence between the orbit
stratification of $\mathbb{R}^{n}/G$ and the natural stratification of this
semi-algebraic set into the primary strata, i.e.\ connected semialgebraic
differentiable varieties. If $G\subset O\left(  n\right)  $ is a coregular
group acting irreducibly on $\mathbb{R}^{n}$, we may assume that $\theta
_{1}\left(  x\right)  =\sum_{i=1}^{n}x_{i}^{2}$ is a non-constant invariant
polynomial of the lowest degree. Then the \textsl{orbit map} $\omega
:S^{n-1}/G\ni\nolinebreak Gx\rightarrow\left(  \theta_{2}\left(  x\right)  ,\ldots
,\theta_{n}\left(  x\right)  \right)  \in\mathbb{R}^{n-1}$ is also one-to-one
and its range is a semialgebraic $\left(  n-1\right)  $-dimensional set. In
consequence, the minimizing of  a $G$-invariant polynomial $P\left(
x_{1},\ldots,x_{n}\right)  $ on $S^{n-1}$ is equivalent to the minimizing of
the respective polynomial $P_{1}\left(  \theta_{1},\ldots,\theta_{n}\right)  $
on the range of $\omega$. In the 1980s Abud and Sartori proposed a general
procedure for finding the algebraic equations and inequalities defining this
set and its strata, and thus also a general scheme for finding minima of
$P_{1}$ on the range of the orbit map, see \cite{SarVal96,SarVal03}.

Let us now take a closer look at the $G$-invariant
polynomials of three real variables, which will be of our
particular interest while considering HS-POVMs in dimension two.
An element from $GL_{n}\left(  \mathbb{R}\right)  $ is called a
\textsl{pseudo-reflection}, if its fixed-points space has codimension one. The
classical Chevalley-Shephard-Todd theorem says that every
\textsl{pseudo-reflection }(i.e.\ generated by pseudo-reflections)
\textsl{group} is coregular. As all the symmetry groups of polyhedra
representing HS-POVMs\ in dimension two ($D_{nh}$, $T_{d}$, $O_{h}$, $I_{h}$)
are pseudo-reflection groups, the interpolating polynomials can be expressed
by their primary invariants listed below. Put $\rho:=x^{2}+y^{2}$, $\gamma
_{n}:=\Re\left(  x+iy\right)  ^{n}$, $I_{2}:=x^{2}+y^{2}+z^{2}$, $I_{3}:=xyz$,
$I_{4}:=x^{4}+y^{4}+z^{4}$, $I_{6}:=x^{6}+y^{6}+z^{6}$, $I_{6}^{\prime}
:=(\tau^{2}x^{2}-y^{2})(\tau^{2}y^{2}-z^{2})(\tau^{2}z^{2}-x^{2})$ and
$I_{10}:=(x+y+z)(x-y-z)(y-z-x) (z-y-x)(\tau^{-2}x^{2}-\tau^{2}y^{2})(\tau
^{-2}y^{2}-\tau^{2}z^{2})(\tau^{-2}z^{2}-\tau^{2}x^{2})$, where $\tau
:=(1+\sqrt{5})/2$ (the \emph{golden ratio}). Note that the indices coincide with the
degrees of invariant polynomials. Then (notation and results are taken from
\cite{JarMicSha84}) for the canonical representations of these groups, i.e.\
if coordinates $x$, $y$ and $z$ are so chosen that the origin is the fixed
point for the group action and: the $x$ and $z$ axes are $2$- and $n$-fold
axes, respectively ($D_{nh}$); the $3$-fold axes pass through vertices of a
tetrahedron at $(1,1,1)$, $(1,-1,-1)$, $(-1,1,-1)$, $(-1,-1,1)$ ($T_{d}$);
$x$, $y$ and $z$ axes are the $4$-fold axes ($O_{h}$); the $5$-fold axes pass
through the vertices of an icosahedron at $(\pm\tau,\pm1,0)$, $0,\pm\tau
,\pm1)$, $(\pm1,0,\pm\tau)$ ($I_h$), we get the primary invariants listed in Tab.~\ref{priminv}:

\begin{table}[h]
\caption{Primary invariants for four point groups.}
\label{priminv}
\centering
\begin{tabular}{cc}
\hline\noalign{\smallskip}
group & primary invariants  \\
\noalign{\smallskip}\hline\noalign{\smallskip}
$D_{nh}$ & $z^{2},\rho,\gamma_{n}$ \\
$T_{d}$ & $I_{2},I_{3},I_{4}$ \\
$O_{h}$ & $I_{2},I_{4},I_{6}$ \\
$I_{h}$ & $I_{2},I_{6}^{\prime},I_{10}$ \\
\noalign{\smallskip}\hline
\end{tabular}
\end{table}

In \cite{JarMicSha84} the stratification of the range of the orbit map is
analytically described in all these cases.

\subsection{The main theorem}\label{mainhs2}

\begin{theorem}
\label{main}For HS-POVMs in dimension two the points lying on the orbit of
the point antipodal to the Bloch vector of the fiducial vector (that is the
Bloch vector of the state orthogonal to the fiducial vector) are the only global
minimizers (resp. maximizers) for the entropy of measurement (resp. the
relative entropy of measurement).
\end{theorem}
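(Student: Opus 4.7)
My plan is to combine the Hermite interpolation bound constructed in Section~\ref{HerInt} with the structure of invariant polynomials recalled in Section~\ref{invpol}, reducing the global minimization of $H_v$ on $S^2$ to a finite algebraic problem separately for each of the eight HS-POVM types classified in Section~\ref{HSPOVM2}. The inequality $H_v(u)\geq P_v(u)$ for all $u\in S^2$, with equality exactly on the antipodal orbit $\{-gv:g\in G\}$, together with the local-minimum statement of Proposition~\ref{localextr}, reduces the theorem to showing that $-v$ is a global minimizer of the polynomial $P_v|_{S^2}$ and that its minimum is attained only on $\{-gv:g\in G\}$. Indeed, one then obtains
\[
H_v(u) \;\geq\; P_v(u) \;\geq\; P_v(-v) \;=\; H_v(-v) \qquad (u\in S^2),
\]
and uniqueness propagates from $P_v$ to $H_v$ because equality in the first inequality forces $gv\cdot u\in T$ for all $g\in G$, which by the interpolation construction pins $u$ to the antipodal orbit.

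The second step is to exploit the degree bounds of Table~\ref{upperbounds} together with the primary invariants of Table~\ref{priminv}. Since $P_v$ is $G$-invariant with $\deg P_v\leq\deg p_v$, it lies in the ring $\mathbb{R}[x,y,z]^G$, which for each of the four relevant pseudo-reflection groups $D_{nh},T_d,O_h,I_h$ is coregular and generated by three algebraically independent primary invariants. Restricting to $S^2$ trivializes the lowest-degree invariant ($I_2\equiv 1$, or $z^2+\rho\equiv 1$ in the axial case), so $P_v|_{S^2}$ descends to a polynomial in just two invariants, and the minimization is transported via the orbit map to a low-degree polynomial optimization over the two-dimensional semialgebraic image in $\mathbb{R}^2$, whose strata are explicitly described in \cite{JarMicSha84}.

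Third, I would handle the eight cases individually, using the table of degrees as a complexity gauge. The tetrahedral case is already in \cite{Oreetal11}: the bound $\deg p_v\leq 2$ forces $P_v$ to be constant on $S^2$, so the antipodal orbit attains the common value. For the octahedron $P_v|_{S^2}$ is affine in $I_3=xyz$ with a sign that places the minimum exactly on the cube vertices antipodal to the octahedral orbit. The regular $n$-gon case collapses, via the $D_{nh}$-invariants $z^2,\rho,\gamma_n$, to an elementary one-variable analysis on the sphere. For the cube and icosahedron (degree $\leq 5$, one extra invariant $I_4$ or $I_6'$) the resulting univariate inequality on the orbit-map image is verified by checking that the antipodal orbit sits at the extreme point of the interval swept by the relevant invariant; comparison at the remaining inert orbits (Proposition~\ref{localextr}) confirms strict inequality elsewhere.

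The main obstacle I expect to be the icosidodecahedral case, where Table~\ref{upperbounds} allows $\deg p_v\leq 15$ and the orbit-map image in $(I_6',I_{10})$-space is the richest semialgebraic region, with coefficients living in $\mathbb{Z}[\tau]$. Even there, only two primary invariants are free on $S^2$, so $P_v-P_v(-v)$ is a bivariate polynomial of bounded degree that one must show is nonnegative on the explicit semialgebraic range, vanishing only on the prescribed orbit; the dodecahedral and cuboctahedral cases are similar in structure but lower in degree. The bulk of the actual computation will be concentrated in these three cases, and a key sanity check throughout is that the inert comparison forced by Proposition~\ref{localextr} already matches the predicted minimizer, so the polynomial argument only needs to rule out excursions into the interior of the orbit-map image.
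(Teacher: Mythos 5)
Your overall strategy is exactly the paper's: the Hermite bound $H_v\geq P_v$ with equality on the antipodal orbit, the degree bounds of Table~\ref{upperbounds}, and the reduction via primary invariants and the orbit map to a low-degree polynomial problem handled case by case. However, two of your case descriptions contain genuine errors. For the octahedron you claim $P_v|_{S^2}$ is affine in $I_3=xyz$; but $I_3$ is not $O_h$-invariant (it changes sign under the reflection $(x,y,z)\mapsto(-x,y,z)\in O_h$, and indeed Table~\ref{priminv} lists $I_2,I_4,I_6$ for $O_h$), so it cannot appear in $P_v$. The correct observation is that $\deg P_v\leq 3<4=\deg I_4$ forces $P_v|_{S^2}$ to be constant. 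Moreover the minimizers in this case are the octahedron vertices themselves (the orbit is centrally symmetric), not cube vertices. Similarly, for the icosahedron you propose an analysis involving $I_6'$, but $\deg P_v\leq 5<6=\deg I_6'$, so again $P_v|_{S^2}$ is constant and no such analysis is needed or even possible; the nontrivial invariant-polynomial work is confined to the cube, cuboctahedron, dodecahedron and icosidodecahedron.

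The second gap is in the uniqueness step. Knowing that equality $H_v(w)=P_v(w)$ forces $w\cdot u\in T$ for every $u\in Gv$ does not by itself ``pin $w$ to the antipodal orbit'': one must rule out configurations in which all the inner products lie in $T$ but $-1\notin\{w\cdot u:u\in Gv\}$. The paper does this using the moment identities $\sum_{u\in Gv}(w\cdot u)=0$, $\sum_{u\in Gv}(w\cdot u)^2=k/3$ (since the informationally complete orbits are spherical $2$-designs) and, for the icosahedral cases, $\sum_{u\in Gv}(w\cdot u)^4=k/5$, together with the explicit interpolation sets of Table~\ref{intersets} and the implication $1\in T_w\Rightarrow -1\in T_w$ for centrally symmetric orbits. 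Without some such argument your uniqueness claim is an assertion, not a proof. Finally, note that the icosidodecahedral case is not merely ``similar in structure but higher in degree'': the paper must verify positivity of a quadratic in $(I_6',I_{10})$ on the semialgebraic orbit-map image bounded by the discriminant curve $J_{15}^2=0$, which it does by a Sturm-chain root count; you should anticipate needing an argument of this kind rather than a routine extension of the lower-degree cases.
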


\begin{proof} We will give a proof of the theorem in two steps. Firstly, we show that the antipodal points to the
Bloch vectors of POVM elements, i.e.\ the points $\left\{  -gv:g\in G\right\}
$ are the global minima of the $G$-invariant polynomial $P_{v}$ constructed in
Sect.~\ref{HerInt}. (In particular, this is true if $P_v$ is constant.)
Then we prove the uniqueness of designated global minimizers of the entropy of measurement.

 We shall use the a priori estimates for $\deg P_{v}$ that
can be read from Tab.~\ref{upperbounds} and the primary invariants of $G$ listed in Tab.~\ref{priminv}. We
may exclude the trivial case when the HS-POVM in question is PVM represented
by two antipodal points on the Bloch sphere (digon), as in this situation the
minimal value of $H$ equal $0$ is achieved at these points and the assertion follows.
The proof is divided into four cases according to the symmetry group of the HS-POVM.
\medskip

\textsl{Case I (prismatic symmetry)}\smallskip

\textbf{Regular }$n$\textbf{-gon}. In Sect.~\ref{EntropyBloch} we showed that in this
case it is enough to look for the global minimizers on the circle
$S^{1}:=\left\{  (x,y)\in\mathbb{R}^{2}:x^{2}+y^{2}=1\right\}  $ containing
the $n$-gon. Its symmetry group acts on the plane $z=0$ as the dihedral group
$D_{n}$, and so the interpolating polynomial $P_{v}$ restricted to the circle
$S^{1}$ can be expressed in terms of its primary invariants, i.e.\ $\rho
=x^{2}+y^{2}$ and $\gamma_{n}=\Re(x+iy)^{n}$. Since $\deg P_{v}<n$, it follows
that $P_{v}|_{S}$ has to be a linear combination of $\rho^{m}$, $0\leq2m<n$,
and hence constant.\medskip

\textsl{Case II (tetrahedral symmetry)}\smallskip

\textbf{Tetrahedron}. This case is immediate, as $\deg P_{v}\leq\deg p_{v}%
\leq2$, and so $P_{v}$ has to be constant on the sphere $S^{2}$.\medskip

\textsl{Case III (octahedral symmetry)}\smallskip

For $O_{h}$ we have inert states at the $O_{h}$-orbits of the points:
$x_{1}:=(0,0,1)$ (vertices of an octahedron), $x_{2}:=\frac{1}{\sqrt{2}%
}(0,1,1)$ (vertices of a cuboctahedron), and $x_{3}:=\frac{1}{\sqrt{3}%
}(1,1,1)$ (vertices of a cube). Using the Lagrange multipliers it is easy to
check that these points are the only critical points for $I_{4}$ and $I_{6}$
restricted to the sphere $S^{2}$. By comparing the values of $I_{4}$ and
$I_{6}$ (which are $I_{4}\left(  x_{1}\right)  =1$, $I_{4}\left(
x_{2}\right)  =1/2$, $I_{4}\left(  x_{3}\right)  =1/3$, $I_{6}\left(
x_{1}\right)  =1$, $I_{6}\left(  x_{2}\right)  =1/4$, $I_{6}\left(
x_{3}\right)  =1/9$), we find that the points lying on the orbit of $x_{3}$
are global minimizers both for $I_{4}$ and $I_{6}$.

\smallskip
\textbf{Octahedron}. This case is straightforward, as for $v=x_{1}$ we have
$\deg P_{v}\leq\deg p_{v}\leq3$, and so $P_{v}$ has to be constant on the
sphere $S^{2}$.

\smallskip
\textbf{Cube}. In this case we have $v=x_{3}$ and $\deg P_{v}\leq\deg
p_{v}\leq5$. In consequence, $P_{v}$ must be a linear combination of $1$,
$I_{2}$, $I_{4}$, and $I_{2}^{2}$. After the restriction to the sphere,
$P_{v}|_{^{S^{2}}}$ can be expressed as $A+BI_{4}$, for some $A,B\in
\mathbb{R}$. Thus, all we need to know now is the sign of $B$. Calculating the
values of $P_{v}$ in two points from different orbits (e.g.\ $x_{1}$ and
$x_{3}$) and solving the system of two linear equations we get $B=(3/8)\ln
(27/16)>0$. Thus the global minimizers for $P_{v}$ are the same as for $I_{4}%
$, i.e.\ they lie on the orbit of $v$ or, equivalently, $-v$, as required.

\smallskip
\textbf{Cuboctahedron}. For the cuboctahedral measurement we have $v=x_{2}$
and $\deg P_{v}\leq\deg p_{v}\leq7$. Consequently, $\deg P_{v}\leq6$ and
$P_{v}$ is a linear combination of $1$, $I_{2}$, $I_{4}$, $I_{2}^{2}$, $I_{6}%
$, $I_{4}I_{2}$, and $I_{2}^{3}$. Hence, after the restriction to the sphere
$S^{2}$, we get $P_{v}|_{^{S^{2}}}=A+BI_{4}+CI_{6}$, for some $A,B,C\in
\mathbb{R}$. Put $\beta:=-B/(3C)$. Clearly, all inert states are critical for
$P_{v}|_{^{S^{2}}}$ with $P_{v}\left(  x_{1}\right)  =A+C(1-3\beta)$,
$P_{v}\left(  x_{2}\right)  =A+C(1-6\beta)/4$, $P_{v}\left(  x_{3}\right)
=A+C(1-9\beta)/9$. One can show easily that they are only critical points
unless $1/4<\beta<1/2$. In this case there are another critical points,
namely the orbit of the point $x_{4}:=\left(  \sqrt{4\beta-1},\sqrt{1-2\beta
},\sqrt{1-2\beta}\right)  $ with $P_{v}\left(  x_{4}\right)  =C\left(
1-9\beta+24\beta^{2}-24\beta^{3}\right)  $. To find $B$ and $C$, we need to
calculate the values of $P_{v}$ in three points from different orbits (e.g.\
$x_{1}$, $x_{2}$ and $x_{3}$) and to solve the system of three linear
equations. In this way we get $B=\frac{520}{9}\ln2-37\ln3<0$, $C=-\frac
{364}{9}\ln2+26\ln3>0$ and $\beta\approx0.3775$. Comparison of the values that
$P_{v}$ achieves at points $x_{1}$, $x_{2}$, $x_{3}$ and $x_{4}$ leads to the
conclusion that the global minima are achieved for the vertices of
cuboctahedron that form the orbit of $v$ and thus also of $-v$.\medskip

\textsl{Case IV (icosahedral symmetry)}\smallskip

The inert states for $I_{h}$, that is the $I_{h}$-orbits of points:
$x_{1}=(0,0,1)$ (vertices of an icosidodecahedron), $x_{5}:=\frac{1}%
{\sqrt{\tau+2}}(0,\tau,1)$ (vertices of an icosahedron), and $x_{6}:=\frac
{1}{\sqrt{3}}(0,\frac{1}{\tau},\tau)$ (vertices of a dodecahedron) are the
only critical points for $I_{6}^{\prime}$. They are, correspondingly, saddle,
minimum, and maximum points with values: $0$, $-(2+\sqrt{5})/5$, and
$(2+\sqrt{5})/27$, respectively. For $I_{10}$, the $I_{h}$-orbit of $x_{6}$
also coincides with the set of the global maxima, and we have local maxima at
the $I_{h}$-orbit of $x_{5}$ and saddle points at the orbit of $x_{1}$, but
there are also non-inert critical points, namely sixty minima at the vertices
of a non-Archimedean vertex truncated icosahedron (Fig.~14 in \cite{ZefArd07}),
and sixty saddles at the vertices of an edge truncated Archimedean vertex truncated
icosahedron (Fig.~5 in \cite{Zef11}), see \cite[p.~26]{JarMicSha84}.

\smallskip
\textbf{Icosahedron}. This case is immediate, as $v=x_{5}$ and $\deg P_{v}%
\leq\deg p_{v}\leq5$. Hence $P_{v}$ restricted to $S^{2}$ is constant.

\smallskip
\textbf{Dodecahedron}. In this case $v=x_{6}$ and $\deg P_{v}\leq\deg
p_{v}\leq9$. Therefore $P_{v}$ must be a linear combination of $1$, $I_{2}$,
$I_{2}^{2}$, $I_{2}^{3}$, $I_{6}^{\prime}$, $I_{2}^{4}$ and $I_{6}^{\prime
}I_{2}$. After restriction to $S^{2}$ we obtain $P_{v}|_{^{S^{2}}}%
=A+BI_{6}^{\prime}$, for some $A,B\in\mathbb{R}$. We can calculate $B$ using
the same method as in the cubical case. As it turns out to be negative
($B\approx-0.06509$), the global minimizers coincide with the global
maximizers for $I_{6}^{\prime}$, i.e.\ they are the vertices of the dodecahedron.

\smallskip
\textbf{Icosidodecahedron}. The icosidodecahedral case ($v=x_{1}$) is the most
complicated one. Since $\deg P_{v}\leq\deg p_{v}\leq15$, and $P_{v}$ must be a
linear combination of polynomials $1$, $I_{2}$, $I_{2}^{2}$, $I_{2}^{3}$,
$I_{6}^{\prime}$, $I_{2}^{4}$, $I_{6}^{\prime}I_{2}$, $I_{2}^{5}$,
$I_{6}^{\prime}I_{2}^{2}$, $I_{10}$, $I_{2}^{6}$, $I_{6}^{\prime}I_{2}^{3}$,
$(I_{6}^{\prime})^{2}$, $I_{10}I_{2}$, $I_{2}^{7}$, $I_{6}^{\prime}I_{2}^{4}$,
$(I_{6}^{\prime})^{2}I_{2}$, and $I_{10}I_{2}^{2}$. Restriction to $S^{2}$
gives us: $P_{v}|_{^{S^{2}}}=A+BI_{6}^{\prime}+CI_{10}+D(I_{6}^{\prime})^{2}$,
for some $A,B,C,D\in\mathbb{R}$. Both of the polynomials $I_{6}^{\prime}$ and
$I_{10}$ take the value $0$ at $x_{1}$, which is obviously a critical point
for $P_{v}|_{^{S^{2}}}$. As we have conjectured that the vertices of the
icosidodecahedron are the global minimizers of $P_{v}|_{^{S^{2}}}$, it is
enough to prove that $\tilde{P}:=P_{v}|_{^{S^{2}}}-A$ is nonnegative. We keep
proceeding like in the previous cases to obtain formulae for $B$, $C$, and $D$:
\begin{align*}
B = & -(1/50) (-2 + \sqrt5) (7122 \sqrt5 \arcoth(\sqrt5) +
   3 (-3728 + 2773 \sqrt5) \ln2 \\&+
    39575 \ln 3 - 4700 \ln 5 -
   8319 \sqrt 5 \ln(7 + 3 \sqrt 5)),\\
C = & \ (1/180) (-108414 \arcoth(3/\sqrt5) + 47970 \arcoth(\sqrt5) +
   \sqrt5 (-16352 \ln2\\& + 51120 \ln3 - 5265 \ln5)),\\
D = & \ (29/900) (9 - 4 \sqrt5) (53766 \sqrt5 \arcoth(3/\sqrt5) -
   23418 \sqrt5 \arcoth(\sqrt5)\\& + 34816 \ln2 - 126450 \ln3 +
   15075 \ln5).
\end{align*}
The range $\Omega$ of the orbit map $\omega:S^{2}/I_{h}\ni I_{h}w\rightarrow\left(
I_{6}^{\prime}\left(  w\right)  ,I_{10}\left(  w\right)  \right)
\in\mathbb{R}^{2}$ is the curvilinear triangle (see Fig.~\ref{F2}) defined by the following
inequalities imposed on the coordinates
$\left( \theta_{1}, \theta_{2} \right) \in \mathbb{R}^{2}$:
\begin{align*}
-\frac{2\tau+1}{5}\leq\theta_{1}\leq\frac{2\tau+1}{27}\text{, \ \ }%
(7-4\tau)\theta_{1}\leq\theta_{2}\text{,}%
\end{align*}
\begin{align*}
0  &  \leq J_{15}^{2}:=4\theta_{1}^{2}-8(3+4\tau)\theta_{1}\theta
_{2}-91(3-2\tau)\theta_{1}^{3}+4(5+8\tau)\theta_{2}^{2}+\\&  + 159(1-2\tau)\theta_{1}^{2}\theta_{2}+688(13-8\tau)\theta_{1}^{4}%
+325(1+2\tau)\theta_{1}\theta_{2}^{2}+ \\&  -720(7-4\tau)\theta_{1}^{3}\theta_{2}-1728(55-34\tau)\theta_{1}%
^{5}-25(11+18\tau)\theta_{2}^{3}\text{,}%
\end{align*}
where $J_{15}$ is the only secondary invariant for the icosahedral
group $I$ \cite[Tab.~IIIb]{JarMicSha84}.

\begin{figure}[htb]
\begin{center}
\includegraphics[scale=0.4]{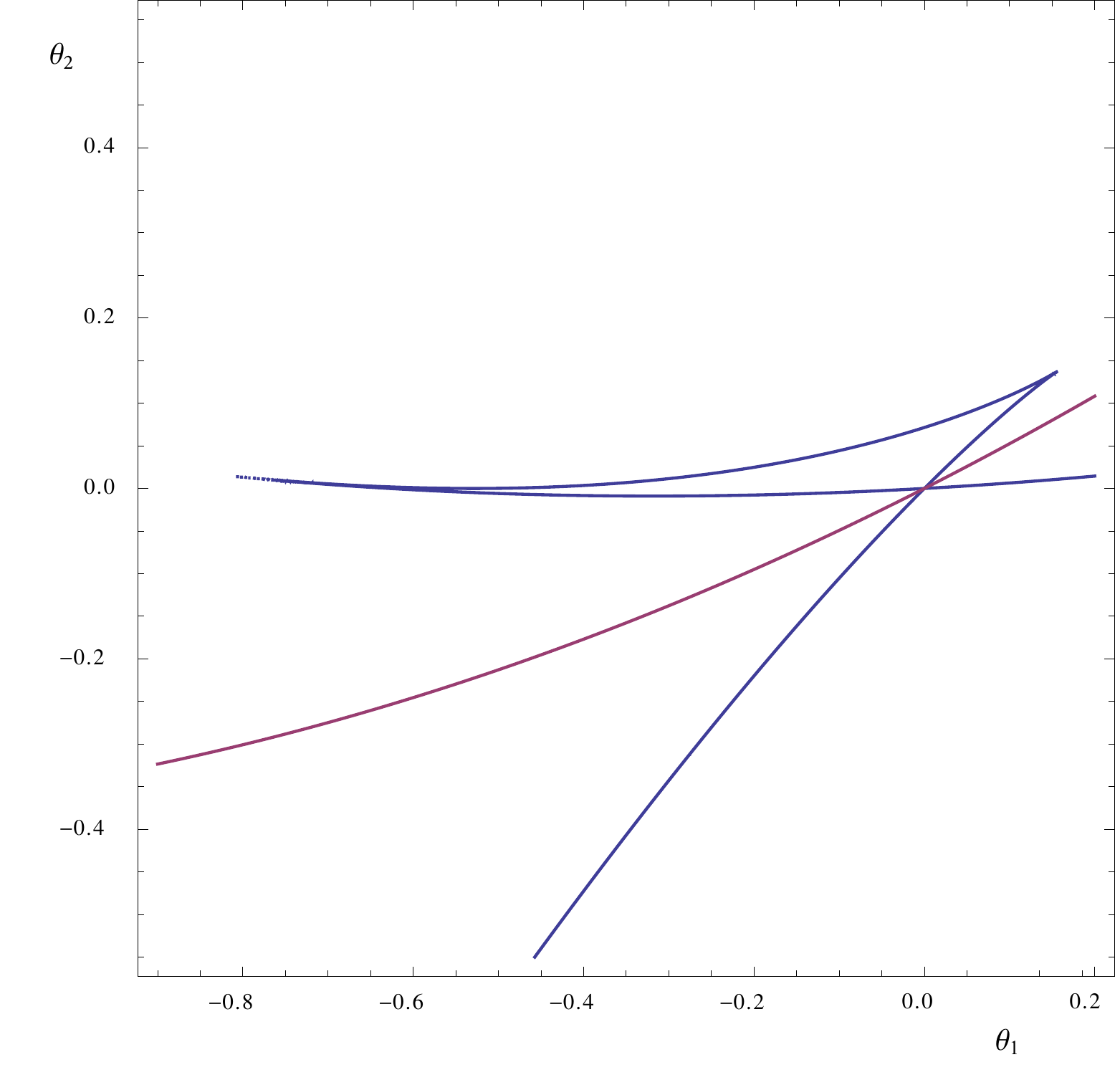}
\end{center}
\caption[The zero level set for $P_{1}$ and for $J_{15}^{2}$.]{The zero level set for $P_{1}$ (\emph{purple}) and for $J_{15}^{2}$
(\emph{violet}).}
\label{F2}
\end{figure}

Define $P_{1}\left(  \theta_{1},\theta_{2}\right)  :=B\theta_{1}+C\theta
_{2}+D\theta_{1}^{2}$ for $\left(  \theta_{1},\theta_{2}\right)  \in\Omega$.
Then $\tilde{P}\left(  w\right)  =\linebreak P_{1}\left(  \omega\left(
\left(  I_{h}\right)  w\right)  \right)  $ for $w\in S^{2}$. The level sets of
$P_{1}$ are parabolas and the zero level parabola given by $\theta
_{2}=-\left(  B/C\right)  \theta_{1}-(D/C)\theta_{1}^{2}$ (the purple curve in
Fig.~\ref{F2}) divides the plane into two regions: $\left\{  P_{1} \geq 0\right\}  $ and
$\left\{  P_{1}<0\right\}  $. Now, it is enough to show that the zero level
set of $P_{1}$ meets with the zero level set of $J_{15}^{2}$ (the violet curve
in Fig.~\ref{F2}), which defines the boundary of $\Omega$ only at $\left(  \theta
_{1},\theta_{2}\right)  =(0,0)$, since in this case $P_{1}$ has the same sign
over the whole $\Omega$, and, in consequence, $\tilde{P}$ is positive on the
whole unit sphere. This approach reduces the complexity of the problem by
lowering the degree of a polynomial equation to be solved. In fact, now it is
enough to show that the polynomial $Q(\theta_1):=J_{15}^{2}\left(  \theta_{1},-\left(
B/C\right)  \theta_{1}-(D/C)\theta_{1}^{2}\right)  /\theta_{1}^{2}$ of degree
$4$ has no real roots. This can be done in a standard way by using Sturm's theorem, the method which we recall briefly below.

The \emph{Sturm chain} for polynomial $q$ is a sequence $q_0, q_1,\ldots, q_m$, where $q_0=q$, $q_1=q'$,
$q_i =-\textrm{rem}(q_{i-2},q_{i-1})$ for $i=2,\ldots,m$, and $m\leq\deg q$ is the minimal number $i$
such that $\textrm{rem}(q_{i-1},q_{i})=0$ (by $\textrm{rem}(r,s)$ we denote the reminder of division of $r$ by $s$).
Sturm's theorem states that the number of roots of $q$ in $(a,b)$ for $-\infty\leq a<b\leq+\infty$ equals to
the difference between the numbers of sign changes in the Sturm chain for $q$ evaluated in $b$ and $a$
(for more details see, e.g.\ \cite[Sect.~2.2]{BasPolRo06}). Thus, to finish the proof for icosidodecahedron,
we calculate Sturm's chain for $Q$, evaluate it at $\pm\infty$ and show that  numbers of sign changes
do not differ.\footnote{To determine the signs of this expression the \emph{Mathematica} command \texttt{Sign} has been used.}

We end the proof with showing that there are no other (global) minimizers of the entropy.

It follows from (\ref{interP}) that if $w\in S^{2}$ is a
global minimizer for $H_{v}$, then it is also a global minimizer for $P_{v}$,
since $P_{v}(-v)\leq P_{v}(w)\leq H_{v}(w)=H_{v}(-v)=P_{v}(-v)$. The same
argument gives us $h(w\cdot u)=p(w\cdot u)$ for every $u\in Gv$, and so
$\{w\cdot u:u\in Gv\}\subset T=\{-v\cdot u:u\in Gv\}$.

Put $a_{u}:=w\cdot u$ for $u\in Gv$ and $k:=\left|  Gv\right|  $. Now it is
enough to show that $-1\in T_{w}:=\{a_{u}:u\in Gv\}$, since then $w\in
G\left(  -v\right)  $. We know that $\sum_{u\in Gv}a_{u}=0$.
For informationally-complete HS-POVMs we have additionally $\sum_{u\in
Gv}a_{u}^{2}=k/3$ (as $Gv$ is $2$-design), and, for icosahedral group,
$\sum_{u\in Gv}a_{u}^{4}=k/5$ (as $Gv$ is\linebreak $4$-design). Moreover, $1\in T_{w}$
implies $-1\in T_{w}$ for octahedral and icosahedral group. Using all these
facts and the form of the interpolating set for respective informationally
complete HS-POVMs (see Tab.~\ref{intersets}) we see that in all seven cases the assumption
$-1\notin T_{w}$ leads to the immediate contradiction. On the other hand, for
a regular polygon $w$ must lie on the circle containing this polygon (see Sect.~\ref{EntropyBloch}).
Then $w\cdot\left(  -v\right)  \in T$, implies $w\in G\left(  -v\right)
$, as desired.\qedhere

\begin{center}
\begin{table}[h]
\caption{The interpolating sets for HS-POVMs in dimension two.}
\label{intersets}
\centering
\begin{tabular}{ccc}
\hline\noalign{\smallskip}
$Gv$ & $\left|  Gv\right|  $ & $T$\\
\noalign{\smallskip}\hline\noalign{\smallskip}
regular $n$-gon ($n$ even) & $n$ & $\big\{\cos\left(  \frac{2\pi j}{n}\right)
:j=1,\ldots,n\big\}$\\\smallskip \bigstrut[t]
regular $n$-gon ($n$ odd) & $n$ & $\big\{-\cos\left(  \frac{2\pi j}{n}\right)
:j=1,\ldots,n\big\}$\\\smallskip
tetrahedron & $4$ & $\big\{  -1,\frac{1}{3}\big\}  $\\\smallskip
octahedron & $6$ & $\big\{  -1,0,1\big\}  $\\\smallskip
cube & $8$ & $\big\{  -1,-\frac{1}{3},\frac{1}{3},1\big\}  $\\\smallskip
cuboctahedron & $12$ & $\big\{  -1,-\frac{1}{2},0,\frac{1}{2},1\big\}
$\\\smallskip
icosahedron & $12$ & $\big\{  -1,-\frac{1}{\sqrt{5}},\frac{1}{\sqrt{5}%
},1\big\}  $\\\smallskip
dodecahedron & $20$ & $\big\{  -1,-\frac{\sqrt{5}}{3},-\frac{1}{3},\frac
{1}{3},\frac{\sqrt{5}}{3},1\big\}  $\\\smallskip
icosidodecahedron & $30$ & $\big\{  -1,-\frac{\tau}{2},-\frac{1}{2},-\frac
{1}{2\tau},0,\frac{1}{2\tau},\frac{1}{2},\frac{\tau}{2},1\big\}  $\\
\noalign{\smallskip}\hline
\end{tabular}
\end{table}
\end{center}
\end{proof}

\begin{remark}
\label{remark}
Let us observe that without any additional calculations we get that the theorem holds true for POVMs represented by regular polygons, tetrahedron, octahedron and icosahedron if the Shannon entropy is replaced by Havrda-Charv\'at-Tsallis $\alpha$-entropy  or R\'enyi $\alpha$-entropy for $\alpha\in(0,2]$. It follows from the fact that the degree of the polynomial $P_v$ interpolating generalised entropy or its increasing function from below (see Sect.~\ref{HerInt}) does not depend on the entropy function. As in all these cases it is at most 2, thus $P$ is constant.
\end{remark}

\begin{remark}
\label{remark2}

In this paper we presented a universal method of determining the global extrema of the entropy of POVM. However, in some cases it is possible to give  proofs that appear to be more elementary.

Let us recall that for tight informationally complete POVMs the sum of squared probabilities of the measurement outcomes (known as the \emph{index of coincidence}) is the same for each initial pure state and equal to $2d/(k(d+1))$. The problem of finding the minimum and maximum of the Shannon entropy under assumption that the index of coincidence is constant has been analyzed in \cite{HarTop01} (some generalizations and related topics can be found also in \cite{BerSan03,Zyc03}). By \cite[Theorem~2.5.]{HarTop01} we get that the minimum is achieved for the probability distribution  $(p,\ldots,p,q,0,\ldots,0) $, where there are $\lfloor k(d+1)/(2d)\rfloor$ probabilities equal to $p$, and both $p$ and $q$ are uniquely determined by the value of the index of coincidence.

One would not suppose this fact to be useful in general setting, since the possible probability distributions of the measurement outcomes for initial pure states form just a $(2d-2)$-dimensional subset of a $(k-2)$-dimensional intersection of a $(k-1)$-sphere and the simplex $\Delta_k$. If $\Pi$ is informationally complete, then $d^2\leq k$. Hence $2d < k$, unless $d=2$ and $k=4$ and so, in general situation, these extremal points need not necessarily belong to this subset. However, this method can be used for the tetrahedral POVM, where $d=2$ and $k=4$.

On the other hand, one can ask whether it is possible to derive a proof of Theorem~\ref{main} for HS-POVMs with octahedral and icosahedral symmetry using the fact that the corresponding Bloch vectors are spherical 3-designs and 5-designs, respectively. The question consists of two problems. The first one is to find the probability distributions that minimize Shannon entropy under assumption that R\'enyi $\alpha$-entropies are fixed for $\alpha=2,3$ and $\alpha=2,3,4,5$, respectively. The second one is whether the obtained extremal probability distribution belongs to the `allowed' set, as the conditions on R\'enyi entropies do not give a complete characterization of this set.

\end{remark}

\begin{remark}
\label{remark3}

Usually, the simplest way to find the entropy minimizers leads through the
majorization technique. However, we shall see that this method fails in general
here and can be useful just in special cases. To show that this is the case, note that if a normalized rank-$1$ POVM $\Pi=\{\Pi_j\}_{j=1}^k$ is tight informationally complete (i.e.\ the set of corresponding pure states is a complex projective 2-design), then for any $\rho\in\mathcal{S}\left(\mathbb{C}^{d}\right)$ the probability distribution of measurement outcomes $(p_1(\rho,\Pi),\ldots,p_k(\rho,\Pi))$ fulfills an additional constraint $p_1(\rho,\Pi)^2+\ldots+p_k(\rho,\Pi)^2=2d/(k(d+1))$. Thus, the set of all possible probability distributions is a $(2d-2)$-dimensional subset of the $(k-1)$-dimensional sphere of radius $2d/(k(d+1))$ intersected with the probability simplex $\Delta_k$. That intersection is a $(k-2)$-dimensional sphere in the affine hyperplane containing $\Delta_k$ that is centered at the uniform distribution and, possibly, cut to fit in the positive hyperoctant (compare Fig.~4 in \cite{AppEriFuc11}). Now, from the fact that the set of probability distributions majorized by a given $P\in\Delta_k$ is a convex hull of its orbit under permutations (see, e.g.\ \cite[Ch.~2.1]{BenZyc06} or \cite[Ch.~1.A]{MarOlkArn11}), it follows that the only probability distributions from the sphere indicated above that majorize (or are majorized by) any probability distribution from the same sphere need to be its permutations.  Hence we deduce that if the distribution of measurement outcomes for one state majorizes that for another one, both distributions must be equivalent, and in particular the measurement entropies at these points are equal. These facts imply that the minimization problem cannot be solved in full generality via majorization.

\end{remark}

\section{Informational power and the average value of relative entropy\label{INFPOW}}

While we know the minimum and maximum values of the relative entropy of some POVMs, it would be worth  taking a look at its average. Surprisingly, the average value of relative entropy over all pure states does not depend on the measurement $\Pi$, but only on the dimension $d$. This can be
proved using (\ref{relent}) and the formula (21) from Jones \cite{Jon91a}.
Namely, we have
\begin{align}
\left\langle \widetilde{H}(\rho,\Pi)\right\rangle _{\rho\in\mathcal{P}(
\mathbb{C}^{d})  }  &  =\int_{\mathcal{P}(  \mathbb{C}^{d})
}\left(  \ln d-\frac{d}{k}\sum_{j=1}^{k}\eta\left(  \operatorname{Tr}%
\left(  \rho\sigma_{j}\right)  \right)  \right)  \textrm{d}m_{FS}\left(  \rho\right)
\label{Euler} \\
&  =\ln d-d\left(  \int_{\mathcal{P}(  \mathbb{C}^{d})
}\eta\left(  \operatorname{Tr}\left(  \rho\sigma_{1}\right)  \right)
\textrm{d}m_{FS}\left(  \rho\right)  \right) \nonumber \\
&  =\ln d-\sum_{j=2}^{d}\frac{1}{j}\rightarrow1-\gamma\text{\quad}\left(
d\rightarrow\infty\right)  \text{,} \nonumber%
\end{align}
where $\gamma\approx0.57722$ is the Euler-Mascheroni constant. This average is
also equal to the maximum value (in dimension $d$) of entropy-like
quantity called \textsl{subentropy}, providing the lower bound for accessible
information \cite{Jozetal94,Datetal13}. Moreover, applying (\ref{ineinfpow}),
Proposition \ref{coninfpow}, and (\ref{Euler}), we get immediately a lower bound
for the informational power of $\Pi$:
\begin{equation}
\ln d-\sum_{j=2}^{d}\frac{1}{j} \ \leq \ W\left( \Pi\right) \text{,}
\end{equation}
provided that condition (1) in Proposition \ref{coninfpow} is fulfilled.
This bound was found, independently, but in general situation, in
\cite{Arnetal14b}.

In particular, the average value of relative entropy is the same for every
HS-POVM $\Pi$ in dimension two and equals $\ln2-1/2\approx0.19315$. It
follows from Theorem \ref{main} and (\ref{relentfor}) that its maximal value,
that is the informational power of~$\Pi$, is given by the formula%

\pagebreak

\begin{equation}
W\left(  \Pi\right)  =\ln2-\frac{2}{|G/G_{v}|}\sum_{\left[  g\right]  \in
G/G_{v}}\eta\left(  \frac{1-gv\cdot v}{2}\right)  \text{,} \smallskip \label{infpowfor}%
\end{equation}
where $G$ is any group acting transitively on the set of Bloch vectors
representing~$\Pi$. Recall that the number of different summands in
(\ref{infpowfor}) is bounded by the number of self-inverse double cosets of
$G_{v}$ plus half of the number of non self-inverse ones.

Applying the above formula to the $n$-gonal POVM we get
\begin{equation}
W\left(  \Pi\right)  =\ln2-\frac{2}%
{n}\sum_{j=1}^{n}\eta\left(  \sin^{2}\frac{\pi j}{n}\right)  \rightarrow
1-\ln2 \, \approx \, 0.30685  \enspace  (n\rightarrow\infty).
\end{equation}
The approximate values of informational power for other HS-POVMs in dimension two can be found in Tab.~\ref{meanentropy}.
It follows from \cite[Corollaries 7-9]{Arn15} that the informational power of tetrahedral, octahedral, and icosahedral POVMs is maximal
among POVMs generated by, respectively, $2$-, $3$- and $5$-designs in dimension two.

\begin{table}[h]
\caption{The approximate values of informational power (maximum relative entropy) for all types of HS-POVMs in dimension two (up to five digits).}
\label{meanentropy}
\centering
\begin{tabular}{lc}
\hline\noalign{\smallskip}
convex hull of the orbit & informational power  \\
\noalign{\smallskip}\hline\noalign{\smallskip}
digon & $0.69315$ \\
regular $n$-gon ($n\rightarrow\infty$) & $0.30685$ \\
tetrahedron & $0.28768$ \\
octahedron & $0.23105$ \\
cube & $0.21576$ \\
cuboctahedron & $0.20273$ \\
icosahedron & $0.20189$ \\
dodecahedron & $0.19686$ \\
icosidodecahedron & $0.19486$ \\
\noalign{\smallskip}\hline\noalign{\smallskip}
average value of relative entropy & $0.19315$ \\
\noalign{\smallskip}\hline
\end{tabular}

\end{table}

Comparing these values to the average value of relative entropy, we see that
the larger is the number of elements in the HS-POVM, the flatter is the graph
of~$\widetilde{H}$; see also Fig.~\ref{F3}, where the graphs in spherical coordinates
are presented.

\begin{figure}[h]
\begin{center}
\includegraphics[scale=0.47]{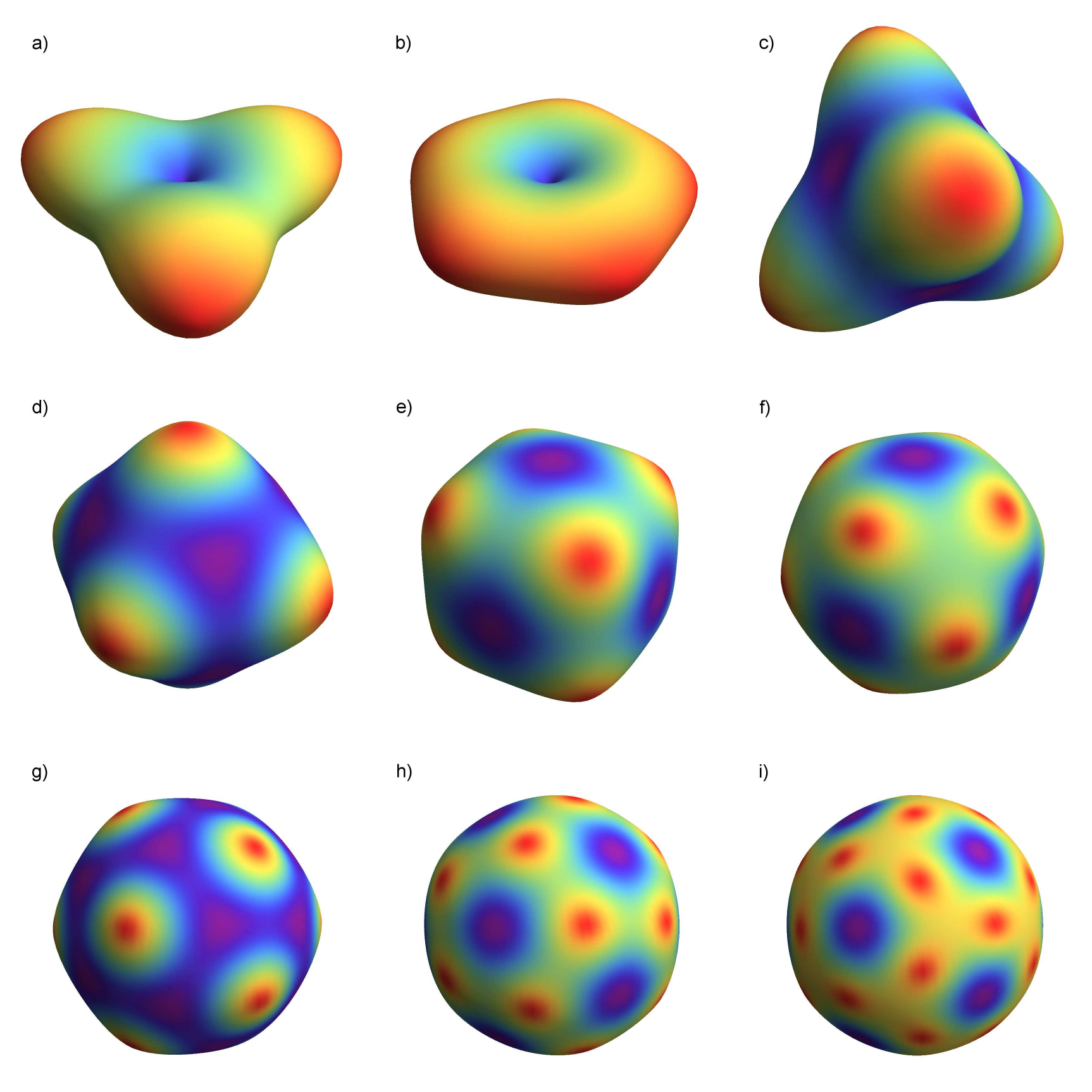}
\end{center}
\caption[The relative entropy of highly symmetric qubit measurements.]{The relative entropy of highly symmetric qubit measurements, where
their Bloch vectors form: a) an equilateral triangle; b) a regular pentagon;
c) a tetrahedron; d) an octahedron; e)~a~cube; f) a cuboctahedron; g) an
icosahedron; h) a dodecahedron; i) an icosidodecahedron. The rainbow-colors
scale that ranges from \emph{red} (maximum) to \emph{purple} (minimum) is used.}%
\label{F3}
\end{figure}

\smallskip

\textit{Acknowledgement}. The authors thank Michele Dall'Arno, S\l awomir Cynk,
\linebreak Piotr Niemiec, and Tomasz Zastawniak for their remarks which improve the presentation
of the paper. Financial support by the Grant No. N N202 090239 of the Polish Ministry
of Science and Higher Education is gratefully acknowledged.

\pagebreak

\begingroup
\renewcommand{\addcontentsline}[3]{}

\endgroup

\end{document}